\colorlet{pred}{red!75}
\colorlet{pgreen}{green!35}
\tikzset{
  node split radius/.initial=1,
  node split color 1/.initial=pred,
  node split color 2/.initial=blue,
  node split color 3/.initial=pgreen,
  node split half/.style={node split={#1,#1+180}},
  node split/.style args={#1,#2}{
    path picture={
      \tikzset{
        x=($(path picture bounding box.east)-(path picture bounding box.center)$),
        y=($(path picture bounding box.north)-(path picture bounding box.center)$),
        radius=\pgfkeysvalueof{/tikz/node split radius}}
      \foreach \ang[count=\iAng, remember=\ang as \prevAng (initially #1)] in {#2,360+#1}
        \fill[line join=round, draw, fill=\pgfkeysvalueof{/tikz/node split color \iAng}]
          (path picture bounding box.center)
          --++(\prevAng:\pgfkeysvalueof{/tikz/node split radius})
          arc[start angle=\prevAng, end angle=\ang] --cycle;
} } }
\theoremstyle{plain}
\newtheorem{theorem}{Theorem}
\newtheorem{lemma}[theorem]{Lemma}
\newtheorem{proposition}[theorem]{Proposition}
\newtheorem{corollary}[theorem]{Corollary}
\theoremstyle{definition} 
\newtheorem{remark}[theorem]{Remark}
\newcounter{claim}
\renewcommand{\theclaim}{\Alph{claim}}
\newenvironment{claim}{\refstepcounter{claim}%
\par\medskip\par\noindent{\it Claim~\theclaim.~}~\rm}%
{\par\medskip\par}
\newenvironment{subproof}{\par\medskip\par\noindent{\sl Proof of Claim~\theclaim.~}}%
{$\,\triangleleft$\par\bigskip\par}
\newenvironment{subproofof}[1]{\par\noindent{\sl Proof of Claim~\ref{cl:#1}.~}}%
{$\,\triangleleft$\par\bigskip\par}
\def\@gifnextchar#1#2#3{\let\@tempe#1\def\@tempa{#2}\def\@tempb{#3}%
  \futurelet\@tempc\@gifnch}
\def\@gifnch{\ifx\@tempc\@sptoken\let\@tempd\@tempb%
  \else\ifx\@tempc\@tempe\let\@tempd\@tempa\else\let\@tempd\@tempb\fi\fi\@tempd}
\def\SK@set#1{\left\{#1\right\}}
\def\SK@@set#1#2{\{#1\,:\,
    \begin{array}{@{}l@{}}#2\end{array}
\}}
\def\SK@mset#1{\left\{\!\!\left\{#1\right\}\!\!\right\}}
\def\SK@@mset#1#2{\{\!\!\{#1\,:\,
    \begin{array}{@{}l@{}}#2\end{array}
\}\!\!\}}
\def\BIG@set#1{\Big\{#1\Big\}}
\def\BIG@@set#1#2{\Big\{#1\:\Big|\:
    \begin{array}{@{}l@{}}#2\end{array}
\Big\}}
\newcommand{\Set}[1]{\@gifnextchar\bgroup{\SK@@set{#1}}{\SK@set{#1}}}
\newcommand{\Mset}[1]{\@gifnextchar\bgroup{\SK@@mset{#1}}{\SK@mset{#1}}}
\newcommand{\Bigset}[1]{\@gifnextchar\bgroup{\BIG@@set{#1}}{\BIG@set{#1}}}
\newcommand{\refeq}[1]{(\ref{eq:#1})}
\newcommand{\of}[1]{\left( #1 \right)}
\newcommand{\function}[2]{:#1 \rightarrow #2}
\newcommand{\bZ}{\mathbb{Z}}
\newcommand{\cP}{\mathcal{P}}
\newcommand{\prob}[1]{\mathsf{P}[ #1 ]}
\newcommand{\probb}[1]{\mathsf{P}\left[ #1 \right]}
\newcommand{\cprob}[2]{\mathsf{P}[ #1 \mid #2 ]}
\newcommand{\bind}{\mathrm{Bin}}
\newcommand{\mean}{\mathsf{E}}
\DeclareMathOperator{\var}{\mathsf{Var}}
\DeclareMathOperator{\cov}{\mathsf{cov}}
\newcommand{\one}{\mathbbm{1}}
\newcommand{\cO}{\mathcal{O}}
\title{Canonization of a random graph by two matrix-vector multiplications\footnote{A preliminary
    version of this paper was presented at the 31st Annual European Symposium on Algorithms,
    ESA 2023~\cite{ESA23}.}}
\author{Oleg Verbitsky\thanks{Institut für Informatik, Humboldt-Universität zu Berlin, Germany.
    Supported by DFG grant KO 1053/8--2. On leave from the IAPMM, Lviv, Ukraine.}
  \and
Maksim Zhukovskii\thanks{Department of Computer Science, University of Sheffield, UK.}}
\date{}
\begin{document} 

\maketitle

\begin{abstract}
  We show that a canonical labeling of a random $n$-vertex graph can be
  obtained by assigning to each vertex $x$ the triple $(w_1(x),w_2(x),w_3(x))$,
  where $w_k(x)$ is the number of walks of length $k$ starting from $x$.
  This takes time $\cO(n^2)$, where $n^2$ is the input size,
  by   using just two matrix-vector multiplications. The linear-time canonization
  of a random graph is the classical result of Babai, Erd\H{o}s, and Selkow.
  For this purpose they use the well-known combinatorial color refinement
  procedure, and we make a comparative analysis of the two algorithmic approaches.
\end{abstract}

\section{Introduction}\label{s:intro}

A \emph{walk} in a graph $G=(V,E)$ is a sequence of vertices $x_0x_1\ldots x_k$ such that
$(x_i,x_{i+1})\in E$ for every $0\le i<k$. We say that $x_0x_1\ldots x_k$
is a walk of \emph{length $k$ from $x_0$ to $x_k$}.
For a vertex $x\in V$, let $w^G_k(x)$ denote the total number of walks of length $k$ in $G$ starting from $x$.
Furthermore, we define $\mathbf{w}^G_k(x)=(w^G_1(x),\ldots,w^G_k(x))$.

The Erd\H{o}s-R\'enyi random graph $G(n,p)$ is a graph on the vertex set $[n]=\{1,\ldots,n\}$
where each pair of distinct vertices $x$ and $y$ is adjacent with probability $p$ independently
of the other pairs. In particular, $G(n,1/2)$ is a random graph chosen equiprobably from among
all graphs on~$[n]$.
Our main result in this paper shows that every vertex $x$ in $G(n,1/2)$ is with high probability
individualized by the numbers of walks from $x$ of lengths 1, 2, and 3, and that the maximum length 3
is optimal for this purpose.

\begin{theorem}\label{thm:3paths}
  Let $G=G(n,1/2)$. Then
  \begin{enumerate}
  \item[1.]
  $\mathbf{w}^G_3(x)\ne\mathbf{w}^G_3(y)$ for all $x\ne y$
  with probability at least $1-O(\sqrt[4]{\ln n/n})$;
  \item[2.]
  $\mathbf{w}^G_2(x)\ne\mathbf{w}^G_2(y)$ for all $x\ne y$
  with probability approaching 0 as $n\to\infty$. 
  \end{enumerate}

\end{theorem}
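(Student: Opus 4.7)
For a fixed pair of distinct vertices $x,y\in[n]$, partition $V\setminus\{x,y\}$ into $A$ (common neighbors), $B_x,B_y$ (neighbors of exactly one of $x,y$), and $C$ (common non-neighbors), and condition on this partition together with the indicator $\one_{xy\in E}$. After this conditioning, the remaining $\binom{n-2}{2}$ edges are still i.i.d.\ Bernoulli$(1/2)$. A direct calculation yields $w_1(x)-w_1(y)=|B_x|-|B_y|$; conditional on this being zero (so $k:=|B_x|=|B_y|$), one gets
\[
w_2(x)-w_2(y)\;=\;2\bigl(e(B_x)-e(B_y)\bigr)+\bigl(e(B_x,A\cup C)-e(B_y,A\cup C)\bigr),
\]
a linear form in $\Theta(kn)$ independent Bernoulli$(1/2)$ edge-indicators with coefficients in $\{\pm1,\pm2\}$; and conditional in addition on $w_2(x)=w_2(y)$, the difference $w_3(x)-w_3(y)=\sum_{z\in B_x}w_2(z)-\sum_{z\in B_y}w_2(z)$ becomes a homogeneous quadratic polynomial in the remaining edges with explicit integer coefficients determined by the partition.

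\textbf{Part 1.} I will apply Markov's inequality to the count $X:=|\{\{x,y\}:\mathbf w_3^G(x)=\mathbf w_3^G(y)\}|$. After restricting to the high-probability event that all four parts of the partition are of size $\Theta(n)$ (Chernoff makes the complement negligible even after a union bound over pairs), I bound the per-pair collision probability by the product of three conditional anti-concentration factors. The local de~Moivre--Laplace theorem gives $\prob{|B_x|=|B_y|}=\cO(1/\sqrt n)$, and a local CLT for integer-valued sums, applied to the $\Theta(n^2)$-variance linear form above, gives $\cprob{w_2(x)=w_2(y)}{|B_x|=|B_y|}=\cO(1/n)$. The main obstacle is the third factor: here I plan to invoke the Carbery--Wright anticoncentration inequality for the degree-$2$ polynomial $w_3(x)-w_3(y)$, after an explicit second-moment lower bound $\var(w_3(x)-w_3(y))=\Theta(n^3)$ on its monomial expansion; Carbery--Wright at $t=1$ then yields the critical $\cO(n^{-3/4})$ up to polylogarithmic corrections. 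Multiplying the three factors gives $\prob{\mathbf w_3^G(x)=\mathbf w_3^G(y)}=\cO(n^{-9/4}\operatorname{polylog} n)$, so $\mean X=\cO(\sqrt[4]{\ln n/n})$, and Markov's inequality finishes the proof.

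\textbf{Part 2.} The two-sided local CLT used for $w_2$ in Part 1 in fact gives the matching lower bound $\prob{\mathbf w_2^G(x)=\mathbf w_2^G(y)}=\Theta(n^{-3/2})$, so if $Y$ is the number of colliding pairs for $\mathbf w_2$ then $\mean Y=\Theta(\sqrt n)\to\infty$. To conclude that $Y>0$ with probability tending to $1$, I plan to carry out a second-moment calculation: splitting $\mean[Y^2]$ into contributions from pairs of vertex-pairs that coincide, share one vertex, or are vertex-disjoint, the disjoint contribution dominates and equals $(1+o(1))\mean[Y]^2$ because the two collision events depend on almost-disjoint edge sets and are therefore asymptotically independent (again by local CLT). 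The coinciding and overlapping contributions are each bounded by $\cO(\mean Y)=o(\mean[Y]^2)$. Chebyshev's inequality then gives $\prob{Y=0}=o(1)$, which is Part~2. The hardest piece of all of this is the variance lower bound for the quadratic form in Part~1, where the coefficient structure depends quadratically on the partition and a uniform bound across all typical partitions must be squeezed out before Carbery--Wright can be applied.
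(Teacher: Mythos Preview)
Your overall architecture---union bound plus three nested conditional factors for Part~1, second moment for Part~2---matches the paper's, but both parts have genuine gaps at exactly their hardest steps.

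\textbf{Part 1.} The problem is the third factor. You condition only on the partition $(A,B_x,B_y,C)$, so the remaining $\binom{n-2}{2}$ edges are i.i.d.; then $L:=w_2(x)-w_2(y)$ is linear and $Q:=w_3(x)-w_3(y)$ is quadratic in these edges. You now want $\cprob{Q=0}{L=0}=O(n^{-3/4})$ via Carbery--Wright. Two objections. First, Carbery--Wright is a theorem about log-concave (e.g.\ Gaussian) measures; transferring it to the Boolean cube requires an invariance-principle step with low-influence hypotheses you do not verify, and the known discrete analogues do not deliver a clean $t^{1/2}$ bound at the single atom $Q=0$. Second and more fundamentally, once you condition on the \emph{event} $\{L=0\}$ the edge indicators are no longer independent, so no polynomial anti-concentration inequality applies to the conditional law; and you cannot replace the conditional bound by the unconditional $\prob{Q=0}$ because $L$ and $Q$ share all their variables. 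The paper avoids this entirely: rather than conditioning on $\{L=0\}$, it \emph{exposes} all edges incident to $N(x)\cup N(y)$, which fixes $w_2(x)-w_2(y)$ as a number while keeping the remaining edges (those inside the common non-neighbourhood) independent. After this exposure $w_3(x)-w_3(y)$ is a \emph{linear} form $\sum e_{ij}\Delta(i,j)$ in the surviving edges; a pigeonhole over the $O(\sqrt{n\ln n})$ possible values of $\Delta(i,j)$ yields $\Omega(n^{3/2}/\sqrt{\ln n})$ pairs with a common nonzero coefficient, and a single binomial point-probability bound gives the $O(n^{-3/4}\ln^{1/4}n)$ factor. No quadratic anti-concentration is needed.

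\textbf{Part 2.} Your assertion that for vertex-disjoint $\{i,j\},\{i',j'\}$ the events $\{\mathbf w_2(i)=\mathbf w_2(j)\}$ and $\{\mathbf w_2(i')=\mathbf w_2(j')\}$ ``depend on almost-disjoint edge sets'' is false. Since $w_2(i)=\sum_{z\in N(i)}d(z)$ depends on every edge touching $N(i)$, and $|N(i)\cap N(i')|\approx n/4$, the two events share a constant fraction of their randomness. The asymptotic independence you need is therefore not automatic; it is the main difficulty of Part~2. The paper proves it by writing the pair of centred differences as a $2$-dimensional sum of independent edge contributions, showing via a careful cancellation (pairing ``antipodal'' coefficient types) that the off-diagonal covariance is $o(n^2)$ while the diagonal entries are $\Theta(n^2)$, establishing a bivariate CLT, and then applying an integral-to-local conversion theorem of Mukhin to obtain $\prob{\Psi_1=0,\Psi_2=0}=(4+o(1))/(\pi n^2)$. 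A one-line appeal to a local CLT does not cover this; nor does your treatment of the single-overlap term, which requires a separate $O(n^{-3})$ bound on $\prob{\mathbf w_2(i)=\mathbf w_2(j)=\mathbf w_2(j')}$ that you do not supply.
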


If $\alpha$ is an isomorphism from a graph $G$ to a graph $H$,
then clearly $\mathbf{w}^G_k(x)=\mathbf{w}^H_k(\alpha(x))$.
Theorem \ref{thm:3paths}, therefore, shows that the map $x\mapsto\mathbf{w}^G_3(x)$
is a canonical labeling of $G$ for almost all $n$-vertex graphs $G$.
This labeling is easy to compute. Indeed,
if $A$ is the adjacency matrix of $G$ and $\one$ is the all-ones vector-column of length $n$, then
$$
(w^G_k(1),\ldots,w^G_k(n))^\top=A^k\one.
$$
After noting that $w^G_1(x)=d(x)$, where $d(x)$ denotes the degree of a vertex $x$,
this yields the following simple canonical labeling algorithm.

\medskip

\noindent\textsc{Algorithm A}

\smallskip

\noindent\textsc{Input:} a graph $G$ on $[n]$ with adjacency matrix $A$.
\begin{enumerate}
\item
  Form a vector $D_1=(d(1),\ldots,d(n))^\top$.
\item
  Compute $D_2=AD_1$ and $D_3=AD_2$.
\item 
  Let $W$ be the matrix formed by the three columns $D_1,D_2,D_3$
  and let $W_1,\ldots,W_n$ be the rows of $W$.
\item
  If there are identical rows $W_x=W_y$ for some $x\ne y$, then give up. Otherwise, 
\item
  to each vertex $x$, assign the label $W_x$.
\end{enumerate}

\begin{corollary}\label{cor:}
  Algorithm A with high probability canonizes a random $n$-vertex graph, taking
  time $\cO(n^2)$ on every input.
\end{corollary}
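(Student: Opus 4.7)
The plan is to deduce both parts of the corollary directly from Algorithm~A and Theorem~\ref{thm:3paths}(1); essentially no additional creative input is needed. For correctness, I would observe that the label $W_x$ assigned to vertex $x$ in Step~5 is exactly the vector $\mathbf{w}^G_3(x)=(w^G_1(x),w^G_2(x),w^G_3(x))$: indeed, $D_1(x)=d(x)=w^G_1(x)$ by construction, and the identity $(w^G_k(1),\dots,w^G_k(n))^\top=A^k\one$ together with $D_2=AD_1$ and $D_3=AD_2$ gives $D_k(x)=w^G_k(x)$ for $k=2,3$. By Theorem~\ref{thm:3paths}(1), the rows $W_1,\dots,W_n$ are pairwise distinct with probability at least $1-O(\sqrt[4]{\ln n/n})$, so the ``give up'' branch in Step~4 is not triggered. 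Because $w^G_k(x)$ is preserved by any graph isomorphism, the assignment $x\mapsto W_x$ is a canonical labeling on this high-probability event.

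For the runtime, I would argue deterministically on every input. Step~1 forms $D_1$ by summing the rows of $A$ in time $\cO(n^2)$; Step~2 performs two matrix-vector multiplications with an $n\times n$ matrix, each in time $\cO(n^2)$; Step~3 is $\cO(n)$; Step~4 can be implemented by sorting the $n$ triples $W_1,\dots,W_n$ lexicographically in $\cO(n\log n)$ time and scanning for duplicates; and Step~5 is $\cO(n)$. Since every entry $D_k(x)$ is bounded by $n^k\le n^3$ and thus fits in a single machine word under the standard word-RAM model, all arithmetic operations cost $\cO(1)$ and the overall running time is $\cO(n^2)$, matching the input size.

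I do not anticipate any substantial obstacle: the corollary is a straightforward synthesis of Theorem~\ref{thm:3paths}(1) with an elementary analysis of matrix-vector multiplication. The only mildly delicate point is the computational model---a bit-complexity analysis would incur an extra $\log n$ factor---but this is standard for integers of polynomial magnitude and does not affect the stated $\cO(n^2)$ bound.
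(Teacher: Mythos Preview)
Your proposal is correct and follows essentially the same reasoning as the paper: the corollary is stated without a formal proof, and the surrounding text justifies it exactly as you do, by appealing to Theorem~\ref{thm:3paths}(1) and the isomorphism-invariance of $\mathbf{w}^G_3(x)$ for correctness, and observing that the two matrix-vector multiplications in Step~2 dominate the running time. Your remark about the bit-complexity factor matches the paper's explicit definition of $\cO(n^2)$ as $O(n^2\log n\log\log n)$.
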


The notation $\cO(\cdot)$ in the time bound means a linear function up to
the logarithmic factor $\log n\log\log n$ corresponding to the complexity
of integer multiplication \cite{HarveyH21}, that is, $\cO(n^2)=O(n^2\log n\log\log n)$.
If the model of computation assumes that multiplication of two integers
takes a constant time, then we just set $\cO(n^2)=O(n^2)$.
This time bound stems from the fact that
the two matrix-vector multiplications in Step 2 are the most expensive
operations performed by the algorithm. Note that this bound is essentially linear
because a random graph is with high probability dense, having $(1/4+o(1))n^2$ edges.

The linear-time canonization of almost all graphs is a classical result of
Babai, Erd\H{o}s, and Selkow \cite{BabaiES80}, which was a basis for settling
the average-case complexity of graph isomorphism in \cite{BabaiK79}.
While our algorithm is based solely on basic linear-algebraic primitives,
the method used in \cite{BabaiES80,BabaiK79} is purely combinatorial.
Before comparing the two approaches, we put Theorem \ref{thm:3paths}
in the context of the earlier work on walk counts and their applications
to isomorphism testing.

Of course, Algorithm A can be enhanced by taking into account
also longer walks, that is, by involving also other vector-columns $A^k\one$ for $k>3$.
Note that there is no gain in considering these vectors for $k\ge n$.
Indeed, if $A^k\one$ is a linear combination of the vectors $\one, A\one, A^2\one,\ldots, A^{k-1}\one$,
the same is obviously true also for $A^{k+1}\one$ (cf.\ \cite[Lemma 1]{PowersS82}
and see also \cite{Hagos02} for a more detailed linear-algebraic analysis). Therefore, it suffices to
start our matrix $W$ from the column $\one$ and add a subsequent column $A^k\one$
as long as this increases the rank of $W$, which is possible only up to $k=n-1$.
The $n\times n$ matrix $W=W^G$ formed by the columns $\one, A\one, \ldots, A^{n-1}\one$
is called the \emph{walk matrix} of the graph $G$ (\emph{WM} for brevity).
The entries of $W^G=(w_{x,k})_{1\le x\le n,\,0\le k<n}$
are nothing else as the walk counts $w_{x,k}=w^G_k(x)$. Note that $w^G_0(x)=1$ as there is a single
walk of length $0$ from~$x$.

We say that a graph $G$ is \emph{WM-discrete} if all rows of the walk matrix $W^G$ are pairwise different, i.e.,
$W^G_x\ne W^G_y$ for all $x\ne y$, where $W^G_x=(w_0^G(x),w_1^G(x),\ldots,w_{n-1}^G(x))$.
If $G$ is WM-discrete, then the walk matrix yields a canonical labeling
by assigning each vertex $x$ the vector $W^G_x$.
A useful observation is that if $W^G$ has identical rows, then this matrix is singular; cf.~\cite[Section 7]{Godsil12}.
O'Rourke and Touri \cite{ORourkeT16} prove that the walk matrix of a random graph 
is non-singular with high probability. This implies that a random graph is WM-discrete
with high probability and, hence, almost all graphs
are canonizable by computing the $n\times n$ walk matrix similarly to
Algorithm A. We remark that this takes time $O(n^3)$,
which is outperformed by our Corollary \ref{cor:} due to using the truncated
variant of WM of size~$n\times 4$.

Remarkably, non-singular walk matrices can be used to test isomorphism
of two given graphs directly rather than by computing their canonical forms.
If graphs $G$ and $H$ are
isomorphic, then their walk matrices $W^G$ and $W^H$ can be obtained from one another by rearranging the rows.
If the last condition is satisfied, we write $W^G\approx W^H$. This relation between
matrices is efficiently checkable just by sorting the rows in the lexicographic order.
We say that a graph $G$ is \emph{WM-identifiable} if, conversely,
for all $H$ we have $G\cong H$ whenever $W^G\approx W^H$.
Liu and Siemons \cite{LiuS22} prove that if the walk matrix of a graph
is non-singular, then it uniquely determines the adjacency matrix.
This implies by \cite{ORourkeT16} that a random graph is WM-identifiable with high probability.

Note that, by a simple counting argument, almost all $n$-vertex graphs \emph{cannot} be identified
by the shorter version of the walk matrix of size $n\times k$ as long as $k=o(\sqrt{n/\log n})$.
In particular, Theorem \ref{thm:3paths} cannot be extended to the identifiability concept.

The combinatorial approach of Babai, Erd\H{o}s, and Selkow \cite{BabaiES80} is based on the \emph{color refinement} procedure
(\emph{CR} for brevity) dating back to the sixties (e.g.,~\cite{Morgan65}).
CR begins with a uniform coloring of all vertices in an input graph
and iteratively refines a current coloring according to the following principle:
If two vertices are equally colored but have
distinct color frequencies in their neighborhoods, then they get distinct colors in the next refinement
step. The refinement steps are executed as long as the refinement is proper. As soon as
the color classes stay the same, CR terminates and outputs the current coloring
(a detailed description of the algorithm is given in Section \ref{ss:CR}).
CR \emph{distinguishes} graphs $G$ and $H$ if their color palettes are distinct.
A graph $G$ is called \emph{CR-identifiable} if it is distinguishable by CR from every non-isomorphic $H$.
CR can also be used for computing a canonical labeling of a single input graph.
We say that a graph $G$ is \emph{CR-discrete} if CR assigns a unique color to each vertex of $G$.
It is easy to prove that every CR-discrete graph is CR-identifiable.
We do not know whether or not this is true also for the corresponding WM concepts.

Powers and Sulaiman \cite{PowersS82} discuss examples when the CR-partition and the WM-partition are different,
that is, CR and the WM-based vertex-classification algorithm give different results. In particular,
\cite[Fig.~3]{PowersS82} shows a graph which is, in our terminology, CR-discrete but not WM-discrete.
We give a finer information about the relationship between the two algorithmic approaches.

\begin{theorem}\label{thm:WM-vs-CR}\hfill
  \begin{enumerate}[\bf 1.]
  \item
    Every WM-discrete graph is also CR-discrete.
  \item
    Every WM-identifiable graph is also CR-identifiable.
  \item
    There is a graph that is
    \begin{enumerate}[\bf (a)]
    \item
      CR-discrete (hence also CR-identifiable) and
    \item
      neither WM-discrete
    \item
      nor WM-identifiable.
    \end{enumerate}
  \end{enumerate}
\end{theorem}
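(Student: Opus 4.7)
My plan is to prove Parts~1 and~2 from a single structural lemma relating the CR partition to the walk-matrix partition, and to handle Part~3 by an explicit example.

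\textbf{Key lemma.} If CR, run on the disjoint union $G\sqcup H$, assigns the same stable color to $x\in V(G)$ and to $y\in V(H)$, then $w_k^G(x)=w_k^H(y)$ for every $k\ge 0$. The case $G=H$ says that the CR partition of a graph refines its walk-matrix partition. I would prove this by induction on $k$: the case $k=0$ is trivial, and for the step, since the stable CR coloring gives $x$ and $y$ the same multiset of colors in their neighborhoods, grouping the recursion $w_k(\cdot)=\sum_{z\in N(\cdot)} w_{k-1}(z)$ by neighbor color and invoking the inductive hypothesis finishes it.

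Parts~1 and~2 now fall out painlessly. For Part~1, if $G$ is WM-discrete but not CR-discrete, two distinct vertices of $G$ share a CR color and, by the lemma, share a walk-matrix row, a contradiction. For Part~2, assume $G$ is not CR-identifiable and choose $H\not\cong G$ with the same CR palette; running CR on $G\sqcup H$, every stable color class $c$ satisfies $|V(G)\cap c|=|V(H)\cap c|$, and by the lemma all vertices of color $c$ share a common walk-matrix row. Hence $W^G$ and $W^H$ have identical row multisets, so $W^G\approx W^H$ and $G$ fails to be WM-identifiable.

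\textbf{Part 3.} I would exhibit a specific small graph $G$ that is CR-discrete yet contains two vertices $u\ne v$ with identical walk-matrix rows, settling (a) and (b); the graph from~\cite[Fig.~3]{PowersS82} is a natural starting point, and a short CR trace on its adjacency matrix together with computation of $A\one, A^2\one,\ldots$ verifies the two properties. To settle (c), I plan to build a companion graph $H$ by a local rewiring around $u$ and $v$. The algebraic condition $W^G_u=W^G_v$ is equivalent to $e_u-e_v$ being orthogonal to every $A^k\one$, and this orthogonality is what licenses a small swap of edges around $u$ and $v$ that preserves each per-vertex walk count while destroying isomorphism with $G$.

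\textbf{Main obstacle.} Parts~1 and~2 amount to the one-line induction in the lemma. The hard part is Part~3: CR-discreteness leaves very little local symmetry to exploit, so finding a switching that preserves $W$ as a multiset of rows yet changes the isomorphism type is the delicate step. I expect the write-up to conclude with an explicit adjacency-matrix description of $G$ and $H$ and a direct check that $W^G$ and $W^H$ have the same row multisets while $G\not\cong H$.
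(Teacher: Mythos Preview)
Your treatment of Parts~1 and~2 matches the paper's argument: the same induction on $k$ (the paper's Lemma~\ref{lem:CR}) shows that equal CR colors force equal walk counts, and both parts follow immediately.

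The genuine gap is in Part~3(c). Citing the Powers--Sulaiman graph handles (a) and (b), but for (c) you assert that the orthogonality $(e_u-e_v)\perp A^k\one$ ``licenses a small swap of edges around $u$ and $v$ that preserves each per-vertex walk count,'' and this is not justified. That orthogonality is a statement about the \emph{original} matrix $A$; after rewiring to some $A'$ you need $(A')^k\one$ and $A^k\one$ to agree as row multisets for \emph{every} $k$, and a local edge swap perturbs all powers of the matrix in a way a single linear relation on $A$ does not control. On a CR-discrete graph there is no automorphism or local combinatorial symmetry to exploit, so there is no a~priori reason such a swap exists for the Powers--Sulaiman graph, and you have not exhibited one. ``I expect the write-up to conclude with an explicit $G$ and $H$ and a direct check'' is a hope, not a method.

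The paper's route to (c) is structurally different and does not start from an off-the-shelf example. It builds $G$ from two vertex-colored copies $A,B$ of the Shrikhande graph (strongly regular with parameters $(16,6,2,2)$), chosen so that $A\not\cong B$ as colored graphs yet $A$ and $B$ are indistinguishable after one CR round, and glues them via connector vertices whose degrees simulate the colors. Strong regularity is the engine: it forces $w_k^S(x,y)$ to depend only on whether $x=y$, $x\sim y$, or $x\not\sim y$, which is exactly what lets walk counts be matched across the two halves. Non-WM-identifiability is then obtained by replacing $B$ with a second copy of $A$, yielding a non-isomorphic $G'$ with the same walk matrix. The example is engineered from the outset to make walk counts globally computable; a post-hoc local rewiring of a given CR-discrete graph does not supply that control.
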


Theorem \ref{thm:WM-vs-CR} shows that color refinement is at least as good as the WM approach
with regard to canonization of a single input graph as well as testing isomorphism of two
input graphs. Moreover, CR is sometimes more successful with respect to both algorithmic problems.
Thus, WM can be regarded as a weaker algorithmic tool for canonical labeling and isomorphism testing,
which is not so surprising as this approach is actually based on a single basic linear-algebraic primitive,
namely matrix-vector multiplication.
Corollary \ref{cor:} demonstrates, therefore, that a random graph can be canonized in an essentially
linear time even with less powerful computational means than color refinement.

Theorems \ref{thm:3paths} and \ref{thm:WM-vs-CR} are proved in Sections \ref{s:3paths}
and \ref{s:WM-vs-CR} respectively. We conclude the current section with giving a high level otline
of the proofs.

For Part 1 of Theorem~\ref{thm:3paths}, it suffices to fix an arbitrary pair of vertices, say, 1 and 2,
and to show that $\mathbf{w}_3^G(1)=\mathbf{w}_3^G(2)$ with probability $O\left((\ln n/n)^{1/4}/n^2\right)$.
While the asymptotics of $\prob{\mathbf{w}_2^G(1)=\mathbf{w}_2^G(2)}$ can be determined exactly by using
the local de Moivre--Laplace limit theorem and exponential tail bounds for binomial distributions
(cf.~Equality \eqref{eq:Xij}), including the walks of lengths 3 in the analysis is more challenging.
The main idea that allows us to address this challenge is that after exposing the 2-neighbourhoods of
the vertices 1 and 2
under the condition $\mathbf{w}_2^G(1)=\mathbf{w}_2^G(2)$, with high probability there is a rather large set $S$
of pairs of vertices $i,j$ adjacent neither to 1 nor to 2 such that all edges $\{i,j\}\in S$ contribute to
the difference $w_3^G(1)-w_3^G(2)$ with exactly the same weight $\Delta(i,j)$ (see the details in Section~\ref{proof:1_1}).
It follows that if we additionally expose the adjacencies for all pairs of vertices outside $S$,
then $\mathbf{w}_3^G(1)=\mathbf{w}_3^G(2)$ exactly when the total number of edges within $S$
is equal to some fixed number uniquely determined by the exposed edges. The last probability is
easily estimated by standard tools.

For Part 2 of Theorem~\ref{thm:3paths}, we define the random variable $X$ as the total number of
vertex pairs $i,j$ such that $\mathbf{w}_2^G(i)=\mathbf{w}_2^G(j)$. The expected number of such pairs
increases together with $n$, as the aforementioned tight bounds for $\prob{\mathbf{w}_2^G(1)=\mathbf{w}_2^G(2)}$
imply that $\mean X=\Theta(\sqrt{n})$. This is still not enough to conclude that $X\ge1$ with
high probability. The main part of the proof in Section~\ref{proof:1_2} is devoted to showing that
$\var(X)=o((\mean X)^2)$, which implies the desired assertion by Chebyshev's inequality.
Note that for disjoint vertex pairs $i,j$ and $i',j'$, the events $\mathbf{w}_2^G(i)=\mathbf{w}_2^G(j)$
and $\mathbf{w}_2^G(i')=\mathbf{w}_2^G(j')$ are not independent. Thus, they contribute non-trivially
to the variance, making the estimation of $\var(X)$ difficult. The main ingredient of the proof is
Claim~\ref{cl:0} saying that this contribution in $o(n^{-3})$. This estimate follows from the asymptotic
independence of the events $\mathbf{w}_2^G(i)=\mathbf{w}_2^G(j)$ and $\mathbf{w}_2^G(i')=\mathbf{w}_2^G(j')$
subject to the typical 1-neighbourhood of $\{i,j,i',j'\}$ and the equalities $w_1^G(i)=w_1^G(j)$ and
$w_1^G(i')=w_1^G(j')$. This asymptotic independence is quite unexpected since the sets of edges adjacent
to the 1-neighbourhoods of $\{i,j\}$ and $\{i',j'\}$ have a large intersection and their impacts on the
two events are, therefore, far from being independent. The proof of this surprising fact reduces to
establishing a local limit theorem for certain random vectors $\Psi_n$ represented as sums of independent
but not identically distributed random vectors. To this end, we first derive an integral limit theorem
based on a helpful observation that the non-diagonal elements of the covariance matrices of $\Psi_n$
are much smaller than their diagonal elements. The final step is made by using the general framework
for conversion of an integral limit theorem into a local limit theorem provided in~\cite{Mukhin}.

The first two parts of Theorem \ref{thm:WM-vs-CR} follow from the fact that the walk numbers
$w^G_k(x)$ are determined by the color assigned to the vertex $x$ by the CR algorithm; cf.~\cite{Dvorak10,PowersS82}.
The similar relationship holds true for the number $w^G_k(x,y)$ of $k$-walks between two
vertices $x$ and $y$ and the 2-dimensional Weisfeiler-Leman algorithm, which can be seen
as a natural extension of CR for computing a canonical coloring of vertex pairs.
In the special case of strongly regular graphs, this relationship just means that
the value of $w^G_k(x,y)$ is the same for all pairs of adjacent $x$ and $y$ as well as
for all pairs of non-adjacent $x$ and $y$. This property is useful for proving Part 3
of Theorem \ref{thm:WM-vs-CR} as it suggests searching for two partial colorings of
the same strongly regular graph that are distinguished by CR but are indistinguishable
in terms of the walk counts $w^G_k(x,y)$. We find such two colorings for the Shrikhande graph.
The rest of the proof consists in removing the auxiliary vertex colors by gluing two colored versions
of the Shrikhande graph together via a simple connector-gadget so that the colors are simulated
by the changed vertex degrees. We remark that this idea proves to be useful also
in another context. In the subsequent work \cite{ArvinFKV23}, a similar approach is used to
separate the power of the Weisfeiler-Leman algorithm and some spectral characteristics of graphs.

\section{Canonization of a random graph}\label{s:3paths}

\subsection{Probability preliminaries}

Let $X$ be a binomial random variable with parameters $n$ and $p$,
that is, $X=\sum_{i=1}^nX_i$ where $X_i$'s are mutually independent and, for each $i$, we have $X_i=1$ with probability 1 and
$X_i=0$ with probability $1-p$. We use the notation $X\sim\bind(n,p)$ when $X$ has this distribution. 
As well known, $X$ is well-concentrated around its expectation~$np$.

\begin{lemma}[Chernoff's bound; see, e.g.,~{\cite[Corollary A.1.7]{AlonS16}}]\label{lem:chernoff}
If $X\sim\mathrm{Bin}(n,p)$, then
$$
 \prob{|X-np|>t}\le2e^{-2t^2/n}
$$
for every $t\ge 0$.
\end{lemma}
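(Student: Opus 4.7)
The plan is to apply the exponential moment method (Bernstein's trick). Write $X - np = \sum_{i=1}^n (X_i - p)$ as a sum of $n$ independent, mean-zero random variables, each supported in the interval $[-p, 1-p]$ of length exactly $1$. For any $\lambda > 0$, Markov's inequality applied to $e^{\lambda(X - np)}$ yields
\begin{equation*}
\prob{X - np \ge t} \;\le\; e^{-\lambda t}\prod_{i=1}^n \mean\!\left[e^{\lambda(X_i - p)}\right].
\end{equation*}
The task is then reduced to bounding each factor on the right by a Gaussian-type moment generating function.

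The key analytic ingredient is Hoeffding's lemma: if $Y$ is mean-zero and supported in $[a,b]$, then $\mean[e^{\lambda Y}] \le \exp(\lambda^2(b-a)^2/8)$. Applied to each centered Bernoulli $X_i - p$ with $b - a = 1$, this bounds the product above by $e^{n\lambda^2/8}$, so that
\begin{equation*}
\prob{X - np \ge t} \;\le\; \exp\!\left(\frac{n\lambda^2}{8} - \lambda t\right).
\end{equation*}
Optimising by choosing $\lambda = 4t/n$ gives the one-sided bound $\prob{X - np \ge t} \le e^{-2t^2/n}$. For the lower tail, the same argument applied to $n - X \sim \bind(n, 1-p)$ (whose mean is $n - np$) yields $\prob{X - np \le -t} \le e^{-2t^2/n}$, and a union bound over the two one-sided events produces the factor of $2$ in the claimed inequality.

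The main obstacle is Hoeffding's lemma itself; everything else is routine calculus and bookkeeping. The standard derivation exploits convexity of the exponential: for $Y \in [a,b]$ almost surely,
\begin{equation*}
e^{\lambda Y} \;\le\; \frac{b - Y}{b - a}\, e^{\lambda a} \,+\, \frac{Y - a}{b - a}\, e^{\lambda b}.
\end{equation*}
Taking expectations and using $\mean[Y] = 0$ reduces the logarithm of the MGF, after the substitution $u = \lambda(b - a)$, to a one-variable function $\phi(u)$ satisfying $\phi(0) = \phi'(0) = 0$ and $\phi''(u) \le 1/4$. A second-order Taylor expansion with Lagrange remainder then gives $\phi(u) \le u^2/8$, which is exactly the claimed bound. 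This is the only place where a nontrivial estimate is made; the rest of the proof is the optimisation sketched above. Since the lemma is a textbook result, in the paper it is legitimately invoked via the citation to Alon–Spencer rather than reproved.
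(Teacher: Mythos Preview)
Your proof is correct; the paper itself gives no proof and simply cites Alon--Spencer, as you already note in your final sentence. Your argument via Hoeffding's lemma and the exponential moment method is the standard textbook derivation and is exactly what one would find behind that citation.
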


\begin{lemma}\label{lem:binomial}
  If $X$ and $Y$ are independent random variables, each having the probability distribution $\bind(n,1/2)$,
  then
  $$
\frac1{\sqrt{\pi n}}\of{1-\frac1{8n}+O\of{\frac1{n^2}}} \le \prob{X=Y} < \frac1{\sqrt{\pi n}}.
  $$
\end{lemma}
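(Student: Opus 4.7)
The plan is to first reduce $\prob{X=Y}$ to a closed-form expression and then estimate it. Conditioning on the common value of $X$ and $Y$, one has
$$\prob{X=Y}=\sum_{k=0}^{n}\prob{X=k}\prob{Y=k}=\frac{1}{4^n}\sum_{k=0}^{n}\binom{n}{k}^2=\frac{1}{4^n}\binom{2n}{n},$$
the last equality being Vandermonde's convolution (equivalently, the coefficient of $z^n$ in $(1+z)^{2n}=(1+z)^n(1+z)^n$). Thus both halves of the lemma reduce to a two-sided bound on the central binomial coefficient.

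For the asymptotic lower bound, I would apply Stirling's series
$$n!=\sqrt{2\pi n}\,(n/e)^n\,S(n),\qquad S(n)=1+\frac{1}{12n}+\frac{1}{288n^2}+O(n^{-3}),$$
to $(2n)!$ and to each $n!$ in $\binom{2n}{n}=(2n)!/(n!)^2$. The elementary factors collapse cleanly to $4^n/\sqrt{\pi n}$, and it only remains to expand the quotient $S(2n)/S(n)^2$ to order $n^{-2}$. A short computation gives $S(n)^2=1+\frac{1}{6n}+\frac{1}{72n^2}+O(n^{-3})$ and $S(2n)=1+\frac{1}{24n}+\frac{1}{1152n^2}+O(n^{-3})$, so that their ratio equals $1-\frac{1}{8n}+O(n^{-2})$, which matches the claimed expansion.

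For the strict upper bound $\binom{2n}{n}/4^n<1/\sqrt{\pi n}$ valid for every $n\ge 1$, I would use the two-sided Robbins form of Stirling's inequality,
$$\sqrt{2\pi n}\,(n/e)^n\,e^{1/(12n+1)}<n!<\sqrt{2\pi n}\,(n/e)^n\,e^{1/(12n)}.$$
Applying the upper bound to $(2n)!$ and the lower bound to each factor of $(n!)^2$ yields
$$\binom{2n}{n}<\frac{4^n}{\sqrt{\pi n}}\,\exp\!\left(\frac{1}{24n}-\frac{2}{12n+1}\right),$$
and the exponent is negative for every $n\ge 1$, since the inequality $\frac{1}{24n}<\frac{2}{12n+1}$ reduces to $12n+1<48n$. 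This gives the strict upper bound. The only real care required lies in arithmetic bookkeeping inside the Stirling expansion: one must track the $1/n^2$ terms in both $S(2n)$ and $S(n)^2$ long enough to be sure that the $1/n$ coefficient of $S(2n)/S(n)^2$ comes out as exactly $-1/8$, with the cancellations between the geometric-series expansion of $1/S(n)^2$ and the numerator $S(2n)$ working as advertised.
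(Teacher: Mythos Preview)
Your proof is correct and follows essentially the same approach as the paper: reduce $\prob{X=Y}$ to $4^{-n}\binom{2n}{n}$ via Vandermonde's convolution, then invoke the standard two-sided estimate for the central binomial coefficient. The only difference is that the paper simply cites the bounds $\binom{2n}{n}<4^n/\sqrt{\pi n}$ and $\binom{2n}{n}=\frac{4^n}{\sqrt{\pi n}}\bigl(1-\frac{1}{8n}+O(n^{-2})\bigr)$ from the literature, whereas you derive them explicitly from Stirling/Robbins.
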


\begin{proof}
Note that
  $$
\prob{X=Y}=\sum_{k=0}^n\of{{n\choose k}2^{-n}}^2=2^{-2n}\sum_{k=0}^n{n\choose k}^2=2^{-2n}{2n\choose n},
$$
where the last equality is a special case of Vandermonde's convolution.
The lemma now follows from the well-known bounds for the central binomial coefficient, namely the upper bound
  \begin{equation}
    \label{eq:central-coef}
{2n\choose n}<\frac{2^{2n}}{\sqrt{\pi n}}
\end{equation}
and the asymptotics ${2n\choose n}=\frac{2^{2n}}{\sqrt{\pi n}}\of{1-\frac1{8n}+O(\frac1{n^2})}$; see, e.g.,~\cite{Luke69}.
\end{proof}

A version of Lemma \ref{lem:binomial} could alternatively be proved by considering
the random variable $X+(n-Y)\sim\bind(2n,1/2)$ and estimating the probability that $X+(n-Y)=n$
with the help of the following classical result (which we state in the form restricted to our purposes).

\begin{lemma}[The local de Moivre--Laplace theorem; see, e.g.,~{\cite[Theorem 1, Chapter VII.3]{Feller}}]\label{lem:ML}
  Let $\phi(t)=\frac1{\sqrt{2\pi}}e^{-t^2/2}$ denote the density function of the standard normal distribution.
  Fix a sequence of positive reals $K_n=o(n^{2/3})$.
  Let $X\sim\mathrm{Bin}(n,1/2)$ and suppose that $n$ is even. Then
  $$
\prob{X=\frac n2+k}=(1+o(1))\frac{2\phi(2k/\sqrt{n})}{\sqrt{n}}=(1+o(1))\frac{e^{-2k^2/n}}{\sqrt{\pi n/2}}
$$
where the infinitesimal $o(1)$ approaches 0, as $n\to\infty$, uniformly over $k\in[-K_n,K_n]$.
\end{lemma}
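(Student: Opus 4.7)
The plan is to compute the probability directly from the exact expression $\mathsf{P}[X=n/2+k]=\binom{n}{n/2+k}2^{-n}$ and show that it matches the claimed Gaussian formula to within a multiplicative $1+o(1)$ factor uniformly in the prescribed range of $k$. Since the two expressions on the right are equal (one is just the definition of $\phi$ unfolded), it suffices to establish either of them.

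First I would apply Stirling's formula $m!=\sqrt{2\pi m}\,(m/e)^m(1+O(1/m))$ to each of $n!$, $(n/2+k)!$ and $(n/2-k)!$. After cancellation, this splits the binomial coefficient into a polynomial prefactor $\sqrt{n}/\sqrt{2\pi(n/2+k)(n/2-k)}$ and an exponential factor $n^n/[(n/2+k)^{n/2+k}(n/2-k)^{n/2-k}]$, with a multiplicative error $1+O(1/n)$ coming from Stirling. For the prefactor, I use that $k=o(n^{2/3})$ implies $k^2/n^2\to 0$ uniformly, so $(n/2+k)(n/2-k)=(n^2/4)(1-4k^2/n^2)$ collapses to $n^2/4$ up to a factor $1+o(1)$; after multiplying by $2^{-n}$, the prefactor yields the claimed $\sqrt{2/(\pi n)}$.

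For the exponential factor I take logarithms and write
\[
\log\frac{n^n}{(n/2+k)^{n/2+k}(n/2-k)^{n/2-k}} = n\log 2 - (n/2+k)\log(1+2k/n) - (n/2-k)\log(1-2k/n).
\]
Expanding $\log(1\pm 2k/n)$ in Taylor series to third order and multiplying out, the linear terms in $k$ cancel, the quadratic terms combine to $-2k^2/n$, and, crucially, the two cubic contributions cancel against each other by symmetry. The remaining tail is $O(k^4/n^3)$, which under $|k|\le K_n=o(n^{2/3})$ is $o(1)$ uniformly, so the exponential factor equals $2^n e^{-2k^2/n}(1+o(1))$. Combining the pieces gives
\[
\mathsf{P}[X=n/2+k] = \frac{e^{-2k^2/n}}{\sqrt{\pi n/2}}\,(1+o(1)),
\]
and identifying $\sqrt{2/(\pi n)}\,e^{-2k^2/n}$ with $(2/\sqrt{n})\phi(2k/\sqrt{n})$ yields the first form stated in the lemma.

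The main obstacle is keeping the error analysis uniform over $k\in[-K_n,K_n]$. The Stirling error $O(1/(n/2\pm k))$ is uniformly $O(1/n)$ as long as $|k|\le n/2-\omega(1)$, which is comfortably satisfied; the sensitive point is the Taylor remainder in the exponent, where the cancellation of the $O(k^3/n^2)$ terms is what allows the hypothesis $K_n=o(n^{2/3})$ (rather than the stronger $o(\sqrt{n})$ one might naively expect) to suffice. Writing out the Taylor expansion with explicit Lagrange remainder and checking that the next-order contribution is indeed $O(k^4/n^3)$ uniformly is the only computation that requires care.
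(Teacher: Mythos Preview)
Your argument is correct and is essentially the classical derivation via Stirling's formula; in particular, the key observation that the expression is even in $k$ forces all odd-order terms in the exponent to vanish, leaving a remainder $O(k^4/n^3)$ that is $o(1)$ uniformly precisely when $K_n=o(n^{2/3})$. One small wording quibble: when you say ``the two cubic contributions cancel by symmetry,'' it is the $(n/2)\cdot x^3$ pieces that cancel, whereas the $k\cdot x^3$ piece survives but is already $O(k^4/n^3)$ since $x=2k/n$; the conclusion is unaffected.

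Note, however, that the paper does not actually supply its own proof of this lemma: it is stated with a citation to Feller and used as a black box. Your write-up is the standard textbook proof (and is in fact how Feller proves it), so there is nothing to compare against beyond observing that you have reproduced the cited argument.
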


\subsection{Proof of Theorem \ref{thm:3paths}: Part 1}\label{proof:1_1}

For a vertex $i\in[n]$, recall that $\mathbf{w}^G_3(i)=(w^G_1(i),w^G_2(i),w^G_3(i))$.
By the union bound,
$$
\prob{\mathbf{w}^G_3(i)=\mathbf{w}^G_3(j)\text{ for some }i,j}\le
\sum_{i,j}\prob{\mathbf{w}^G_3(i)=\mathbf{w}^G_3(j)}=
{n\choose 2}\prob{\mathbf{w}^G_3(1)=\mathbf{w}^G_3(2)}.
$$
Therefore, it suffices to prove that
\begin{equation}
  \label{eq:w3short}
\prob{\mathbf{w}^G_3(1)=\mathbf{w}^G_3(2)}=O(n^{-9/4}\ln^{1/4} n).  
\end{equation}

Let $N_H(v)$ denote the neighborhood of a vertex $v$ in a graph $H$.
Given two sets $U_1\subset[n]\setminus\{1\}$ and $U_2\subset[n]\setminus\{2\}$,
let $G'=G'(U_1,U_2)$ be the random graph $G$ subject to the conditions $N_G(1)=U_1$ and $N_G(2)=U_2$. 
In other terms, $G'$ is a random graph on $[n]$ chosen equiprobably among all graphs
satisfying these conditions.
Let $w'_k(i)=w^{G'}_k(i)$ denote the number of walks of length $k$ emanating from $i$ in $G'$
(the dependence of $w'_k(i)$ on the pair $U_1,U_2$ will be dropped for the sake of notational convenience). Define
$$
p(U_1,U_2)=\probb{\sum_{i\in U_1}w'_1(i)=\sum_{i\in U_2}w'_1(i)\text{ and }\sum_{i\in U_1}w'_2(i)=\sum_{i\in U_2}w'_2(i)}.
$$
We have
\begin{multline}
  \prob{\mathbf{w}^G_3(1)=\mathbf{w}^G_3(2)}\\ =
  \sum_{U_1,U_2\,:\,|U_1|=|U_2|}
  \cprob{\mathbf{w}^G_3(1)=\mathbf{w}^G_3(2)}{N_G(1)=U_1,\ N_G(2)=U_2}\\
  \mbox{}\hspace{30mm}\times\prob{N_G(1)=U_1,\, N_G(2)=U_2}\\
  =\sum_{U_1,U_2\,:\,|U_1|=|U_2|}p(U_1,U_2)\times\prob{N_G(1)=U_1,\, N_G(2)=U_2}.\label{eq:w3long}
\end{multline}
Note first that
\begin{multline*}
  \sum_{|U_1|=|U_2|}\probb{N_G(1)=U_1,\,N_G(2)=U_2}=\probb{|N_G(1)|=|N_G(2)|}\\
  =\probb{|N_G(1)\setminus\{2\}|=|N_G(2)\setminus\{1\}|}=O(n^{-1/2})
\end{multline*}
by Lemma \ref{lem:binomial} because $|N_G(1)\setminus\{2\}|\sim\bind(n-2,1/2)$
and $|N_G(2)\setminus\{1\}|\sim\bind(n-2,1/2)$ are independent binomial random variables.
This allows us to derive \refeq{w3short} from \refeq{w3long} if we prove that
\begin{equation}
  \label{eq:pUU}
p(U_1,U_2)=O(n^{-7/4}\ln^{1/4} n)  
\end{equation}
for the neighborhood sets $U_1$ and~$U_2$.

In fact, we do not need to prove \refeq{pUU} for all pairs $U_1,U_2$ because the
contribution of some of them in \refeq{w3long} is negligible. Indeed, set $\varepsilon(n)=n^{-1/4}$.
Note that $|N_G(j)|\sim\bind(n-1,1/2)$ for $j=1,2$ and $|(N_G(1)\cap N_G(2))\setminus\{1,2\}|\sim\bind(n-2,1/4)$.
By the Chernoff bound (see Lemma \ref{lem:chernoff}), we have
$(1/2-\varepsilon(n))n\le |N_G(j)|\le (1/2+\varepsilon(n))n$ for $j=1,2$ and
$(1/4-\varepsilon(n))n\le |N_G(1)\cap N_G(2)|\le (1/4+\varepsilon(n))n$ with probability
$1-e^{-\Omega(\sqrt n)}$. Call a pair $U_1,U_2$ \emph{standard} if $|U_j|$ for $j=1,2$ and $|U_1\cap U_2|$
are in the same ranges. Thus, all non-standard pairs make a negligible contribution in \refeq{w3long},
and we only have to prove \refeq{pUU} for each standard pair~$U_1,U_2$.

For a graph $H$ and a subset $U\subset V(H)$, let $E_H(U)$ denote the set of edges of
$H$ with at least one vertex in $U$. Given two sets of edges $\mathcal{E}_1$ and $\mathcal{E}_2$
incident to the vertices in $U_1\setminus\{2\}$ and $U_2\setminus\{1\}$ respectively, let
$G''=G''(U_1,U_2,\mathcal{E}_1,\mathcal{E}_2)$ be the random graph $G'$ subject to the conditions
\begin{equation}
  \label{eq:EE}
E_{G'}(U_1\setminus\{2\})=\mathcal{E}_1\text{ and }E_{G'}(U_2\setminus\{1\})=\mathcal{E}_2.  
\end{equation}
Let $w''_k(i)=w^{G''}_k(i)$ denote the number of walks of length $k$ emanating from $i$ in $G''$
(the dependence of $w''_k(i)$ on $U_1,U_2,\mathcal{E}_1,\mathcal{E}_2$ is dropped for notational simplicity).

Note that Condition \eqref{eq:EE} determines whether or not $\sum_{i\in U_1}w_1'(i)=\sum_{i\in U_2}w_1'(i)$.
If this equality is true, we call the pair $\mathcal{E}_1,\mathcal{E}_2$ \emph{balanced}.
Using this terminology and notation, we can write
\begin{multline}
  p(U_1,U_2)=
  \sum_{\mathcal{E}_1,\mathcal{E}_2:\text{ balanced}}
  \probb{E_{G'}(U_1\setminus\{2\})=\mathcal{E}_1,\,E_{G'}(U_2\setminus\{1\})=\mathcal{E}_2}\\
  \times
  \probb{\sum_{i\in U_1}w''_2(i)=\sum_{i\in U_2}w''_2(i)}.\label{eq:pUUEE}
\end{multline}

We first show that
\begin{equation}
  \label{eq:UUEE}
  \sum_{\mathcal{E}_1,\mathcal{E}_2:\text{ balanced}}
  \probb{E_{G'}(U_1\setminus\{2\})=\mathcal{E}_1,\,E_{G'}(U_2\setminus\{1\})=\mathcal{E}_2}=O\of{\frac1n}.
\end{equation}
Note that the sum in the left hand side of \refeq{UUEE} is equal to the probability that
$\sum_{i\in U_1}w_1'(i)=\sum_{i\in U_2}w_1'(i)$. This equality is equivalent to
$\sum_{i\in U_1\setminus (U_2\cup\{2\})}w_1'(i)=\sum_{i\in U_2\setminus (U_1\cup\{1\})}w_1'(i)$,
which in its turn is true if and only if $U_1\setminus (U_2\cup\{2\})$ and $U_2\setminus (U_1\cup\{1\})$
send the same number of edges to $[n]\setminus[(U_1\cup U_2\cup\{1,2\})\setminus(U_1\cap U_2)]$.
Since the pair $U_1,U_2$ is standard, these numbers are independent binomial random variables
with $\Theta(n^2)$ trials. Equality \refeq{UUEE} now follows by Lemma~\ref{lem:binomial}.

We now can derive Equality \refeq{pUU} from Equality \refeq{UUEE} by proving that
\begin{equation}
  \label{eq:ww}
\probb{\sum_{i\in U_1}w''_2(i)=\sum_{i\in U_2}w''_2(i)}=O(n^{-3/4}\ln^{1/4} n)
\end{equation}
for each potential pair $\mathcal{E}_1,\mathcal{E}_2$. Again, it is enough to
do this only for most probable pairs whose contribution in \refeq{pUUEE} is overwhelming.
Specifically, let $w'_2(u,v)$ denote the number of all paths of length $2$ between $u$ and $v$ in $G'$
and define
$$
\Delta(i,j)=(w'_2(i,1)+w'_2(j,1))-(w'_2(i,2)+w'_2(j,2)).
$$
Note that the numbers $w'_2(i,1),w'_2(j,1),w'_2(i,2),w'_2(j,2)$ and, hence, the numbers $\Delta(i,j)$
are completely determined by specifying $E_{G'}(U_1\setminus\{2\})=\mathcal{E}_1$ and
$E_{G'}(U_2\setminus\{1\})=\mathcal{E}_2$. We call a pair $\mathcal{E}_1,\mathcal{E}_2$
\emph{standard} if $\Delta(i,j)$ takes on $O(\sqrt{n\ln n})$ different values as
$i\neq j$ ranges over the pairs of vertices from $[n]\setminus (U_1\cup U_2\cup \{1,2\})$.
The following fact shows that it is enough if we prove \refeq{ww}
for each standard pair~$\mathcal{E}_1,\mathcal{E}_2$.

\begin{claim}
If a pair $U_1,U_2$ is standard, then
$$
|\Set{\Delta(i,j)}{i,j\in [n]\setminus (U_1\cup U_2\cup \{1,2\}),\,i\ne j}|=O(\sqrt{n\ln n})
$$
with probability $1-O(n^{-6})$.  
\end{claim}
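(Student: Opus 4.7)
The plan is to decompose $\Delta(i,j)$ as a sum over $i$ and $j$ separately, bound each summand by Chernoff's inequality, and take a union bound over vertices.

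For each $u\in[n]\setminus(U_1\cup U_2\cup\{1,2\})$, the number of paths of length $2$ from $u$ to $1$ in $G'$ is simply $|N_{G'}(u)\cap U_1|$, because $N_{G'}(1)=U_1$; likewise $w'_2(u,2)=|N_{G'}(u)\cap U_2|$. Setting
$$
f(u):=w'_2(u,1)-w'_2(u,2)=|N_{G'}(u)\cap(U_1\setminus U_2)|-|N_{G'}(u)\cap(U_2\setminus U_1)|,
$$
we obtain the identity $\Delta(i,j)=f(i)+f(j)$. Thus if $\max_u|f(u)|\le M$, then $\Delta(i,j)$ takes at most $4M+1$ distinct integer values, and it suffices to prove $M=O(\sqrt{n\ln n})$ with probability $1-O(n^{-6})$.

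Because $U_1,U_2$ is a standard pair we have $|U_1|=|U_2|$ and $|U_1\cap U_2|=(1/4\pm n^{-1/4})n$, so $|U_1\setminus U_2|=|U_2\setminus U_1|=:m=\Theta(n)$. The edges between $u$ and $U_1\triangle U_2$ are not incident to $1$ or $2$, hence they remain independent $\mathrm{Bernoulli}(1/2)$ in $G'$; consequently $f(u)$ is the difference of two independent $\mathrm{Bin}(m,1/2)$ variables, of mean $0$ and variance of order $n$. Applying Chernoff (Lemma \ref{lem:chernoff}) to each binomial with deviation $t=\sqrt{n\ln n}$ and combining via the triangle inequality gives
$$
\prob{|f(u)|>2\sqrt{n\ln n}}\le 4\exp(-2n\ln n/m)=O(n^{-7}),
$$
where the constant in the exponent can be boosted by a constant factor in $t$ if desired. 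A union bound over the at most $n$ choices of $u$ then yields $M\le 2\sqrt{n\ln n}$ except with probability $O(n^{-6})$, and hence the claim.

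The decomposition $\Delta(i,j)=f(i)+f(j)$ is the only mildly clever ingredient; once it is spotted, the remainder is a routine concentration estimate. The only bookkeeping point to verify is that, after conditioning on $N_G(1)=U_1$ and $N_G(2)=U_2$, the edges from $u$ to $U_1\triangle U_2$ are still i.i.d.\ $\mathrm{Bernoulli}(1/2)$ in $G'$, which is immediate since none of them is incident to $1$ or $2$.
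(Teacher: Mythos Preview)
Your proof is correct and follows essentially the same route as the paper: express $\Delta(i,j)$ via the counts $|N_{G'}(\cdot)\cap(U_1\setminus U_2)|$ and $|N_{G'}(\cdot)\cap(U_2\setminus U_1)|$, apply Chernoff's bound, and union-bound. The only organizational difference is that the paper combines the four binomials into a single shifted $\mathrm{Bin}(2u_1+2u_2-4u,1/2)$ and union-bounds over the $O(n^2)$ pairs $(i,j)$, whereas your decomposition $\Delta(i,j)=f(i)+f(j)$ lets you union-bound over only the $O(n)$ vertices; this is slightly tidier but not substantively different. One small inaccuracy (shared with the paper): the assertion that no edge from $u$ to $U_1\triangle U_2$ is incident to $1$ or $2$ can fail when $1\sim 2$ (then $2\in U_1\setminus U_2$ and $1\in U_2\setminus U_1$), but this changes each count by at most a constant and is immaterial to the $O(\sqrt{n\ln n})$ bound.
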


\begin{subproof}
Let $u_1=|U_1|$, $u_2=|U_2|$, and $u=|U_1\cap U_2|$. Note that  
$$
\Delta(i,j)=|N_{G'}(i)\cap (U_1\setminus U_2)|+|N_{G'}(j)\cap (U_1\setminus U_2)|-|N_{G'}(i)\cap (U_2\setminus U_1)|-|N_{G'}(j)\cap (U_2\setminus U_1)|.
$$
The four terms in the right hand side are independent random variables
$\bind(u_1-u,1/2)$, $\bind(u_1-u,1/2)$, $\bind(u_2-u,1/2)$, $\bind(u_2-u,1/2)$ respectively.
Since $N-\bind(N,p)\sim\bind(N,1-p)$, we conclude that
$\Delta(i,j) \sim 2u-2u_2+\bind(2u_1+2u_2-4u,1/2)$.
The Chernoff bound (see Lemma \ref{lem:chernoff}) implies that, for each pair $i,j$,
the inequalities
\begin{multline*}
2u-2u_2+ (u_1+u_2-2u)\of{1-\frac{\sqrt{2\ln n}}{\sqrt{u_1+u_2-2u}}}\\ \le
\Delta(i,j)\le 2u-2u_2+ (u_1+u_2-2u)\of{1+\frac{\sqrt{2\ln n}}{\sqrt{u_1+u_2-2u}}} 
\end{multline*}
are violated with probability at most $O(n^{-8})$. 
By the union bound, the probability that not all values $\Delta(i,j)$ fall in an integer interval
of length at most
$$2\sqrt{2\ln n(u_1+u_2-2u)}=O(\sqrt{n\ln n})$$ is bounded by~$O(n^{-6})$.
\end{subproof}

It remains to prove \refeq{ww} for a fixed standard pair~$\mathcal{E}_1,\mathcal{E}_2$.
Note that all walks of length 3 starting from $1$ and $2$ and having at least 2 vertices
inside $U_1\cup U_2\cup\{1,2\}$ are determined by $U_1,U_2,\mathcal{E}_1,\mathcal{E}_2$.
Let $\gamma_k=\gamma_k(U_1,U_2,\mathcal{E}_1,\mathcal{E}_2)$ denote the number of such walks
starting at $k$ for $k=1,2$.
Let $e''_{i,j}$ be the indicator random variable of the presence of the edge $\{i,j\}$ in $G''$. 
The equality $\sum_{i\in U_1}w''_2(i)=\sum_{i\in U_2}w''_2(i)$ can be rewritten as
\begin{equation}
  \label{eq:gsgs}
\gamma_1+\sum_{i,j\notin U_1\cup U_2\cup \{1,2\}} e''_{ij} (w'_2(i,1)+w'_2(j,1))=
\gamma_2+\sum_{i,j\notin U_1\cup U_2\cup \{1,2\}} e''_{ij} (w'_2(i,2)+w'_2(j,2)),  
\end{equation}
where the sums count the walks of length 3 from $1$ and $2$ whose last two vertices are outside $U_1\cup U_2\cup\{1,2\}$.
Since $\mathcal{E}_1,\mathcal{E}_2$ is a standard pair, there exists an integer $x\ne0$ such that
$\Delta(i,j)=x$ for $\Omega(n^{3/2}/\sqrt{\ln n})$ pairs $i,j$. Let $S_x$ be the set of all such pairs.
Let $G^*=G^*(U_1,U_2,\mathcal{E}_1,\mathcal{E}_2,\mathcal{E}^*)$ be obtained from $G''$ by exposing all edges
except those between $i,j$ in $S_x$, where $\mathcal{E}^*$ is the set of exposed edges.
Equality \refeq{gsgs} is fulfilled if and only if 
\begin{equation}
\sum_{\{i,j\}\in S_x} e''_{ij}x=\gamma(U_1,U_2,\mathcal{E}_1,\mathcal{E}_2,\mathcal{E}^*)
\label{eq:third_eposure}
\end{equation}
for some integer $\gamma(U_1,U_2,\mathcal{E}_1,\mathcal{E}_2,\mathcal{E}^*)$ which is completely determined by
$U_1,U_2,\mathcal{E}_1,\mathcal{E}_2,\mathcal{E}^*$. It remains to note that the binomial random variable
$\sum_{\{i,j\}\in S_x} e''_{ij}\sim\bind(|S_x|,1/2)$ takes on any fixed value
with probability at most ${|S_x|\choose\lfloor|S_x|/2\rfloor}/2^{|S_x|}=O(|S_x|^{-1/2})=O(n^{-3/4}\ln^{1/4} n)$,
where the first equality is due to \refeq{central-coef}.
This completes the proof of Equality \refeq{ww} and of the whole theorem.

\begin{remark}
  The probability bound in Theorem~\ref{thm:3paths} cannot be significantly improved because
  $\prob{\mathbf{w}^G_3(1)=\mathbf{w}^G_3(2)}=n^{-\Omega(1)}$. To see this, note first that
  $\prob{w^G_1(1)=w^G_1(2)}=\Theta(n^{-1/2})$ (see the proof of Lemma~\ref{lem:binomial}).
  Assuming that a standard pair $U_1,U_2$ with $|U_1|=|U_2|$ is fixed, we can similarly show that
  $p(U_1,U_2)=\Theta(n^{-1})$, which implies that $\prob{\of{w^G_1(1),w^G_2(1)}=\of{w^G_1(2),w^G_2(2)}}=\Theta(n^{-3/2})$.
  Showing a polynomial lower bound for $\prob{\of{w^G_1(1),w^G_2(1),w^G_3(1)}\allowbreak=\of{w^G_1(2),w^G_2(2),w^G_3(2)}}$
  is a slightly more delicate issue. Following the same proof strategy as for the upper bound,
  we have to ensure that the equation (\ref{eq:third_eposure}) has at least one integer solution
  $\sum_{\{i,j\}\in S_x} e''_{ij}$. We can do this because we have
  enough freedom in adjusting the right hand side of (\ref{eq:third_eposure})
  by choosing an appropriate value of $\gamma_1-\gamma_2$. Indeed, first of all,  $|x|$ does not exceed $2n$ with probability 1.
  Second, we have an interval of length at least $100n$ for the values of $\gamma_1-\gamma_2$ that are reachable
  with probability $\Omega(n^{-1})$. As easily seen, this is enough for obtaining a desired lower bound.
\end{remark}

\subsection{Proof of Theorem \ref{thm:3paths}: Part 2}\label{proof:1_2}

We have to show that with high probability there are vertices $i\neq j$ such that $\mathbf{w}^G_2(i)=\mathbf{w}^G_2(j)$.
Let $X$ be the number of all such vertex pairs. Thus, we have to prove that $X>0$ with high probability
or, equivalently, that $\prob{X=0}=o(1)$. By Chebyshev's inequality,
$$
\prob{X=0}\le\frac{\var(X)}{(\mean X)^2}.
$$
Therefore, it suffices to prove that $\var(X)=o\of{(\mean X)^2}$, where the variance $\var(X)$ and
the mean $\mean X$ are seen as functions of $n$. We split the proof into two parts.
First, we determine the exact asymptotics of $\mean X$ and, second,
we establish an upper bound for $\var(X)$ showing that it is sufficiently small
if compared to $(\mean X)^2$.

\paragraph*{Estimation of $\mean X$.}
We claim that
\begin{equation}
 \mean X=\frac{1+o(1)}{\pi}\,\sqrt{n}.
\label{eq:exp_equiv_pairs}
\end{equation}
Let $X_{i,j}$ denote the indicator random variable of the event that $\mathbf{w}^G_2(i)=\mathbf{w}^G_2(j)$.
Thus, $X=\sum_{1\le i<j\le n}X_{i,j}$. Since all $X_{i,j}$ are identically distributed,
we can fix a pair of distinct vertices $i$ and $j$ arbitrarily and conclude, by the linearity of the expectation, that
$$
 \mean X={n\choose 2}\mean X_{i,j}={n\choose 2}\prob{X_{i,j}=1}.
 $$
Equality \refeq{exp_equiv_pairs} is, therefore, equivalent to the equality
\begin{equation}
  \label{eq:Xij}
  \prob{X_{i,j}=1}=\frac{2+o(1)}{\pi n^{3/2}},  
\end{equation}
which we will now prove.

Let $\xi_i$ and $\xi_j$ be the number of neighbors of the vertices $i$ and $j$ in $[n]\setminus\{i,j\}$ respectively.
Note that $\xi_i$ and $\xi_j$ are independent random variables with distribution $\bind(n-2,1/2)$.
By Lemma \ref{lem:binomial}, we have
\begin{equation}
  \label{eq:w1w1} 
\prob{w^G_1(i)=w^G_1(j)}=\prob{\xi_i=\xi_j}=\frac{1+o(1)}{\sqrt{\pi n}}.
\end{equation}
It follows that
\begin{equation}
  \label{eq:Xij-cond}
\prob{X_{i,j}=1}=\frac{1+o(1)}{\sqrt{\pi n}}\,\cprob{w^G_2(i)=w^G_2(j)}{w^G_1(i)=w^G_1(j)},  
\end{equation}
and our task reduces to estimating the conditional probability in the right hand side.

Let $N_G[v]=N_G(v)\cup\{v\}$ denote the closed neighborhood of a vertex $v$ in $G$.
For two disjoint sets $U,V\subseteq[n]\setminus\{i,j\}$, define the event 
$\mathcal{B}_{U,V}=\mathcal{B}_{U,V}(i,j)$ by the conditions
\begin{equation}
  \label{eq:BUV}
N_G(i)\setminus N_G[j]=U\text{ and }N_G(j)\setminus N_G[i]=V.  
\end{equation}
Note that
\begin{multline}
\cprob{w^G_2(i)=w^G_2(j)}{w^G_1(i)=w^G_1(j)}\\=
\sum_{|U|=|V|}\cprob{w^G_2(i)=w^G_2(j)}{\mathcal{B}_{U,V}}
  \cprob{\mathcal{B}_{U,V}}{w^G_1(i)=w^G_1(j)},  \label{eq:cprob}
\end{multline}
where the summation goes over all $U,V$ of equal size.
Set $\varepsilon(n)=n^{-1/3}$.
Note that $|N_G(i)\setminus N_G[j]|\sim\bind(n-2,1/4)$ and, by symmetry, $|N_G(j)\setminus N_G[i]|$
has the same distribution.
By the Chernoff bound (Lemma \ref{lem:chernoff}), both numbers belong to the interval
$[(1/4-\varepsilon(n))n,(1/4+\varepsilon(n))n]$ with probability
$1-e^{-\Omega(\sqrt[3] n)}$. Call a pair $U,V$ \emph{standard} if $|U|=|V|$
is in the same range. Thus, all non-standard pairs make a negligible contribution in the sum in \refeq{cprob}.
More precisely, the restriction of this sum to the non-standard pairs
is bounded by
\begin{multline}
  \sum_{U,V\text{ non-standard}} \cprob{\mathcal{B}_{U,V}}{w^G_1(i)=w^G_1(j)}\\
  \le
  \frac{\prob{N_G(i)\setminus N_G[j],\,N_G(j)\setminus N_G[i]\text{ is a non-standard pair}}}{\prob{w^G_1(i)=w^G_1(j)}}\\
  =O(\sqrt n)e^{-\Omega(\sqrt[3]n)}=e^{-\Omega(\sqrt[3]n)},\label{eq:nonstand}
\end{multline}
where for the last but one equality we use Equality \refeq{w1w1}. 
To complete the estimation of $\mean X$, it suffices to prove the following fact.

\begin{claim}\label{cl:BUV}
For each standard pair~$U,V$,
$$
 \cprob{w^G_2(i)=w^G_2(j)}{\mathcal{B}_{U,V}}=\frac{2+o(1)}{n\sqrt{\pi}},
$$
where the infinitesimal $o(1)$ is the same for all standard pairs. 
\end{claim}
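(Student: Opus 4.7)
The plan is to analyse $G$ conditional on $\mathcal{B}_{U,V}$, reduce the event $w^G_2(i)=w^G_2(j)$ to a linear equation in four independent binomial random variables, and then apply a local limit theorem.

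First I would describe the conditional distribution of $G$ given $\mathcal{B}_{U,V}$: all edges inside $[n]\setminus\{i,j\}$ and the edge $\{i,j\}$ remain mutually independent $\bind(1,1/2)$, and for each $k\in W:=[n]\setminus(U\cup V\cup\{i,j\})$ the pair $(\one[k\sim i],\one[k\sim j])$ is uniform on $\{(0,0),(1,1)\}$. Since $|U|=|V|=m$ for a standard pair, the identity $d^G(i)=m+|C|+\one[i\sim j]=d^G(j)$ (with $C=N_G(i)\cap N_G(j)\subseteq W$) holds automatically, so in $w^G_2(i)-w^G_2(j)=\sum_{u\in N_G(i)}d^G(u)-\sum_{u\in N_G(j)}d^G(u)$ the contributions from vertices in $C$ and from the potential edge $\{i,j\}$ cancel. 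Expanding the remaining $d^G(u)$ for $u\in U\cup V$ into edge indicators, I would arrive at
$$
w^G_2(i)-w^G_2(j) = 2\bigl(e_G(U)-e_G(V)\bigr)+\bigl(e_G(U,W)-e_G(V,W)\bigr) =: Z,
$$
where $e_G(X)$ and $e_G(X,Y)$ denote the numbers of edges of $G$ inside $X$ and between disjoint $X,Y$. The four summands involve disjoint sets of potential edges, none incident to $\{i,j\}$, so they are mutually independent of $\mathcal{B}_{U,V}$ and of one another, with $e_G(U),e_G(V)\sim\bind(\binom{m}{2},1/2)$ and $e_G(U,W),e_G(V,W)\sim\bind(mw,1/2)$ where $w=|W|$.

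The task then reduces to computing $\prob{Z=0}$. Setting $A=e_G(U)-e_G(V)$ and $B=e_G(U,W)-e_G(V,W)$, I would write $\prob{Z=0}=\sum_{k}\prob{A=k}\,\prob{B=-2k}$ and apply Lemma~\ref{lem:ML} to both factors via the identity $X+(N-Y)\sim\bind(2N,1/2)$ for independent $X,Y\sim\bind(N,1/2)$, which converts $A$ and $B$ into shifted central binomials with parameters $m(m-1)$ and $2mw$, both of order $n^2$. A Chernoff bound restricts the effective range of summation to $|k|=O(n\sqrt{\log n})$, which is well within the $o(n^{4/3})$ scope of Lemma~\ref{lem:ML} for these parameters. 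Plugging in the resulting Gaussian densities and replacing the Riemann sum by a Gaussian integral then yields $\prob{Z=0}=(1+o(1))/\sqrt{2\pi\,\var(Z)}$. A direct computation gives $\var(Z)=m(m-1)+mw/2$; for a standard pair $m=(1/4+O(n^{-1/3}))n$ and $w=(1/2+O(n^{-1/3}))n$, so $\var(Z)=(1+o(1))\,n^2/8$ uniformly. Hence $\prob{Z=0}=(2+o(1))/(n\sqrt{\pi})$, as claimed.

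The main technical point will be uniformity of the $o(1)$ across all standard pairs. Since Lemma~\ref{lem:ML}'s infinitesimal is already uniform over $|k|\le K_n=o(n^{2/3})$ in the scaled coordinate, and the parameters $m,w$ vary only by $O(n^{2/3})$ across standard pairs, this uniformity follows routinely once the Chernoff bound disposes of the tails of the sum over $k$ and the Riemann-sum approximation is standardly justified.
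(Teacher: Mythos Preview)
Your proposal is correct and follows essentially the same route as the paper: the paper defines $\xi_U=e_G(U)$, $\xi_{U,V}=e_G(U,W)$ (and symmetrically for $V$), proves the identical reduction $\cprob{w^G_2(i)=w^G_2(j)}{\mathcal{B}_{U,V}}=\prob{2\xi_U+\xi_{U,V}=2\xi_V+\xi_{V,U}}$, and then evaluates this probability by the same Chernoff-truncation plus local de Moivre--Laplace plus Riemann-sum argument you outline. The only differences are cosmetic (notation, and the paper truncates at $|x|<M^{5/9}$ rather than $O(n\sqrt{\log n})$, but both are well within the $o(n^{4/3})$ range required by Lemma~\ref{lem:ML}).
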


\noindent
Indeed, from Equality \refeq{cprob} we obtain
\begin{multline*}
\cprob{w^G_2(i)=w^G_2(j)}{w^G_1(i)=w^G_1(j)}\\
 =\sum_{U,V\text{ standard}}\cprob{w^G_2(i)=w^G_2(j)}{\mathcal{B}_{U,V}}
\cprob{\mathcal{B}_{U,V}}{w^G_1(i)=w^G_1(j)}+e^{-\Omega(\sqrt[3] n)}\\
=\frac{2+o(1)}{n\sqrt{\pi}}\sum_{U,V\text{ standard}}\cprob{\mathcal{B}_{U,V}}{w^G_1(i)=w^G_1(j)}+e^{-\Omega(\sqrt[3] n)}\\
=\frac{2+o(1)}{n\sqrt{\pi}}(1-e^{-\Omega(\sqrt[3] n)})+e^{-\Omega(\sqrt[3] n)}=
\frac{2+o(1)}{n\sqrt{\pi}}.
\end{multline*}
We here used Estimate \refeq{nonstand} for the first equality, Claim \ref{cl:BUV} for the second equality,
and once again Estimate \refeq{nonstand} for the third equality.
Plugging it into Equality \refeq{Xij-cond}, we get \refeq{Xij} and hence also \refeq{exp_equiv_pairs}.
It remains to prove Claim~\ref{cl:BUV}.

\begin{subproof}
Let 
\begin{itemize}
\item $\xi_{U,V}$ be the number of edges between $U$ and $[n]\setminus (U\cup V\cup\{i,j\})$, and
\item $\xi_U$ be the number of edges induced by $U$.
\end{itemize}
Define $\xi_{V,U}$ and $\xi_V$ in the same way. Set 
$$
N=|U|(n-2|U|-2)\text{ and }M={|U|\choose 2}.
$$
Since $U,V$ is a standard pair, $N=(1/8+o(1))n^2$ and $M=(1/32+o(1))n^2$.
Here and below, the infinitesimals do not depend on a particular standard pair $U,V$. 
Note that $\xi_{U,V},\,\xi_{V,U}\sim\bind(N,1/2)$ and $\xi_U,\xi_V\sim\bind(M,1/2)$,
and that the four random variables are independent. The relevance of these random variables lies in
the equality
\begin{equation}
  \label{eq:w2BUV}
 \cprob{w^G_2(i)=w^G_2(j)}{\mathcal{B}_{U,V}}=
 \prob{\xi_{U,V}+2\xi_U=\xi_{V,U}+2\xi_V}.  
\end{equation}
To see this equality, assume that $\mathcal{B}_{U,V}$ occurs.
Notice that the number of 2-walks from $i$ and $j$ via one of their common neighbors
is the same for these two vertices, and the same holds true for the number of 2-walks containing
an edge between $U$ and $V$. The number of remaining  2-walks from $i$ is clearly equal to $\xi_{U,V}+2\xi_U$,
and this number is equal to $\xi_{V,U}+2\xi_V$ for $j$.

Rewriting the equality $\xi_{U,V}+2\xi_U=\xi_{V,U}+2\xi_V$ as $\xi_{U,V}-\xi_{V,U}=2(\xi_V-\xi_U)$,
we can rephrase it as follows: There is an integer $x$ in the range from $-M$ to $M$ such that
$\xi_V+(M-\xi_U)=M+x$ and $\xi_{U,V}+(N-\xi_{V,U})=N+2x$. Thus,
\begin{equation}
  \label{eq:NM}
\cprob{w^G_2(i)=w^G_2(j)}{\mathcal{B}_{U,V}}=
\sum_{x=-M}^M
\prob{\xi_{U,V}+(N-\xi_{V,U})=N+2x} \prob{\xi_V+(M-\xi_U)=M+x}.
\end{equation}
The advantage of considering these two
random variables is that $\xi_V+(M-\xi_U)\sim\bind(2M,1/2)$ and $\xi_{U,V}+(N-\xi_{V,U})\sim\bind(2N,1/2)$
and that they are independent. By Chernoff's bound,
$$
\prob{\xi_V+(M-\xi_U)\notin(M-M^{5/9},M+M^{5/9})}=e^{-\Omega(M^{1/9})}=e^{-\Omega(n^{2/9})}.
$$
This shows that the restriction of the sum in Equality \refeq{NM}
to $x\notin(-M^{5/9},M^{5/9})$ is bounded by $e^{-\Omega(n^{2/9})}$. For $x\in(-M^{5/9},M^{5/9})$,
the local de Moivre--Laplace theorem (Lemma \ref{lem:ML}) yields
$$
\prob{\xi_{U,V}+(N-\xi_{V,U})=N+2x}=(1+o(1))\frac{e^{-4x^2/N}}{\sqrt{\pi N}}
$$
and
$$
\prob{\xi_V+(M-\xi_U)=M+x}=(1+o(1))\frac{e^{-x^2/M}}{\sqrt{\pi M}}
$$
with the infinitesimal approaching 0 uniformly over $x$ in the specified range.
It follows that
\begin{multline}
\cprob{w^G_2(i)=w^G_2(j)}{\mathcal{B}_{U,V}}=
\sum_{-M^{5/9}<x<M^{5/9}}
(1+o(1))\frac{e^{-4x^2/N-x^2/M}}{\pi\sqrt{NM}}
+e^{-\Omega(n^{2/9})}\\
=\sum_{-M^{5/9}<x<M^{5/9}}
\frac{16+o(1)}{\pi n^2}e^{-64x^2/n^2}+e^{-\Omega(n^{2/9})}\\
=\frac{2+o(1)}{\sqrt{\pi}n}\sum_{-M^{5/9}<x<M^{5/9}}
\frac{1}{\sqrt{2\pi}}\exp\of{-\frac{128 x^2}{2 n^2}}\frac{8\sqrt{2}}{n}
+e^{-\Omega(n^{2/9})}\\
=\frac{2+o(1)}{\sqrt{\pi}n}\of{\int_{-\infty}^{\infty}\frac{1}{\sqrt{2\pi}}e^{-x^2/2}dx-o(1)}+e^{-\Omega(n^{2/9})}
=\frac{2+o(1)}{\sqrt{\pi}n},\label{eq:d_2_limit}
\end{multline}
completing the proof of the claim.
\end{subproof}

\paragraph*{Estimation of $\var(X)$.}
In view of Equality \eqref{eq:exp_equiv_pairs}, the proof of Part 2 of the theorem will be completed
if we show that
\begin{equation}
  \label{eq:var}
\var(X)=o(n).  
\end{equation}
Since $X=\sum_{i,j}X_{i,j}$, we have
$$
\var(X)=\sum_{i,j;\,i',j'}\cov(X_{i,j},X_{i',j'}),
$$
where the summation is over all possible pairs of 2-element sets $\{i,j\}$ and $\{i',j'\}$.
Note that there are ${n\choose 2}<n^2$ pairs with $\{i,j\}=\{i',j'\}$
and $n(n-1)(n-2)<n^3$ pairs with $|\{i,j\}\cap\{i',j'\}|=1$.
The number of the remaining pairs with $\{i,j\}\cap\{i',j'\}=\emptyset$ is bounded by $n^4$.
Therefore, Equality \eqref{eq:var} follows from the three estimates below.

\begin{claim}\label{cl:2}
$\cov(X_{i,j},X_{i,j})=O(n^{-3/2})$.  
\end{claim}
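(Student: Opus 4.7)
The plan is to observe that this claim is essentially immediate from the already-established estimate of $\mean X_{i,j}$. Since $X_{i,j}$ is an indicator random variable, we have $X_{i,j}^2 = X_{i,j}$, so
\[
\cov(X_{i,j}, X_{i,j}) = \var(X_{i,j}) = \mean X_{i,j}^2 - (\mean X_{i,j})^2 = \mean X_{i,j} - (\mean X_{i,j})^2 \le \mean X_{i,j}.
\]
Therefore it suffices to plug in the estimate \refeq{Xij} obtained above, namely $\mean X_{i,j} = \prob{X_{i,j}=1} = \frac{2+o(1)}{\pi n^{3/2}}$, which yields $\cov(X_{i,j}, X_{i,j}) = O(n^{-3/2})$ as required.

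There is no real obstacle here: the content of the claim has already been done when computing $\mean X$, and only the trivial identity $\var(Y) \le \mean Y$ for a $\{0,1\}$-valued random variable $Y$ is needed to convert the first-moment bound into the stated covariance bound. The claim is stated separately merely because it is one of the three contributions to $\var(X)$ that must be summed, namely the diagonal contribution from the $O(n^2)$ pairs with $\{i,j\} = \{i',j'\}$; multiplying by this count will give an $O(n^{1/2})$ contribution to $\var(X)$, well within the $o(n)$ target of \refeq{var}.
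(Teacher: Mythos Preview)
Your proof is correct and essentially identical to the paper's: both use that $X_{i,j}$ is an indicator to write $\cov(X_{i,j},X_{i,j})=\mean X_{i,j}-(\mean X_{i,j})^2$ and then invoke \refeq{Xij}. The only cosmetic difference is that the paper factors this as $\mean X_{i,j}(1-\mean X_{i,j})$ while you bound it above by $\mean X_{i,j}$, which is of course equivalent here.
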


\begin{claim}\label{cl:1}
  If $i'=i$ and $j'\neq j$, then $\cov(X_{i,j},X_{i',j'})=O(n^{-3})$.
\end{claim}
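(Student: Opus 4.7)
The plan is to bound $\mean(X_{i,j}X_{i,j'})$ rather than the covariance directly, since $|\cov(X_{i,j},X_{i,j'})|\le\mean(X_{i,j}X_{i,j'})+\mean X_{i,j}\cdot\mean X_{i,j'}$ and the product of expectations is $\Theta(n^{-3})$ by~\eqref{eq:Xij}. It therefore suffices to show $\mean(X_{i,j}X_{i,j'})=O(n^{-3})$. Mimicking the strategy of Claim~\ref{cl:BUV}, I would expose the $1$-neighborhoods of all three vertices: let $\mathcal{B}=\mathcal{B}_{U_1,U_2,U_3}$ denote the event $N_G(i)\setminus\{j,j'\}=U_1$, $N_G(j)\setminus\{i,j'\}=U_2$, $N_G(j')\setminus\{i,j\}=U_3$, together with a fixed choice of the three adjacencies $e_{ij},e_{ij'},e_{jj'}$. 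Call a triple $(U_1,U_2,U_3)$ \emph{standard} if each of the eight Venn atoms of $U_1,U_2,U_3$ inside $P=[n]\setminus\{i,j,j'\}$ has size $n/8\pm n^{2/3}$; by Chernoff, non-standard triples contribute only $e^{-\Omega(n^{1/3})}$. Under $\mathcal{B}$, the event $X_{i,j}X_{i,j'}=1$ becomes the conjunction $w_1(i)=w_1(j)=w_1(j')$ (already determined by $\mathcal{B}$) together with $w_2(i)=w_2(j)=w_2(j')$. Since $|U_v|\sim\mathrm{Bin}(n-3,1/2)$ are independent and each $e_{\cdot\cdot}$ just shifts the corresponding degree by a constant, iterating Lemma~\ref{lem:ML} gives $\prob{\mathcal{B}\text{ standard and }w_1\text{'s equal}}=O(n^{-1})$. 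It remains to prove
\[
\cprob{w_2(i)=w_2(j)=w_2(j')}{\mathcal{B}}=O(n^{-2}) \quad\text{for any standard }\mathcal{B}.
\]

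For this conditional bound, I would write $w_2(v)=C_v+\sum_{\{u,u'\}\subseteq P}e_{uu'}(\one_{u\in U_v}+\one_{u'\in U_v})$ where $C_v$ is determined by $\mathcal{B}$. Subtracting, $w_2(i)-w_2(j)=D_1+S_1$ and $w_2(j)-w_2(j')=D_2+S_2$, where $D_1,D_2$ are constants and
\[
S_\ell=\sum_{\{u,u'\}\subseteq P}e_{uu'}\gamma^{(\ell)}_{uu'},\qquad\gamma^{(\ell)}_{uu'}\in\{-2,-1,0,1,2\},
\]
with $(\gamma^{(1)},\gamma^{(2)})_{uu'}$ determined by the memberships of $u,u'$ in the $U_v$'s. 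The decoupling trick is to single out
\[
E_1=\Set{\{u,u'\}\subseteq P}{\gamma^{(1)}_{uu'}\ne 0=\gamma^{(2)}_{uu'}},\quad
E_2=\Set{\{u,u'\}\subseteq P}{\gamma^{(1)}_{uu'}=0\ne\gamma^{(2)}_{uu'}}.
\]
For any standard triple, both $|E_1|$ and $|E_2|$ are $\Omega(n^2)$: every pair joining the atom $U_1\setminus(U_2\cup U_3)$ to $P\setminus(U_1\cup U_2\cup U_3)$ (each of size $\Theta(n)$) belongs to $E_1$ with $\gamma^{(1)}=1$, and an analogous construction using $U_3\setminus(U_1\cup U_2)$ produces $E_2$; moreover $E_1\cap E_2=\emptyset$. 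Exposing all edges of $P$ outside $E_1\cup E_2$, the sums $S_1$ and $S_2$ become independent sums of $\Omega(n^2)$ independent Bernoulli$(1/2)$ variables with non-zero integer weights; by the Kolmogorov--Rogozin concentration inequality (or a one-dimensional local CLT), each of $S_1,S_2$ attains any fixed integer with probability $O(n^{-1})$, so their joint probability is $O(n^{-2})$.

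The main obstacle is the decoupling step, namely producing two disjoint edge classes of size $\Omega(n^2)$ with the separating coefficient structure; for standard triples this reduces to a purely combinatorial statement that is immediate from the $\Theta(n)$ lower bound on every Venn atom. In contrast to Claim~\ref{cl:0}, where the analogous estimate for disjoint pairs must improve to $o(n^{-3})$ and therefore requires a genuine multidimensional local limit theorem, here only the upper bound $O(n^{-3})$ is needed, so one-dimensional concentration applied separately to each decoupled sum is enough. Assembling the pieces yields $\mean(X_{i,j}X_{i,j'})\le O(n^{-1})\cdot O(n^{-2})+e^{-\Omega(n^{1/3})}=O(n^{-3})$, completing the covariance bound.
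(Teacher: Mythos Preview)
Your proof is correct and follows essentially the same strategy as the paper: reduce to bounding $\mean(X_{i,j}X_{i,j'})$, get a factor $O(n^{-1})$ from the two degree equalities via independence of the three binomials $|U_v|$, and then a factor $O(n^{-2})$ from the two $w_2$-equalities by isolating two disjoint edge sets of size $\Theta(n^2)$ that decouple them. The only difference is in the choice of decoupling sets. The paper exposes everything except the edges \emph{inside} the atoms $A=N_G(j)\setminus(N_G[i]\cup N_G[j'])$ and $A'=N_G(j')\setminus(N_G[i]\cup N_G[j])$, so that $w_2(i)$ is fully determined while $w_2(j)$ and $w_2(j')$ each depend on a separate pure binomial $\zeta,\zeta'\sim\bind(\Theta(n^2),1/2)$; this makes the anticoncentration step a direct application of the central binomial bound. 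Your sets $E_1,E_2$ are defined by the coefficient pattern $(\gamma^{(1)},\gamma^{(2)})$ and realized via \emph{crossing} pairs between atoms, which works equally well but leaves mixed weights in $\{\pm1,\pm2\}$ and therefore needs Littlewood--Offord or Kolmogorov--Rogozin rather than the bare binomial estimate. Either choice gives the same $O(n^{-2})$ conditional bound, so the two arguments are interchangeable.
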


\begin{claim}\label{cl:0}
  If $\{i,j\}\cap\{i',j'\}=\emptyset$, then $\cov(X_{i,j},X_{i',j'})=o(n^{-3})$.
\end{claim}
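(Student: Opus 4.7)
My plan is to reduce the claim to the asymptotic independence of the two events and then establish it via a two-dimensional local limit theorem. Since
\[
\cov(X_{i,j},X_{i',j'})=\Pr[X_{i,j}=X_{i',j'}=1]-\Pr[X_{i,j}=1]\Pr[X_{i',j'}=1]
\]
and the product equals $(4+o(1))/(\pi^2 n^3)$ by \refeq{Xij}, it suffices to show
\[
\Pr[X_{i,j}=X_{i',j'}=1]=(1+o(1))\,\Pr[X_{i,j}=1]\Pr[X_{i',j'}=1].
\]

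Mirroring the computation of $\mean X$, I would condition on the 1-neighbourhoods $N_G(i),N_G(j),N_G(i'),N_G(j')$. This determines the four sets $U,V,U',V'$ from \refeq{BUV} for both pairs, and Chernoff bounds (as in \refeq{nonstand}) restrict us to \emph{jointly standard} configurations in which the four set sizes lie within a Chernoff window of $n/4$ and the pairwise intersections $|U\cap U'|,|U\cap V'|,|V\cap U'|,|V\cap V'|$ are within $O(\sqrt{n\ln n})$ of $n/16$. Under such a configuration with $|U|=|V|$ and $|U'|=|V'|$, the reduction from Claim~\ref{cl:BUV} applies to both pairs simultaneously, reducing the event $X_{i,j}=X_{i',j'}=1$ to $\Psi:=(D,D')=(\gamma_1,\gamma_2)$, where $D,D'$ are the random sums analogous to $\xi_{U,V}+2\xi_U-\xi_{V,U}-2\xi_V$ over unexposed edges inside $[n]\setminus\{i,j,i',j'\}$, and $\gamma_1,\gamma_2=O(\sqrt n)$ are deterministic shifts due to the edges between $\{i,j\}$ and $\{i',j'\}$.

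The key quantitative input is that the covariance matrix $\Sigma$ of $\Psi$ is nearly diagonal. Writing $D=\sum_{\{x,y\}}(f(x)+f(y))X_{xy}$ with $f(v)=\mathbbm{1}_{v\in U}-\mathbbm{1}_{v\in V}$ and similarly $D'$ with $g(v)=\mathbbm{1}_{v\in U'}-\mathbbm{1}_{v\in V'}$, the independence of edge indicators yields
\[
4\cov(D,D')=(n-O(1))\sum_v f(v)g(v)+\Bigl(\sum_v f(v)\Bigr)\Bigl(\sum_v g(v)\Bigr).
\]
The second summand vanishes under the $w_1$ equalities, and the first has
\[
\sum_v f(v)g(v)=|U\cap U'|-|U\cap V'|-|V\cap U'|+|V\cap V'|=O(\sqrt{n\ln n})
\]
for typical configurations because the expected intersection sizes $n/16$ cancel. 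Since $\var(D),\var(D')=\Theta(n^2)$ by the single-pair analysis of Claim~\ref{cl:BUV}, the correlation of $(D,D')$ is $O(\sqrt{\ln n/n})\to 0$, which is precisely the small off-diagonal phenomenon flagged in Section~\ref{s:intro}.

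A two-dimensional local limit theorem for $\Psi$, uniform over typical configurations, then yields $\Pr[\Psi=(\gamma_1,\gamma_2)]=(1+o(1))/(2\pi\sqrt{\det\Sigma})$, with $\det\Sigma=\var(D)\var(D')(1-\rho^2)=\var(D)\var(D')(1+o(1))$; combined with the one-dimensional local limit theorem already applied in \refeq{d_2_limit} for each coordinate, this gives $\Pr[\Psi=(\gamma_1,\gamma_2)]=(1+o(1))\Pr[D=\gamma_1]\Pr[D'=\gamma_2]$. Averaging over the typical 1-neighbourhood configurations then delivers the asymptotic independence. The main obstacle is the two-dimensional local limit theorem itself: the central limit theorem for $\Psi$ is routine via Lindeberg applied to the $\Theta(n^2)$ uniformly bounded independent summands, but upgrading it to a point-probability estimate requires the conversion framework of~\cite{Mukhin}. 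The required auxiliary inputs are that the lattice span of $\Psi$ equals $\mathbb{Z}^2$ (readily verified since the coefficient vectors $(c_e,c'_e)$ include $(2,0),(0,2),(1,1),(1,-1)$) and a characteristic-function bound away from the origin, together with careful uniform tracking of the $o(1)$ error across typical standard configurations.
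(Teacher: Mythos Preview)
Your plan is the paper's plan: reduce to $\mean(X_{i,j}X_{i',j'})=(4+o(1))/(\pi^2 n^3)$, condition on the $1$-neighbourhood data to express the joint $w_2$-equality as a bivariate lattice event for a sum $\Psi$ of $\Theta(n^2)$ independent bounded vectors, show the covariance matrix of $\Psi$ is asymptotically diagonal, establish a bivariate CLT, and upgrade it to a local limit theorem via Mukhin's framework with the attendant lattice and characteristic-function checks. The paper conditions on the slightly coarser event $\mathcal{B}_{U,V}(i,j)\cap\mathcal{B}_{U',V'}(i',j')$ so that the target is $\Psi=(0,0)$ rather than a shifted point, but this is cosmetic.

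The one substantive difference is your treatment of the off-diagonal entry $\Sigma_{1,2}$. The paper writes $\Sigma_{1,2}=\sum_{a,b=1}^4|S_a\cap S'_b|\,c(a,b)$ and checks case by case that the sixteen summands split into eight ``antipodal'' pairs that nearly cancel. Your representation $D=\sum_{\{x,y\}}(f(x)+f(y))\iota_{xy}$ with $f=\mathbb{1}_U-\mathbb{1}_V$ collapses this to the single identity $\sum_v f(v)g(v)=|U\cap U'|-|U\cap V'|-|V\cap U'|+|V\cap V'|$, which is cleaner and makes the cancellation transparent. One point to tighten: when you condition on the full neighbourhoods $N_G(i),N_G(j),N_G(i'),N_G(j')$, the shifts $\gamma_1,\gamma_2$ come not just from the at most four edges between $\{i,j\}$ and $\{i',j'\}$ but from all $\Theta(n)$ exposed edges incident to $\{i',j'\}$ (respectively $\{i,j\}$) lying in the relevant $S_a$; you need a separate Chernoff step to show their net contribution is $O(\sqrt{n\ln n})$ for typical configurations, so that $\gamma_k/n\to 0$ and the Gaussian density at $(\gamma_1,\gamma_2)$ agrees asymptotically with its value at the origin.
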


\bigskip

The rest of the proof consists of the proofs of these three claims.

\bigskip

\begin{subproofof}{2}
  We have
  $$
  \cov(X_{i,j},X_{i,j})=\mean X_{i,j}-(\mean X_{i,j})^2=\mean X_{i,j}(1-\mean X_{i,j})=
  O(n^{-3/2}),
  $$
  where the last equality is due to~\eqref{eq:Xij}.
\end{subproofof}

\begin{subproofof}{1}  
  It is enough to prove that
  \begin{equation}
    \label{eq:Xijj}
  \mean(X_{i,j}X_{i,j'})=O(n^{-3})    
  \end{equation}
  because, along with Equality \eqref{eq:Xij}, this will imply that
  $\cov(X_{i,j},X_{i,j'})=\mean(X_{i,j}X_{i,j'})-\mean X_{i,j}\mean X_{i,j'}=O(n^{-3})$

Note that $\mean(X_{i,j}X_{i,j'})=\prob{\mathbf{w}^G_2(i)=\mathbf{w}^G_2(j)=\mathbf{w}^G_2(j')}$.
We first estimate the probability that $w^G_1(i)=w^G_1(j)=w^G_1(j')$.  
Fix pairwise distinct $i,j,j'$ and expose the edges between these three vertices as well as
all edges from $i$. This determines $w^G_1(i)$ and, hence, also two integers $x$ and $x'$ such that
$w^G_1(i)=w^G_1(j)=w^G_1(j')$ if and only if $\xi=x$ and $\xi'=x'$ for $\xi=|N_G(j)\setminus\{i,j'\}|$
and $\xi'=|N_G(j')\setminus\{i,j\}|$. Since $\xi$ and $\xi'$ are independent random variables,
both with distribution $\bind(n-3,1/2)$, we conclude that
\begin{multline}
 \prob{w^G_1(i)=w^G_1(j)=w^G_1(j')}\le\of{\max_z\prob{\mathrm{Bin}(n-3,1/2)=z}}^2\\
 =\of{\frac{1}{2^{n-3}}{n-3\choose \lfloor(n-3)/2\rfloor}}^2\le\frac{2+o(1)}{\pi n},\label{eq:w1}
\end{multline}
where the last inequality is provided by Bound \eqref{eq:central-coef} if $n$ is even
and easily follows from \eqref{eq:central-coef} if $n$ is odd.

Next, we estimate the probability of $w^G_2(i)=w^G_2(j)=w^G_2(j')$ under the condition
$w^G_1(i)=w^G_1(j)=w^G_1(j')$. Let us split this condition into a set of pairwise disjoint
events. For two disjoint sets $A,A'\subseteq[n]\setminus\{i,j,j'\}$, let $\mathcal{D}_{A,A'}$
be the event that
\begin{itemize}
\item
  $i$, $j$, and $j'$ have the same degree,
\item
  $N_G(j)\setminus (N_G[i]\cup N_G[j'])=A$ and, similarly,
\item
  $N_G(j')\setminus (N_G[i]\cup N_G[j])=A'$.
\end{itemize}
Note that
\begin{multline}
\cprob{w^G_2(i)=w^G_2(j)=w^G_2(j')}{w^G_1(i)=w^G_1(j)=w^G_1(j')}\\
=\sum_{A,A'}
\cprob{w^G_2(i)=w^G_2(j)=w^G_2(j')}{\mathcal{D}_{A,A'}}
\cprob{\mathcal{D}_{A,A'}}{w^G_1(i)=w^G_1(j)=w^G_1(j')}.\label{eq:w21}
\end{multline}
Set $\varepsilon(n)=n^{-1/4}$. Call a set $A$ \emph{standard} if $(1/8-\varepsilon(n))n\le|A|\le(1/8+\varepsilon(n))n$
and apply this definition as well to $A'$. The restriction of the sum \eqref{eq:w21} to the pairs $A,A'$
with not both $A$ and $A'$ being standard is bounded by
\begin{multline*}
\sum_{A\text{ or }A'\text{ non-standard}}\cprob{\mathcal{D}_{A,A'}}{w^G_1(i)=w^G_1(j)=w^G_1(j')}\\
\le\frac{\prob{N_G(j)\setminus (N_G[i]\cup N_G[j'])\text{ or }N_G(j')\setminus (N_G[i]\cup N_G[j])\text{ is non-standard}}}{\prob{w^G_1(i)=w^G_1(j)=w^G_1(j')}}\\
=e^{-\Omega(\sqrt n)}O(n)=e^{-\Omega(\sqrt n)},
\end{multline*}
where the last but one equality follows from Chernoff's bound and Estimate \eqref{eq:w1}.
To estimate the remaining part of the sum \eqref{eq:w21}, we fix a pair $A,A'$
with both $A$ and $A'$ being standard and estimate the conditional probability
$\cprob{w^G_2(i)=w^G_2(j)=w^G_2(j')}{\mathcal{D}_{A,A'}}$.

Expose all edges of $G$ except those that are entirely inside one of the sets $A$ and $A'$
in such a way that the condition $\mathcal{D}_{A,A'}$ is ensured. Note that $w^G_2(i)$ is therewith determined.
Moreover, the unexposed edges within $A$ contribute only in $w^G_2(j)$, and
the unexposed edges within $A'$ contribute only in $w^G_2(j')$. Let $\zeta$ be the number
of edges of the former kind, and $\zeta'$ be the number of edges of the latter kind.
It follows that if the equality $w^G_2(i)=w^G_2(j)=w^G_2(j')$ is not excluded by the
exposed edges, then there are integers $x$ and $x'$ such that this equality holds
exactly when $\zeta=x$ and $\zeta'=x'$.
Since $\zeta\sim\bind({|A|\choose2},1/2)$ and $\zeta'\sim\bind({|A'|\choose2},1/2)$
are independent, and both $A$ and $A'$ are standard, we conclude that
\begin{multline*}
\cprob{w^G_2(i)=w^G_2(j)=w^G_2(j')}{\mathcal{D}_{A,A'}}\\
  \le
  \max_{(1/8-\varepsilon(n))n\le a\le(1/8+\varepsilon(n))n}
  \of{\max_z\prob{\bind\left({a\choose 2},1/2\right)=z}}^2
  \le\frac{256+o(1)}{\pi n^2}.
\end{multline*}
This gives us an upper bound for the restriction of the sum \eqref{eq:w21}
to the pairs $A,A'$ with both $A$ and $A'$ being standard, implying that
$$
\cprob{w^G_2(i)=w^G_2(j)=w^G_2(j')}{w^G_1(i)=w^G_1(j)=w^G_1(j')}\le\frac{256+o(1)}{\pi n^2}.
$$
Combining this with Estimate \eqref{eq:w1}, we obtain the required Equality~\eqref{eq:Xijj}.
\end{subproofof}

\begin{subproofof}{0}
  It is enough to prove that
  \begin{equation}
    \label{eq:Xijij}
  \mean(X_{i,j}X_{i',j'})=\frac{4+o(1)}{\pi^2n^3}.
\end{equation}
Once this equality is established, we will have
$\cov(X_{i,j},X_{i',j'})=\mean(X_{i,j}X_{i',j'})-\mean X_{i,j}\mean X_{i',j'}=o(n^{-3})$
by Equality \eqref{eq:Xij}.

Thus, we have to estimate the value of
\begin{multline*}
\mean(X_{i,j}X_{i',j'})=\prob{\mathbf{w}^G_2(i)=\mathbf{w}^G_2(j),\ \mathbf{w}^G_2(i')=\mathbf{w}^G_2(j')}\\
=\cprob{w^G_2(i)=w^G_2(j),\ w^G_2(i')=w^G_2(j')}{w^G_1(i)=w^G_1(j),\ w^G_1(i')=w^G_1(j')}\\
\times\prob{w^G_1(i)=w^G_1(j),\ w^G_1(i')=w^G_1(j')}.
\end{multline*}
It is easy to show that
\begin{multline}
  \prob{w^G_1(i)=w^G_1(j),\ w^G_1(i')=w^G_1(j')}\\
  = (1+o(1))\prob{w^G_1(i)=w^G_1(j)}\prob{w^G_1(i')=w^G_1(j')}=
 \frac{1+o(1)}{\pi n};\label{eq:w1-split}
\end{multline}
c.f.~Equality~\eqref{eq:w1w1}.
Indeed, after the edges between $\{i,j\}$ and $\{i',j'\}$ are exposed, the events $w^G_1(i)=w^G_1(j)$
and $w^G_1(i')=w^G_1(j')$ become independent, and we have to estimate the probability of each of them
separately. Consider, for example, the event $w^G_1(i)=w^G_1(j)$. Let $\xi_i$ and $\xi_j$ be the number
of neighbors of the vertices $i$ and $j$ in $[n]\setminus\{i,j,i',j'\}$ respectively.
If the number of exposed edges between $\{i\}$ and $\{i',j'\}$ is the same as between
$\{j\}$ and $\{i',j'\}$, then $w^G_1(i)=w^G_1(j)$ exactly when $\xi_i=\xi_j$,
and we can use Lemma \ref{lem:binomial} as we did it earlier to derive Equality \eqref{eq:w1w1}.
If the number of exposed edges from $i$ and from $j$ is different, then we can use the de Moivre--Laplace
limit theorem (Lemma \ref{lem:ML}). Suppose, for example, that $i$ is adjacent neither to $i'$
nor to $j'$ and that $j$ is adjacent to both $i'$ and $j'$. Then $w^G_1(i)=w^G_1(j)$ exactly when $\xi_i=\xi_j+2$.
The latter equality can be rewritten as $\xi_i+(n-4-\xi_j)=n-2$ and its probability is nothing else
as the probability of the event that $\bind(2(n-4),1/2)=(n-4)+2$.

In the rest of the proof, we estimate the probability of $w^G_2(i)=w^G_2(j)$ and $w^G_2(i')=w^G_2(j')$
under the condition that $w^G_1(i)=w^G_1(j)$ and $w^G_1(i')=w^G_1(j')$.
Similarly to the estimation of the conditional probability $\cprob{w^G_2(i)=w^G_2(j)}{w^G_1(i)=w^G_1(j)}$
above (see Equality \eqref{eq:cprob} and the subsequent argument), it suffices
to estimate the probability of $w^G_2(i)=w^G_2(j)$ and $w^G_2(i')=w^G_2(j')$
conditioned on then event $\mathcal{B}_{U,V}(i,j)\cap\mathcal{B}_{U',V'}(i',j')$
for two disjoint sets $U,V\subset[n]\setminus\{i,j\}$ and two disjoint sets $U',V'\subset[n]\setminus\{i',j'\}$
such that $|U|=|V|$ and $|U'|=|V'|$. Recall that the event $\mathcal{B}_{U,V}(i,j)$
is defined by \eqref{eq:BUV}. Accordingly to Chernoff's bound, it is enough to consider a
\emph{standard} quadruple $U,V,U',V'$ in the sense that the cardinalities of the four sets belong to the interval
$[(1/4-\varepsilon(n))n,(1/4+\varepsilon(n))n]$ and, additionally, the cardinalities
$|U\cap U'|$, $|U\cap V'|$, $|V\cap U'|$, and $|V\cap V'|$ belong to the interval
$[(1/16-\varepsilon(n))n,(1/16+\varepsilon(n))n]$, where we set $\varepsilon(n)=n^{-1/4}$.

Define the random variables $\xi_U$ and $\xi_{U,V}$, along with their analogs for the other sets,
exactly as in the proof of Claim~\ref{cl:BUV}. Similarly to Equality \eqref{eq:w2BUV}, we have
\begin{multline*}
  \cprob{w^G_2(i)=w^G_2(j),\ w^G_2(i')=w^G_2(j')}{\mathcal{B}_{U,V}(i,j)\cap\mathcal{B}_{U',V'}(i',j')}\\
  =\prob{\xi_{U,V}+2\xi_U=\xi_{V,U}+2\xi_V,\ \xi_{U',V'}+2\xi_{U'}=\xi_{V',U'}+2\xi_{V'}}
  =\prob{\Psi_1=0,\ \Psi_2=0},
\end{multline*}
where
$$
\Psi_1=\xi_{U,V}-\xi_{V,U}+2(\xi_U-\xi_V)\text{ and }\Psi_2=\xi_{U',V'}-\xi_{V',U'}+2(\xi_{U'}-\xi_{V'}).
$$
Taking into account Equality \eqref{eq:w1-split}, it is now clear that the desired Equality \eqref{eq:Xijij}
will be obtained if we prove that
\begin{equation}
\prob{\Psi_1=0,\ \Psi_2=0}=\frac{4+o(1)}{\pi n^2}
\label{eq:cond-ind}
\end{equation}
with infinitesimal $o(1)$ tending to 0 uniformly over all standard quadruples.

Note that $\prob{\Psi_k=0}=\frac{2+o(1)}{n\sqrt{\pi}}$ for each $k=1,2$
in view of Equalities \eqref{eq:w2BUV} and \eqref{eq:d_2_limit}.
Equality \eqref{eq:cond-ind}, therefore, implies that the events $\Psi_1=0$ and $\Psi_2=0$
are asymptotically independent. This may appear somewhat surprising as the random variables
$\xi_U$, $\xi_{U,V}$, $\xi_{U'}$, and $\xi_{U',V'}$ involved in $\Psi_1$ and $\Psi_2$ are quite
far from being independent (for example, $\xi_U$ and $\xi_{U'}$
are dependent because $|U\cap U'|=(1/16-o(1))n$). The forthcoming argument has to overcome
this complication.

Consider the random 2-dimensional vector $\Psi=(\Psi_1,\Psi_2)^\top$
and let $\Sigma=(\Sigma_{k,l})_{1\le k,l\le2}$ be the covariance matrix of $\Psi$.
Our nearest goal is to estimate the entries $\Sigma_{k,l}$ of this matrix.

We define four sets $S_1,S_2,S_3,S_4\subseteq{[n]\choose2}$ of vertex pairs as follows.
\begin{itemize}
\item
  $S_1$ is the set of all pairs $u,w$ with $u\in U$ and $w\notin U\cup V\cup\{i,j\}$.
\item
  $S_2$ is the set of all $v,w$ with $v\in V$ and $w\notin U\cup V\cup\{i,j\}$.
\item
$S_3={U\choose2}$.
\item
$S_4={V\choose2}$.
\end{itemize}
Let $S'_1,S'_2,S'_3,S'_4$ be defined similarly for $U'$ and $V'$.
Since $U,V,U',V'$ are standard, we have $|S_a|=(1/8+o(1))n^2$ for $a=1,2$
and $|S_a|=(1/32+o(1))n^2$ for $a=3,4$, and the similar equalities hold true
for the cardinalities $|S'_a|$.

For a vertex pair $e$, let $\iota_e$ denote the indicator random variable of
the event that $e$ is an edge in $G$. We define the random variable $\psi_e$ by
$$
\psi_e=\left\{
\begin{array}{rl}
  \iota_e, &\text{if }e\in S_1\\
  -\iota_e, &\text{if }e\in S_2\\
  2\iota_e, &\text{if }e\in S_3\\
  -2\iota_e, &\text{if }e\in S_4\\
  0, &\text{otherwise.}
\end{array}
\right.
$$
The random variable $\psi'_e$ is defined similarly for the sets $S'_1,S'_2,S'_3,S'_4$.
Note that
$$
\Psi_1=\sum_{e}\psi_e\text{ and }\Psi_2=\sum_{e}\psi'_e.
$$
As easily seen,
$$
\var\psi_e=\left\{
\begin{array}{rl}
  \frac14, &\text{if }e\in S_1\cup S_2\\
  1, &\text{if }e\in S_3\cup S_4,
\end{array}
\right.
$$
and the similar equalities hold true for $\psi'_e$.
Since the random variables $\psi_e$ are mutually independent, we have
\begin{equation}
  \label{eq:Sigma11}
\Sigma_{1,1}=\var\Psi_1=\sum_{e}\var\psi_e=\frac14|S_1|+\frac14|S_2|+|S_3|+|S_4|=\of{\frac18+o(1)}n^2
\end{equation}
and, similarly, $\Sigma_{2,2}=\var\Psi_2=(1/8+o(1))n^2$. Since $\psi_e$ and $\psi'_{e'}$ are independent
whenever $e\ne e'$, we have
$$
\Sigma_{1,2}=\cov(\Psi_1,\Psi_2)=\sum_{e}\cov(\psi_e,\psi'_e)=\sum_{a=1}^4\sum_{b=1}^4\sum_{e\in S_a\cap S'_b}\cov(\psi_e,\psi'_e).
$$
Note that, for each pair $1\le a,b\le4$, the covariance $\cov(\psi_e,\psi'_e)$ is the same for all $e\in S_a\cap S'_b$.
Denote this value by $c(a,b)$. Thus,
\begin{equation}
  \label{eq:sum-antipod}
\Sigma_{1,2}=\sum_{a=1}^4\sum_{b=1}^4 |S_a\cap S'_b|\,c(a,b).  
\end{equation}
Call two index pairs $a,b$ and $a^*,b^*$ \emph{antipodal} if $|S_a\cap S'_b|=(1+o(1))|S_{a^*}\cap S'_{b^*}|$
and $c(a,b)=-c(a^*,b^*)$. An important observation is that the sixteen index pairs can be split into eight antipodal pairs,
namely
\begin{align*}
(3,3)\text{ and } (3,4), \quad (1,1)\text{ and } (1,2), \quad (3,1)\text{ and } (3,2), \quad (4,1)\text{ and } (4,2),\\
(4,4)\text{ and } (4,3), \quad (2,2)\text{ and } (2,1), \quad (1,3)\text{ and } (2,3), \quad (1,4)\text{ and } (2,4).
\end{align*}
To show this, we consider three typical cases.

\smallskip

\textit{The pairs $(3,3)$ and $(3,4)$.} Note first that $S_3\cap S'_3={U\cap U'\choose2}$ and
$S_3\cap S'_4={U\cap V'\choose2}$. Since the quadruple $U,V,U',V'$ is standard, we have $|S_a\cap S'_b|=(2^{-9}+o(1))n^2$
for both $(a,b)=(3,3)$ and $(a,b)=(3,4)$. Furthermore, the definition of $\psi_e$ and $\psi'_e$ immediately
yields $c(3,3)=\frac12\cdot4-(\frac12\cdot2)^2=1$ and $c(3,3)=\frac12(-4)-(\frac12\cdot2)(-\frac12\cdot2)=-1$.
Note also that the pairs $(4,4)$ and $(4,3)$ form a quite similar configuration with virtually the same analysis.

\smallskip

\textit{The pairs $(1,1)$ and $(1,2)$.}
A pair $e=\{u,w\}$ in $S_1\cap S'_1$ can be (up to renaming $u$ and $w$) of one of the following two types:
either $u\in U\cap U'$ and $w\notin U\cup V\cup U'\cup V'$ or
$u\in U\setminus(U'\cup V')$ and $w\in U'\setminus(U\cup V)$.
Since the quadruple $U,V,U',V'$ is standard, we have
$|S_1\cap S'_1|=(\frac1{16}\cdot\frac14+(\frac18)^2+o(1))n^2=(\frac1{32}+o(1))n^2$.
The same estimate holds true for $|S_1\cap S'_2|$ by a similar argument.
Furthermore, $c(1,1)=\frac12-(\frac12)^2=\frac14$ while $c(1,2)=-\frac12-\frac12(-\frac12)=-\frac14$.
Similar considerations apply also to the pairs $(2,2)$ and $(2,1)$.

\smallskip

\textit{The pairs $(3,1)$ and $(3,2)$.}
Up to renaming $u$ and $w$, a pair $e=\{u,w\}$ belongs to $S_3\cap S'_1$ in the case that
$u\in U\cap U'$ and $w\in U\setminus(U'\cup V')$. It follows that
$|S_3\cap S'_1|=(\frac1{16}\cdot\frac18+o(1))n^2=(2^{-7}+o(1))n^2$.
The same estimate holds true as well for $|S_3\cap S'_2|$.
It is also easy to see that $c(3,1)=\frac12\cdot2-(\frac12\cdot2)\cdot\frac12=\frac12$
while $c(3,2)=-\frac12\cdot2-(\frac12\cdot2)\cdot(-\frac12)=-\frac12$.
The remaining antipodal pairs $(1,3)$ and $(2,3)$, $(4,1)$ and $(4,2)$, and $(1,4)$ and $(2,4)$
have the same form and are analyzed similarly.

\smallskip

In view of the fact that the index pairs are split into antipodes, Equality \eqref{eq:sum-antipod}
readily implies that $\Sigma_{1,2}=\Sigma_{2,1}=o(n^2)$. Thus,
$$
\Sigma=
\begin{pmatrix}
  \of{1/8+o(1)}n^2&o(n^2)\\
  o(n^2)&\of{1/8+o(1)}n^2
\end{pmatrix}.
$$
This shows, in particular, that the covariance matrix $\Sigma$ is invertible
and positive definite for all sufficiently large $n$.
Let $\Sigma^{1/2}$ denote the unique positive definite square root of $\Sigma$.
Using the known explicit formula for square roots of $2\times2$ matrices \cite{Levinger80},
for the inverse matrix $\Sigma^{-1/2}=(\Sigma^{1/2})^{-1}$ we obtain
$$
\Sigma^{-1/2}=\frac{1}{n}
\begin{pmatrix}
  2\sqrt2+o(1) & o(1) \\
  o(1) & 2\sqrt2+o(1)
\end{pmatrix}
=A+A',
$$
where
$$
A=\frac{1}{n}
\begin{pmatrix}
  2\sqrt2 & 0 \\
  0 & 2\sqrt2
\end{pmatrix}
\text{ and }
A'=\frac{1}{n}
\begin{pmatrix}
  o(1) & o(1) \\
  o(1) & o(1)
\end{pmatrix}.
$$

Consider the centered random variables $\bar\psi_e=\psi_e-\mean\psi_e$ and
$\bar\Psi_1=\Psi_1-\mean\Psi_1=\sum_{e}\bar\psi_e$.
Note that $\bar\Psi_1=\Psi_1$ because
$$
\mean\Psi_1=\sum_{e}\mean\psi_e=\of{\sum_{e\in S_1}\mean\psi_e+\sum_{e\in S_2}\mean\psi_e}
+\of{\sum_{e\in S_3}\mean\psi_e+\sum_{e\in S_4}\mean\psi_e}=0+0=0.
$$
Since the random variables $\bar\psi_e$ are uniformly bounded, we conclude by Equality \eqref{eq:Sigma11} that
$\sum_{e}\mean|\bar\psi_e|^3=o((\var\bar\Psi_1)^{3/2})$, showing that Lyapunov’s condition is fulfilled
(see \cite[Equality (27.16)]{Billingsley95}).
Using the Lyapunov central limit theorem (which applies to not necessarily identically distributed
random variables; see \cite[Theorem 27.3]{Billingsley95}) and noting that the same argument works
as well for $\Psi_2$, we establish the following convergences in distribution as $n$ goes to infinity:
\begin{equation}
 \frac{\Psi_1}{\sqrt{\var\Psi_1}}\stackrel{d}\to\eta\text{ and }
 \frac{\Psi_2}{\sqrt{\var\Psi_2}}\stackrel{d}\to\eta,
 \label{eq:CLT-single}
\end{equation}
where $\eta$ in both cases denotes a standard normal random variable.

From \eqref{eq:CLT-single}, we easily infer\footnote{%
  Fix $\epsilon>0$ and consider an arbitrarily small $\delta>0$.
  Choose $C$ so large that $\prob{\eta\notin(-C,C]}<\delta/2$.
  Take $N$ such that if $n\ge N$, then $|\alpha_n|<\epsilon/C$
  and the probability $\prob{\Psi_1/\sqrt{\var\Psi_1}\notin(-C,C]}$
  is at most $\delta/2$ apart from the probability $\prob{\eta\notin(-C,C]}$.
  Clearly, $\prob{|\alpha_n\Psi_1/\sqrt{\var\Psi_1}|>\epsilon}<\delta$ for all $n\ge N$.
}
two convergences in probability, which hold true
for every sequence of reals $\alpha_n$ approaching $0$ as $n$ increases:
$$
\frac{\alpha_n\Psi_1}{\sqrt{\var\Psi_1}}\stackrel{p}\to 0\text{ and }
\frac{\alpha_n\Psi_2}{\sqrt{\var\Psi_2}}\stackrel{p}\to 0.
$$
In view of \eqref{eq:Sigma11}, we conclude from here that
$$
\frac{\alpha_n\Psi_1}n\stackrel{p}\to 0\text{ and }
\frac{\alpha_n\Psi_2}n\stackrel{p}\to 0,
$$
for every sequence of reals $\alpha_n$ such that $\alpha_n\to0$ as $n\to\infty$.
Since all entries of the matrix $A'=\Sigma^{-1/2}-A$ are some infinitesimals $o(1/n)$,
this readily implies that
\begin{equation}\label{eq:AA}
A'\Psi\stackrel{p}\to 0, 
\end{equation}
where $0$ denotes the two-dimensional zero vector.

Turning back to the 2-dimensional vector $\Psi=(\Psi_1,\Psi_2)^\top$,
decompose it into the sum of independent centered random vectors
\begin{equation}
  \label{eq:Psi-decomp}
\Psi=\sum_{e}(\bar\psi_e,\bar\psi'_e)^\top.  
\end{equation}
The multivariate Lyapunov central limit theorem (e.g.,~\cite[Corollary 18.2]{BhattacharyaR10})
yields the convergence
\begin{equation}
   \Sigma^{-1/2}\Psi\stackrel{d}\to(\eta_1,\eta_2)^\top,
 \label{eq:CLT}
\end{equation}
where $\eta_1$ and $\eta_2$ are independent standard normal random variables.
Using Convergences \eqref{eq:CLT} and \eqref{eq:AA}, we conclude by the version of
Slutsky's theorem for random vectors (e.g., \cite[Lemma 2.8]{Vaart98}) that
$$
A \Psi\stackrel{d}\to(\eta_1,\eta_2)^\top
$$
that is,
\begin{equation}
\of{\frac{2\sqrt2\Psi_1}n,\frac{2\sqrt2\Psi_2}n} \stackrel{d}\to(\eta_1,\eta_2). 
\label{eq:integral_limit_th}
\end{equation}

Recall that our goal is estimating the probability $\prob{\Psi_1=0,\,\Psi_2=0}$.
This can be done by establishing a local version of \eqref{eq:integral_limit_th}.
There is a general way \cite{Mukhin} of converting an \emph{integral} limit theorem (ILT)
for independent random integer vectors
into a \emph{local limit} theorem (LLT), which perfectly suits our needs.
Below we state \cite[Theorem 3]{Mukhin} for the 2-dimensional case.

Let $\langle X,Y\rangle$ denote the scalar product of two vectors $X,Y\in\mathbb{R}^2$.
Let $\|X\|_2$ denote the Euclidean norm and $\|X\|_\infty$ denote the maximum norm in $\mathbb{R}^2$.
For a random vector $X$, let $X^*$ be the symmetrized version of $X$,
that is, $X^*=X+Z$ where $Z$ is an independent copy of $-X$.

For a positive integer $N$, let $X_{N,1},\ldots,X_{N,N}$ be independent random vectors in $\mathbb{Z}^2$.
Set $S_N=\sum_{k=1}^N X_{N,k}$. For $u>0$, define 
$$
\tau(u)=\inf_{\|T\|_2=1}\sum_{k=1}^N\mean\of{\langle X_{N,k}^*,T\rangle^2\mathbb{I}(\|X_{N,k}^*\|_2\leq u)},
$$
where $\mathbb{I}(\mathcal{A})$ denotes the indicator random variable of an event $\mathcal{A}$.
For a real number $x$, let $\{x\}$ denote the minimum distance from $x$ to an integer.
For $T\in\mathbb{R}^2$, we set 
$$
H_N(T)=\sum_{k=1}^N\mean\{\langle X^*_{N,k},T\rangle\}^2.
$$

\begin{proposition}[An ILT-to-LLT conversion (Mukhin~\cite{Mukhin})]
Let $\sigma_N>0$ be such that $\sigma_N\to\infty$ as $N\to\infty$.
Assume that $\sigma_N^{-1}S_N\stackrel{d}\to S$, where $S$ is an absolutely continuous random vector with density $s$.
Assume also that there exists a sequence of reals $m_N\in(4,\sigma_N)$ such that
\begin{enumerate}[(a)]
\item
$\tau(m_N)=\Omega(\sigma_N^2)$ and
\item
$m_N=o\of{\exp\of{\inf_{\|T\|_\infty\le1/2,\,\|T\|_2>1/m_N}H_N(T)}}$.
\end{enumerate}
Then
\begin{equation}
  \label{eq:Mukhin}
\prob{S_N=x}=\sigma_N^{-2} s(\sigma_N^{-1}x)+o(\sigma_N^{-2})  
\end{equation}
uniformly over all $x\in\mathbb{Z}^2$.
\label{th:local_CLT}
\end{proposition}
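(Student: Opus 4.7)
The plan is to prove the local asymptotic \eqref{eq:Mukhin} by a Fourier-analytic argument, which is the standard route for local limit theorems for lattice-valued sums. Since each $X_{N,k}$ takes values in $\mathbb{Z}^2$, so does $S_N$, and the inversion formula reads
$$
\prob{S_N=x}=\frac{1}{(2\pi)^2}\int_{[-\pi,\pi]^2}e^{-i\langle T,x\rangle}\phi_N(T)\,dT,
$$
where $\phi_N(T)=\prod_{k=1}^N\mean e^{i\langle X_{N,k},T\rangle}$. After the change of variables $T\mapsto T/\sigma_N$, the target value $\sigma_N^{-2}s(\sigma_N^{-1}x)$ is precisely what the analogous Fourier inversion of the limit density $s$ produces, so the task reduces to showing that the difference of the two integrals is $o(\sigma_N^{-2})$, uniformly in~$x$.

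I would split the fundamental domain $[-\pi,\pi]^2$ into three regions keyed to the scale $m_N$ and bound $|\phi_N(T)|$ on each. On the near region $\|T\|_2\le A/\sigma_N$, for $A$ arbitrary but fixed, the assumed ILT gives pointwise convergence of $\phi_N(T/\sigma_N)$ to the characteristic function of $S$, and condition (a) supplies a Gaussian majorant, so dominated convergence extracts exactly the asymptotic coming from the density~$s$. On the moderate region $A/\sigma_N<\|T\|_2\le 1/m_N$, I would symmetrize via the identity $|\phi_N(T)|^2=\prod_k\mean\cos\langle X_{N,k}^*,T\rangle$ and use $\cos u\le 1-cu^2$ for $|u|\le\pi$ together with $1-x\le e^{-x}$ to obtain
$$
|\phi_N(T)|^2\le\exp\of{-c\sum_{k=1}^N\mean\bigl(\langle X_{N,k}^*,T\rangle^2\,\mathbb{I}(\|X_{N,k}^*\|_2\le m_N)\bigr)}\le\exp\of{-c\tau(m_N)\|T\|_2^2}.
$$
Hypothesis (a) converts this into the Gaussian-type decay $|\phi_N(T)|\le e^{-c'\sigma_N^2\|T\|_2^2}$, making this annulus contribute $o(\sigma_N^{-2})$. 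On the far region $1/m_N<\|T\|_\infty\le 1/2$, the sharper lattice inequality $1-\cos(2\pi u)\ge c\{u\}^2$ applied factorwise yields $|\phi_N(T)|^2\le\exp(-cH_N(T))$, and hypothesis (b) forces $|\phi_N(T)|=o(1/m_N)$, so the contribution of this region is also $o(\sigma_N^{-2})$ once multiplied by the volume and by the prefactor $(2\pi)^{-2}$.

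The main obstacle will be patching these three pieces together uniformly over $x\in\mathbb{Z}^2$. Conditions (a) and (b) use incompatible norms, $\|\cdot\|_2$ versus $\|\cdot\|_\infty$, reflecting the gap between isotropic Gaussian decay near the origin and lattice-periodic behaviour near the boundary of $[-\pi,\pi]^2$; arranging the interface at $\|T\|_2\asymp 1/m_N$ so that both estimates align, and verifying that the bound produced by $\tau(m_N)$ is strong enough to serve as the dominated-convergence majorant on the near region, are the delicate steps. A secondary subtlety is that nothing in the hypotheses is intrinsically two-dimensional: the infima in the definitions of $\tau$ and $H_N$ must be shown to capture enough of the geometry of $\mathbb{Z}^2$ to prevent $\phi_N$ from concentrating on a proper sublattice, which is exactly the phenomenon ruled out by the direction-uniform nature of those infima.
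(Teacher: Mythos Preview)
The paper does not prove this proposition; it quotes it as the two-dimensional case of Theorem~3 in Mukhin~\cite{Mukhin} and uses it as a black box. Your Fourier-analytic outline---inversion formula, three-region split, ILT plus dominated convergence near the origin, the symmetrized Gaussian bound via $\tau(m_N)$ on the intermediate annulus, and the lattice bound via $H_N$ far from the origin---is the standard route for such local limit theorems and matches in structure what one expects Mukhin's argument to do.

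There is, however, a genuine gap in your far-region step. From $|\phi_N(T)|^2\le\exp(-cH_N(T))$ and condition~(b) you get at best $|\phi_N(T)|=o(m_N^{-c'})$ uniformly on $\{\|T\|_\infty\le\tfrac12,\ \|T\|_2>1/m_N\}$; multiplying by the $O(1)$ volume of that region yields a contribution of order $o(m_N^{-c'})$, not $o(\sigma_N^{-2})$. Nothing in the hypotheses forces $m_N^{-c'}=o(\sigma_N^{-2})$: the only constraint is $4<m_N<\sigma_N$, and in the paper's own application $m_N=5$ is constant while $\sigma_N\asymp n\to\infty$, so your bound would be merely $o(1)$ against a target of $o(n^{-2})$. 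Condition~(b) says only that $\inf H_N-\log m_N\to\infty$; for bounded $m_N$ this gives $\inf H_N\to\infty$ with no rate tied to $\sigma_N$. A correct treatment cannot simply replace $H_N(T)$ by its infimum over the whole far region; one must either integrate the pointwise bound $e^{-cH_N(T)}$ (using that $H_N$ is typically much larger than its infimum, with condition~(a) controlling the behaviour near $\|T\|_2\sim 1/m_N$), or appeal to a sharper interplay between $\tau$ and $H_N$ than you sketch. Your closing paragraph rightly flags the $\|T\|_2\asymp 1/m_N$ interface as delicate, but the specific implication ``$(b)\Rightarrow|\phi_N|=o(1/m_N)\Rightarrow$ contribution $=o(\sigma_N^{-2})$'' is false as written.
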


We now apply Proposition \ref{th:local_CLT} to the random vector
$S_N=\Psi$ decomposed into the sum of $N={n\choose2}$ independent random vectors $X_{N,k}$
according to Equality \eqref{eq:Psi-decomp}. More precisely, we use the equality $\Psi=\sum_{e}(\psi_e,\psi'_e)$,
which is equivalent to \eqref{eq:Psi-decomp}. Let $S=(\eta_1,\eta_2)$ be the standard normal
vector in $\mathbb{R}^2$ and, correspondingly,
$s(x_1,x_2)=\frac{1}{2\pi}e^{-x_1^2/2-x_2^2/2}$ be the density function of
the standard bivariate normal distribution. Due to \eqref{eq:integral_limit_th},
the convergence $\sigma_N^{-1}S_N\stackrel{d}\to S$ holds with the rescaling factor
$\sigma_N=2^{-3/2}n$.

We need to show that Conditions (a) and (b) are fulfilled for some sequence of reals $m_N$.
Set $m_N=5$ for all $N$.
Suppose that $X_{N,k}=(\psi_e,\psi'_e)$ for some $e\in\bigcup_{a=1}^4S_a\cup\bigcup_{a=1}^4S'_a$.
This random vector takes on the value $(0,0)$
with probability $1/2$ and an exactly one non-zero value $(v_1,v_2)\in\{-2,-1,0,-1,-2\}^2$
with the same probability. It follows that $X_{N,k}^*=(0,0)$ still with probability $1/2$
and $X_{N,k}^*=(v_1,v_2)$ or $X_{N,k}^*=(-v_1,-v_2)$ with probability $1/4$ in each case.
This implies that $\tau(m_N)=\inf_{\|T\|_2=1}\sum_{k=1}^N\mean\langle X_{N,k}^*,T\rangle^2$.
To estimate this value, consider $T=(t_1,t_2)$ with $\|T\|_2=1$. Note that
$\mean\langle X_{N,k}^*,T\rangle^2$ is equal to $\frac12(t_1+t_2)^2$ if $e\in S_1\cap S'_1$
and to $\frac12(2t_1+t_2)^2$ if $e\in S_3\cap S'_1$. It is easy to show\footnote{%
  If $t_2=0$, then $t_1=\pm1$, and the claim is straightforward. Otherwise, let $t_1=\alpha t_2$
  for a real $\alpha$. The condition $\|T\|_2=1$ implies that $t_2^2(1+\alpha^2)=1$ and, hence,
  $\frac12(t_1+t_2)^2+\frac12(2t_1+t_2)^2=f(\alpha)$ where
  $f(\alpha)=((1+\alpha)^2+(1+2\alpha)^2)/(2+2\alpha^2)$. The function $f$ attains its global
  minimum $(7-3\sqrt5)/4>1/16$ at $\alpha=(1-\sqrt5)/2$.}
that the sum of these two values is larger than $1/16$. It follows that
$$
\tau(m_N)>\frac1{16}\min\of{|S_1\cap S'_1|,|S_3\cap S'_1|}=\of{2^{-11}+o(1)}n^2,
$$
yielding Condition~(a).

In order to verify Condition (b), consider $T=(t_1,t_2)$ such that $\|T\|_\infty\le1/2$ and $\|T\|_2>1/5$.
It suffices to show that $H_N(T)=\Omega(N)$ with the constant factor hidden in the $\Omega$-notation
not depending on the particular choice of $T$. Note that $\langle X^*_{N,k},T\rangle=\pm(t_1+t_2)$
with probability $1/2$
if $X_{N,k}=(\psi_e,\psi'_e)$ for $e\in S_1\cap S'_1$ and $\langle X^*_{N,k},T\rangle=\pm(2t_1+t_2)$
with the same probability if $e\in S_3\cap S'_1$. It is not hard to see\footnote{%
  Indeed, $|t_1+t_2|\le|t_1|+|t_2|\le|t_1|+1/2$. On the other hand, if $|t_1|\le1/\sqrt5$,
  then $|t_1+t_2|\ge|t_2|-|t_1|>(1/5-t_1^2)^{1/2}-|t_1|$. Since the function $f(t)=(1/5-t^2)^{1/2}-t$
  is continuous at $t=0$, there is $\delta\in(0,1/4)$ such that $1/4<|t_1+t_2|<1/2+\delta$
  whenever $|t_1|\le\delta$. Therefore $|\{t_1+t_2\}|>1/4$ if $|t_1|\le\delta$.
  If $|t_1|>\delta$, then $|t_1|\in(\delta,1/2]$, and $|\{t_1+t_2\}|+|\{2t_1+t_2\}|>\delta$
  because $(2t_1+t_2)-(t_1+t_2)=t_1$.}
that $|\{t_1+t_2\}|+|\{2t_1+t_2\}|\ge\delta$ for a constant $\delta>0$ independent of $T$.
It follows that
$$
H_N(T)\ge\frac{\delta^2}4\min\of{|S_1\cap S'_1|,|S_3\cap S'_1|}=\Omega(n^2),
$$
as needed.

Therefore, Proposition~\ref{th:local_CLT} is applicable. For $x=(0,0)$, it gives us
$$
\prob{\Psi_1=0,\,\Psi_2=0}=\frac{4}{\pi n^2}+o\of{\frac{1}{n^2}},
$$
completing the proof of Equality \eqref{eq:cond-ind}.
Note that the infinitesimal hidden in the additive term $o(n^{-2})$ can be chosen the same
for all standard quadruples $U,V,U',V'$ due to the uniform bounds for the corresponding
term $o(\sigma_N^{-2})$ in \eqref{eq:Mukhin} stated in \cite[Theorem 5]{Mukhin}.
The proof of Claim \ref{cl:0} is therewith complete. 
\end{subproofof}

\section{Comparing WM and CR}\label{s:WM-vs-CR}

\subsection{Color refinement}\label{ss:CR}

We begin with a formal description of the \emph{color refinement} algorithm (\emph{CR} for short).
CR operates on vertex-colored graphs but applies also to uncolored graphs
by assuming that their vertices are colored uniformly.
An input to the algorithm consists either of a single graph or a pair of graphs.
Consider the former case first.
For an input graph $G$ with initial coloring $C_0$, CR iteratively computes new colorings
\begin{equation}
  \label{eq:refinement}
C_{i}(x)=\of{C_{i-1}(x),\Mset{C_{i-1}(y)}_{y\in N(x)}},  
\end{equation}
where $\Mset{}$ denotes a multiset and $N(x)$ is the neighborhood of a vertex $x$.
Denote the partition of $V(G)$ into the color classes of $C_i$ by $\cP_i$.
Note that each subsequent partition $\cP_{i+1}$ is either finer than or equal to $\cP_i$.
If $\cP_{i+1}=\cP_i$, then $\cP_{j}=\cP_i$ for all $j\ge i$.
Suppose that the color partition stabilizes in the $t$-th round,
that is, $t$ is the minimum number such that $\cP_t=\cP_{t-1}$.
CR terminates at this point and outputs the coloring $C=C_t$.
Note that if the colors are computed exactly as defined by \refeq{refinement},
they will require exponentially long color names. To prevent this,
the algorithm renames the colors after each refinement step, using the same set
of no more than $n$ color names.
We say that a graph $G$ is \emph{CR-discrete} if $C(x)\ne C(x')$ for all $x\ne x'$.

If an input consists of two graphs $G$ and $H$, then it is convenient to
assume that their vertex sets $V(G)$ and $V(H)$ are disjoint.
In this case, CR is run on the vertex-disjoint union of $G$ and $H$ as described above.
The vertex colorings of $G$ and $H$ define an initial coloring $C_0$ of
the union $V(G)\cup V(H)$, which is iteratively refined according to \refeq{refinement}.
The color partition $\cP_i$ is defined exactly as above but now on the whole
set $V(G)\cup V(H)$. As soon as the color partition of $V(G)\cup V(H)$ stabilizes\footnote{%
Note that the stabilization on each of the sets $V(G)$ and $V(H)$ can occur earlier than on~$V(G)\cup V(H)$.},
CR terminates and outputs the current coloring $C=C_t$ of $V(G)\cup V(H)$.

We say that CR \emph{distinguishes} $G$ and $H$ if $\Mset{C(x)}_{x\in V(G)}\ne\Mset{C(x)}_{x\in V(H)}$.
A graph $G$ is called \emph{CR-identifiable} if it is distinguishable by CR from every non-isomorphic~$H$.

Note that every CR-discrete graph $G$ is CR-identifiable.
Indeed, suppose that $\Mset{C(x)}_{x\in V(G)}=\Mset{C(x)}_{x\in V(H)}$.
This means that $H$ is CR-discrete as well and that there is a bijection
$f\function{V(G)}{V(H)}$ such that $C(f(x))=C(x)$ for all $x\in V(G)$.
If there are vertices $u$ and $v$ in $G$ such that
$u$ and $v$ are adjacent but $f(u)$ and $f(v)$ are not or vice versa,
then the partition of $V(G)\cup V(H)$ into the color classes of $C$
is still unstable because, by the refinement rule~\eqref{eq:refinement},
the vertices $u$ and $f(u)$ have to receive distinct colors in the next round.
This contradiction shows that the mapping $f$ must be an isomorphism from $G$ to~$H$.

\subsection{Proof of Theorem \ref{thm:WM-vs-CR}: Parts 1 and 2}

Parts 1 and 2 of Theorem \ref{thm:WM-vs-CR} follow immediately from the lemma below.
We prove this lemma by a direct combinatorial argument. Alternatively, one can
use an algebraic approach in \cite[Theorem 2]{PowersS82} or the connection to
finite variable logics exploited in \cite[Lemma~4]{Dvorak10}.

\begin{lemma}\label{lem:CR}
  Let $G$ and $H$ be uncolored $n$-vertex graphs (the case $G=H$ is not excluded).
  Let $x\in V(G)$, $x'\in V(H)$, and $k$ be an arbitrary non-negative integer.
  Then $C_k(x)\ne C_k(x')$ whenever $w^G_{k}(x)\ne w^H_{k}(x')$.
\end{lemma}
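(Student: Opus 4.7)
The plan is to prove Lemma~\ref{lem:CR} by induction on $k$, establishing the contrapositive: if $C_k(x)=C_k(x')$, then $w^G_k(x)=w^H_k(x')$. The key recursive identity to exploit is
\[
w^G_k(x)=\sum_{y\in N_G(x)} w^G_{k-1}(y),
\]
which expresses $w^G_k(x)$ as a sum over the $(k-1)$-walk counts at the neighbors. This meshes perfectly with the CR refinement rule~\eqref{eq:refinement}, which records exactly the multiset of $C_{k-1}$-colors in the neighborhood.

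For the base case $k=0$, the statement is vacuous because $w^G_0(x)=w^H_0(x')=1$ for every choice of $x,x'$, so the hypothesis $w^G_0(x)\ne w^H_0(x')$ is never satisfied. (Equivalently, $C_0$ is constant on uncolored graphs, and the single trivial walk of length~$0$ makes the walk count identically~$1$.)

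For the inductive step, assume the lemma for $k-1$ and suppose $C_k(x)=C_k(x')$. By the definition \eqref{eq:refinement} of the refinement (modulo renaming of color names, which does not affect color equalities), this equality unpacks to
\[
C_{k-1}(x)=C_{k-1}(x')\quad\text{and}\quad \Mset{C_{k-1}(y)}_{y\in N_G(x)}=\Mset{C_{k-1}(y'')}_{y''\in N_H(x')}.
\]
The multiset equality provides a bijection $\phi\function{N_G(x)}{N_H(x')}$ with $C_{k-1}(y)=C_{k-1}(\phi(y))$ for every $y\in N_G(x)$. Applying the inductive hypothesis (in contrapositive form) to each matched pair $(y,\phi(y))$ yields $w^G_{k-1}(y)=w^H_{k-1}(\phi(y))$. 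Summing over $y\in N_G(x)$ and reindexing by $\phi$ gives
\[
w^G_k(x)=\sum_{y\in N_G(x)}w^G_{k-1}(y)=\sum_{y''\in N_H(x')}w^H_{k-1}(y'')=w^H_k(x'),
\]
completing the induction.

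I do not foresee a genuine obstacle: the recursion for walk counts and the definition of CR are designed to match. The only points requiring a line of care are (i)~justifying that $C_k(x)=C_k(x')$ really does imply the multiset equality of neighborhood colors at step $k-1$ (despite CR renaming colors after each round, the equivalence relation induced by the coloring is what matters), and (ii)~noting that the lemma applies equally to the case $G=H$ and to a pair of graphs, since CR is then simply run on the disjoint union and the recursion above does not cross between the two components.
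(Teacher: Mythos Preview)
Your proof is correct and follows essentially the same approach as the paper: induction on $k$ via the contrapositive, using the recursion $w^G_k(x)=\sum_{y\in N_G(x)}w^G_{k-1}(y)$ together with the multiset equality of neighborhood colors given by the refinement rule~\eqref{eq:refinement}. The only cosmetic difference is that you phrase the inductive step via an explicit bijection $\phi$ between neighborhoods, whereas the paper passes directly from the multiset equality of colors to a multiset equality of walk counts; the content is identical.
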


\begin{proof}
  Using the induction on $k$, we prove that $w^G_{k}(x)=w^H_{k}(x')$ whenever $C_k(x)=C_k(x')$.
  In the base case of $k=0$, these equalities are equivalent just because they are both true
  by definition (recall that $w^G_0(x)=0$). Assume that $C_k(y)=C_k(y')$ implies $w^G_{k}(y)=w^H_{k}(y')$
  for all $y\in V(G)$ and $y'\in V(H)$. Let $C_{k+1}(x)=C_{k+1}(x')$. By the definition \eqref{eq:refinement} of the refinement step,
  we have $\Mset{C_k(y)}_{y\in N(x)}=\Mset{C_k(y)}_{y\in N(x')}$.
  Using the induction assumption, from here we derive the equality
  $\Mset{w^G_{k}(y)}_{y\in N(x)}=\Mset{w^H_{k}(y)}_{y\in N(x')}$.
  The equality $w^G_{k+1}(x)=w^H_{k+1}(x')$ now follows by noting that
  $w^G_{k+1}(x)=\sum_{y\in N(x)}w^G_{k}(y)$.
\end{proof}

\subsection{Proof of Theorem \ref{thm:WM-vs-CR}: Part 3}

We now construct a graph $G$ with the three desired properties (a)--(c).
Note that this graph can be used to produce infinitely many
examples separating the strength of WM and CR. The simplest way to do so
is just to add a non-isomorphic CR-discrete graph as another connected component.
An infinite supply of CR-discrete graphs is ensured by the result of
Babai, Erd\H{o}s, and Selkow \cite{BabaiES80} (as well as by Part 1 of Theorem \ref{thm:3paths})
or, alternatively, one can take an arbitrary tree with no non-trivial automorphism.

Let $\bZ_n$ denote the cyclic group with elements $0,1,\ldots,n-1$
and operation being the addition modulo $n$. Our construction is based
on the well-known Shrikhande graph; see, e.g., \cite{Sane15}.
This is the Cayley graph of the group $\bZ_4\times \bZ_4$
with connection set $\{\pm(1,0),\pm(0,1),\pm(1,1)\}$.
A natural drawing of the Shrikhande graph on the torus can be seen in both parts of Fig.~\ref{fig:AB}.

Recall that a graph $G$ is \emph{strongly regular} with parameters
$(n,d,\lambda,\mu)$ if it has $n$ vertices, every vertex in $G$ has $d$ neighbors
(i.e., $G$ is \emph{regular} of degree $d$), 
every two adjacent vertices of $G$ have $\lambda$ common neighbors, 
and every two non-adjacent vertices have $\mu$ common neighbors.
Denote the Shrikhande graph by $S$.
The following properties of this graph are useful for our purposes.
\begin{enumerate}
\item[(P1)]
  $S$ is a strongly regular graph with parameters $(16,6,2,2)$.
\item[(P2)]
  $S$ is an arc-transitive graph, that is, every two ordered pairs of adjacent
  vertices in $S$ can be mapped onto each other by automorphisms of~$S$.
\item[(P3)]
  The pairs $u,v$ of non-adjacent vertices in the graph are split into two categories
  depending on whether the two common neighbors of $u$ and $v$ are adjacent or not.
  In the former case we call $u,v$ a \emph{$D$-pair} (as $u$, $v$, and their two neighbors
  induce a diamond graph). In the latter case we call $u,v$ a \emph{$Q$-pair}
  (as $u$, $v$, and their neighbors induce a quadrilateral graph~$C_4$).
\end{enumerate}

\begin{figure}
\raisebox{-20mm}{$A$}\hspace{-6mm}
\begin{tikzcd}[
  cells={nodes={shape=circle, draw=black}},
  tikz/.code=\tikzset{#1},
  tikz={
    v/.style={fill=none},
    a/.style={draw=none, fill=none},
    r/.style={fill=pred},
    b/.style={fill=blue},
    g/.style={fill=pgreen},  
  },
  arrows=-, 
  ]
|[a]| & |[a]|  & |[a]|  & |[a]|  & |[a]|  & |[a]|  \\
|[a]|\rar[-latex, shorten <=6mm] & |[v]|\rar\uar[-latex, shorten >=4mm]\urar[-stealth, shorten >=9mm]  & |[v]|\rar\uar[-latex, shorten >=4mm]\urar[-{stealth}{stealth}, shorten >=9mm]  & |[v]|\rar\uar[-latex, shorten >=4mm]\urar[-{stealth}{stealth}{stealth}, shorten >=9mm]  & |[v]|\uar[-latex, shorten >=4mm]\rar[-latex, shorten >=6mm]\urar[-latex, shorten >=9mm]  & |[a]|  \\
|[a]|\rar[-latex, shorten <=6mm]\urar[-stealth, shorten <=9mm] & |[v]|\rar\uar\urar  & |[v]|\rar\uar\urar  & |[r]|\rar\uar\urar\ular[phantom, "a_1", near start]  & |[v]|\uar\rar[-latex, shorten >=6mm]\urar[-stealth, shorten >=9mm]  & |[a]|  \\
|[a]|\rar[-latex, shorten <=6mm]\urar[-{stealth}{stealth}, shorten <=9mm] & |[v]|\rar\uar\urar  & |[b]|\rar\uar\urar\ular[phantom, "a_2", near start]  & |[v]|\rar\uar\urar  & |[g]|\uar\rar[-latex, shorten >=6mm]\urar[-{stealth}{stealth}, shorten >=9mm]\drar[phantom, "a_3", near start]  & |[a]|  \\
|[a]|\rar[-latex, shorten <=6mm]\urar[-{stealth}{stealth}{stealth}, shorten <=9mm] & |[v]|\rar\uar\urar  & |[v]|\rar\uar\urar  & |[v]|\rar\uar\urar  & |[v]|\uar\rar[-latex, shorten >=6mm]\urar[-{stealth}{stealth}{stealth}, shorten >=9mm]  & |[a]|  \\
|[a]|\urar[-latex, shorten <=9mm] & |[a]|\uar[-latex, shorten <=4mm]\urar[-stealth, shorten <=9mm]  & |[a]|\uar[-latex, shorten <=4mm]\urar[-{stealth}{stealth}, shorten <=9mm]  & |[a]|\uar[-latex, shorten <=4mm]\urar[-{stealth}{stealth}{stealth}, shorten <=9mm]  & |[a]|\uar[-latex, shorten <=4mm]  & |[a]|   
\end{tikzcd}
\quad\raisebox{-20mm}{$B$}\hspace{-6mm}
\begin{tikzcd}[
  cells={nodes={shape=circle, draw=black}},
  tikz/.code=\tikzset{#1},
  tikz={
    v/.style={fill=none},
    a/.style={draw=none, fill=none},
    r/.style={fill=red},
    b/.style={fill=blue},
    g/.style={fill=green},  
  },
  arrows=-, 
  ]
|[a]| & |[a]|  & |[a]|  & |[a]|  & |[a]|  & |[a]|  \\
|[a]|\rar[-latex, shorten <=6mm] & |[v]|\rar\uar[-latex, shorten >=4mm]\urar[-stealth, shorten >=9mm]  & |[v]|\rar\uar[-latex, shorten >=4mm]\urar[-{stealth}{stealth}, shorten >=9mm]  & |[v]|\rar\uar[-latex, shorten >=4mm]\urar[-{stealth}{stealth}{stealth}, shorten >=9mm]  & |[v]|\uar[-latex, shorten >=4mm]\rar[-latex, shorten >=6mm]\urar[-latex, shorten >=9mm]  & |[a]|  \\
|[a]|\rar[-latex, shorten <=6mm]\urar[-stealth, shorten <=9mm] & |[v]|\rar\uar\urar  & |[v]|\rar\uar\urar  & |[b]|\rar\uar\urar\ular[phantom, "b_2", near start]  & |[v]|\uar\rar[-latex, shorten >=6mm]\urar[-stealth, shorten >=9mm]  & |[a]|  \\
|[a]|\rar[-latex, shorten <=6mm]\urar[-{stealth}{stealth}, shorten <=9mm] & |[v]|\rar\uar\urar  & |[r]|\rar\uar\urar\ular[phantom, "b_1", near start]  & |[v]|\rar\uar\urar  & |[g]|\uar\rar[-latex, shorten >=6mm]\urar[-{stealth}{stealth}, shorten >=9mm]\drar[phantom, "b_3", near start]  & |[a]|  \\
|[a]|\rar[-latex, shorten <=6mm]\urar[-{stealth}{stealth}{stealth}, shorten <=9mm] & |[v]|\rar\uar\urar  & |[v]|\rar\uar\urar  & |[v]|\rar\uar\urar  & |[v]|\uar\rar[-latex, shorten >=6mm]\urar[-{stealth}{stealth}{stealth}, shorten >=9mm]  & |[a]|  \\
|[a]|\urar[-latex, shorten <=9mm] & |[a]|\uar[-latex, shorten <=4mm]\urar[-stealth, shorten <=9mm]  & |[a]|\uar[-latex, shorten <=4mm]\urar[-{stealth}{stealth}, shorten <=9mm]  & |[a]|\uar[-latex, shorten <=4mm]\urar[-{stealth}{stealth}{stealth}, shorten <=9mm]  & |[a]|\uar[-latex, shorten <=4mm]  & |[a]|   
\end{tikzcd}
\caption{Two colored versions of the Shrikhande graph.
  Unexposed edges are obtainable by identification of the arrows according to
the standard square representation of a torus.}
\label{fig:AB}
\end{figure}

We now define two colored copies $A$ and $B$ of the Shrikhande graph.
Three vertices of $A$, denoted by $a_1$, $a_2$, and $a_3$, are
individualized by coloring them in red, blue, and green respectively
in such a way that
\begin{enumerate}
\item[(A1)]
  $a_1$ and $a_2$ are adjacent,
\item[(A2)]
  $a_1$ and $a_3$ are non-adjacent and form a $Q$-pair, and
\item[(A3)]
  $a_2$ and $a_3$ are non-adjacent and form a $D$-pair.
\end{enumerate}
The remaining, non-individualized vertices are considered also colored, all in the same color.
The graph $B$ is obtained from $A$ by interchanging the colors of $a_1$ and $a_2$; see Fig.~\ref{fig:AB}.
We denote the individualized vertices in $B$ by $b_1$, $b_2$, and $b_3$ so that
$a_i$ and $b_i$ are equally colored for each $i=1,2,3$.

We remark that $A$ is defined by Conditions (A1)--(A3) uniquely up to isomorphism
of vertex-colored graphs. Indeed, the adjacent pair $a_1,a_2$ can be chosen
arbitrarily due to Property (P2). Assuming a natural coordinatization for Fig.~\ref{fig:AB}
where the vertex in the left bottom corner is $(0,0)\in\bZ_4\times \bZ_4$,
we have $a_1=(2,2)$ and $a_2=(1,1)$. Besides $a_3=(3,1)$, there is only one more
vertex $a'_3=(1,3)$ satisfying Conditions (A2)--(A3). The choice of $a'_3$
instead of $a_3$ yields, however, a graph isomorphic to $A$ due to the observation that
the transposition $(x,y)\mapsto(y,x)$ is an automorphism of the Shrikhande graph.

This argument also implies that $B$ is, up to isomorphism of vertex-colored graphs,
uniquely determined by the conditions that $b_1$ and $b_2$ are adjacent,
$b_1$ and $b_3$ form a $D$-pair, and $b_2$ and $b_3$ form a $Q$-pair.
The colored graphs $A$ and $B$ are clearly non-isomorphic.

Before presenting further details, we give a brief outline of the rest of the proof.
We will begin with establishing some useful properties of $A$ and $B$. 
Though these graphs are non-isomorphic, it is useful to notice that they
are still quite similar in the sense that they are indistinguishable by one
round of CR (Claim \ref{cl:1eq}). On the other hand, both $A$ and $B$ are
CR-discrete (Claim \ref{cl:discr}) and are, therefore, distinguished
after CR makes sufficiently many rounds (Claim \ref{cl:dist}).
The desired graph $G$ will be constructed from $A$ and $B$ by
connecting the equally colored vertices, i.e., $a_i$ and $b_i$,
via new edges and vertices. While $a_1,a_2,a_3,b_1,b_2,b_3$ are not
colored any more in $G$, their neighborhoods are modified so that
their colors are actually simulated by iterated degrees.
This allows us to derive from Claims \ref{cl:discr} and \ref{cl:dist}
that $G$ is CR-discrete (Claim \ref{cl:G-discr}). On the other hand,
$G$ is not WM-discrete (Claim \ref{cl:G-not-discr}). In order to show
that some vertices in $G$ have the same numbers of outgoing walks of each length,
we use basic properties of strongly regular graphs (Claim \ref{cl:reg})
and the fact that a walk can leave $A$ or $B$ only via one of the vertices $a_1,a_2,a_3,b_1,b_2,b_3$
(and here a crucial role is also played by Claim \ref{cl:1eq}). Finally,
we argue that $G$ is not WM-identifiable (Claim \ref{cl:G-not-id}).
Indeed, if we construct another graph $G'$ similarly to $G$ but using another copy of $A$ in place of $B$,
then $G$ and $G'$ will have the same walk matrix.

We now proceed with the detailed proof.

\setcounter{claim}{0}

\begin{claim}\label{cl:1eq}
  After the first round of CR, the vertex-colored graphs $A$ and $B$ are still
  indistinguishable, that is, $\Mset{C_1(x)}_{x\in V(A)}=\Mset{C_1(x)}_{x\in V(B)}$.
\end{claim}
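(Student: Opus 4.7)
The plan is to compute $C_1$ explicitly on both $A$ and $B$ and to check that the resulting color multisets coincide. The initial coloring $C_0$ uses four colors on each side: red, blue, green (on the individualized vertices) and a common color white on the remaining thirteen vertices. Because the Shrikhande graph is $6$-regular by (P1), the 1-round color of a white vertex $v$ is determined by a triple $(r,b,g)\in\{0,1\}^3$ counting whether each of the individualized neighbors is red, blue, or green (the remaining $6-r-b-g$ neighbors are then white); for an individualized vertex, $C_1$ is determined by its own color together with which of the other two individualized vertices are in its neighborhood.

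First I would verify that the three individualized vertices contribute identically on both sides. Since $a_1,a_2$ are adjacent and $a_3$ is non-adjacent to both by (A1)--(A3), and identically for $b_1,b_2,b_3$ by (B1)--(B3), one reads off directly $C_1(a_1)=C_1(b_1)=(R,\{B,5W\})$, $C_1(a_2)=C_1(b_2)=(B,\{R,5W\})$, and $C_1(a_3)=C_1(b_3)=(G,\{6W\})$, so these six vertices contribute the same multiset on both sides.

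Next, for the white vertices, let $V_T$ for $T\subseteq\{1,2,3\}$ denote the set of non-individualized vertices whose set of individualized neighbors is exactly $\{a_i:i\in T\}$; this is the same set of vertices whether we work in $A$ or in $B$, since $A$ and $B$ share the same underlying graph and the same three individualized positions $a_1=b_2$, $a_2=b_1$, $a_3=b_3$. Because $B$ is obtained from $A$ by swapping the colors of $a_1$ and $a_2$, the 1-round color in $B$ of any vertex of $V_T$ coincides with the 1-round color in $A$ of any vertex of $V_{\sigma(T)}$, where $\sigma$ is the transposition of $1$ and $2$ (applied to sets componentwise). The desired multiset equality for the white vertices therefore reduces to $|V_T|=|V_{\sigma(T)}|$ for every $T$, which in turn reduces to the two genuine equalities $|V_{\{1\}}|=|V_{\{2\}}|$ and $|V_{\{1,3\}}|=|V_{\{2,3\}}|$, since $\sigma$ fixes the other subsets.

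The sole substantive step, and the expected main obstacle, is establishing these two equalities. I plan to do so using only the strong-regularity parameters from (P1). With $\lambda=\mu=2$ on $S$, every pair $\{a_i,a_j\}$ has exactly two common neighbors, and by (A1)--(A3) the third individualized vertex $a_k$ is never among them; hence $|V_{\{i,j\}}|+|V_{\{1,2,3\}}|=2$ for every unordered pair $\{i,j\}$, and consequently $|V_{\{1,2\}}|=|V_{\{1,3\}}|=|V_{\{2,3\}}|$. Combining this with the two identities $|V_{\{1\}}|+|V_{\{1,2\}}|+|V_{\{1,3\}}|+|V_{\{1,2,3\}}|=5$ and $|V_{\{2\}}|+|V_{\{1,2\}}|+|V_{\{2,3\}}|+|V_{\{1,2,3\}}|=5$ (the white neighbors of $a_1$ and of $a_2$, obtained by subtracting the single individualized neighbor from degree $6$), the desired $|V_{\{1\}}|=|V_{\{2\}}|$ and $|V_{\{1,3\}}|=|V_{\{2,3\}}|$ fall out by a one-line subtraction. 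Notice that the argument uses nothing about $S$ beyond strong regularity with $\lambda=\mu$, so the $D$/$Q$ distinction between (A2) and (A3) is genuinely invisible to a single round of CR; this asymmetry will only start to manifest itself in subsequent claims.
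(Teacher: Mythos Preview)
Your proof is correct and follows essentially the same approach as the paper: both arguments observe that, by regularity, the $C_1$-color of a non-individualized vertex is determined by which of the three individualized vertices lie in its neighborhood, and then check that swapping the red and blue colors permutes these classes in a size-preserving way. Your derivation of the key size identities $|V_{\{1,3\}}|=|V_{\{2,3\}}|$ and $|V_{\{1\}}|=|V_{\{2\}}|$ purely from the strong-regularity parameter $\lambda=\mu=2$ is in fact a bit more self-contained than the paper, which states the corresponding equality $|(N(a_1)\setminus N(a_2))\cap N(a_3)|=|(N(a_2)\setminus N(a_1))\cap N(a_3)|$ and then refers to the explicit coloring in Fig.~\ref{fig:1eq}.
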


\newsavebox{\vvv}
\savebox{\vvv}{\begin{tikzcd}[
  cells={nodes={shape=circle, draw=black, minimum size=3mm}},
  tikz/.code=\tikzset{#1},
  tikz={
    rb/.style={node split={90,210,330}, node split color 3=white},
  },
  arrows=-, 
  ]
  |[rb]|
\end{tikzcd}}
\begin{figure}
\raisebox{-20mm}{$A$}\hspace{-6mm}
\begin{tikzcd}[
  cells={nodes={shape=circle, draw=black}},
  tikz/.code=\tikzset{#1},
  tikz={
    v/.style={fill=none},
    a/.style={draw=none, fill=none},
    r/.style={fill=pred},
    b/.style={fill=blue},
    g/.style={fill=pgreen},
    rbg/.style={node split={90,210,330}},
    bg/.style={node split={90,210,330}, node split color 1=white},
    rg/.style={node split={90,210,330}, node split color 2=white},
    rb/.style={node split={90,210,330}, node split color 3=white},
    rr/.style={node split={90,210,330}, node split color 2=white, node split color 3=white},
    bb/.style={node split={90,210,330}, node split color 1=white, node split color 3=white},
    gg/.style={node split={90,210,330}, node split color 1=white, node split color 2=white},    
  },
  arrows=-, 
  ]
|[a]| & |[a]|  & |[a]|  & |[a]|  & |[a]|  & |[a]|  \\
|[a]|\rar[-latex, shorten <=6mm] & |[v]|\rar\uar[-latex, shorten >=4mm]\urar[-stealth, shorten >=9mm]  & |[v]|\rar\uar[-latex, shorten >=4mm]\urar[-{stealth}{stealth}, shorten >=9mm]  & |[rr]|\rar\uar[-latex, shorten >=4mm]\urar[-{stealth}{stealth}{stealth}, shorten >=9mm]  & |[rr]|\uar[-latex, shorten >=4mm]\rar[-latex, shorten >=6mm]\urar[-latex, shorten >=9mm]  & |[a]|  \\
|[a]|\rar[-latex, shorten <=6mm]\urar[-stealth, shorten <=9mm] & |[gg]|\rar\uar\urar  & |[rb]|\rar\uar\urar  & |[r]|\rar\uar\urar  & |[rg]|\uar\rar[-latex, shorten >=6mm]\urar[-stealth, shorten >=9mm]  & |[a]|  \\
|[a]|\rar[-latex, shorten <=6mm]\urar[-{stealth}{stealth}, shorten <=9mm] & |[bg]|\rar\uar\urar  & |[b]|\rar\uar\urar  & |[rbg]|\rar\uar\urar  & |[g]|\uar\rar[-latex, shorten >=6mm]\urar[-{stealth}{stealth}, shorten >=9mm]  & |[a]|  \\
|[a]|\rar[-latex, shorten <=6mm]\urar[-{stealth}{stealth}{stealth}, shorten <=9mm] & |[bb]|\rar\uar\urar  & |[bb]|\rar\uar\urar  & |[gg]|\rar\uar\urar  & |[gg]|\uar\rar[-latex, shorten >=6mm]\urar[-{stealth}{stealth}{stealth}, shorten >=9mm]  & |[a]|  \\
|[a]|\urar[-latex, shorten <=9mm] & |[a]|\uar[-latex, shorten <=4mm]\urar[-stealth, shorten <=9mm]  & |[a]|\uar[-latex, shorten <=4mm]\urar[-{stealth}{stealth}, shorten <=9mm]  & |[a]|\uar[-latex, shorten <=4mm]\urar[-{stealth}{stealth}{stealth}, shorten <=9mm]  & |[a]|\uar[-latex, shorten <=4mm]  & |[a]|   
\end{tikzcd}
\quad\raisebox{-20mm}{$B$}\hspace{-6mm}
\begin{tikzcd}[
  cells={nodes={shape=circle, draw=black}},
  tikz/.code=\tikzset{#1},
  tikz={
    v/.style={fill=none},
    a/.style={draw=none, fill=none},
    r/.style={fill=pred},
    b/.style={fill=blue},
    g/.style={fill=pgreen},
    rbg/.style={node split={90,210,330}},
    bg/.style={node split={90,210,330}, node split color 1=white},
    rg/.style={node split={90,210,330}, node split color 2=white},
    rb/.style={node split={90,210,330}, node split color 3=white},
    rr/.style={node split={90,210,330}, node split color 2=white, node split color 3=white},
    bb/.style={node split={90,210,330}, node split color 1=white, node split color 3=white},
    gg/.style={node split={90,210,330}, node split color 1=white, node split color 2=white},       
  },
  arrows=-, 
  ]
|[a]| & |[a]|  & |[a]|  & |[a]|  & |[a]|  & |[a]|  \\
|[a]|\rar[-latex, shorten <=6mm] & |[v]|\rar\uar[-latex, shorten >=4mm]\urar[-stealth, shorten >=9mm]  & |[v]|\rar\uar[-latex, shorten >=4mm]\urar[-{stealth}{stealth}, shorten >=9mm]  & |[bb]|\rar\uar[-latex, shorten >=4mm]\urar[-{stealth}{stealth}{stealth}, shorten >=9mm]  & |[bb]|\uar[-latex, shorten >=4mm]\rar[-latex, shorten >=6mm]\urar[-latex, shorten >=9mm]  & |[a]|  \\
|[a]|\rar[-latex, shorten <=6mm]\urar[-stealth, shorten <=9mm] & |[gg]|\rar\uar\urar  & |[rb]|\rar\uar\urar  & |[b]|\rar\uar\urar  & |[bg]|\uar\rar[-latex, shorten >=6mm]\urar[-stealth, shorten >=9mm]  & |[a]|  \\
|[a]|\rar[-latex, shorten <=6mm]\urar[-{stealth}{stealth}, shorten <=9mm] & |[rg]|\rar\uar\urar  & |[r]|\rar\uar\urar  & |[rbg]|\rar\uar\urar  & |[g]|\uar\rar[-latex, shorten >=6mm]\urar[-{stealth}{stealth}, shorten >=9mm]  & |[a]|  \\
|[a]|\rar[-latex, shorten <=6mm]\urar[-{stealth}{stealth}{stealth}, shorten <=9mm] & |[rr]|\rar\uar\urar  & |[rr]|\rar\uar\urar  & |[gg]|\rar\uar\urar  & |[gg]|\uar\rar[-latex, shorten >=6mm]\urar[-{stealth}{stealth}{stealth}, shorten >=9mm]  & |[a]|  \\
|[a]|\urar[-latex, shorten <=9mm] & |[a]|\uar[-latex, shorten <=4mm]\urar[-stealth, shorten <=9mm]  & |[a]|\uar[-latex, shorten <=4mm]\urar[-{stealth}{stealth}, shorten <=9mm]  & |[a]|\uar[-latex, shorten <=4mm]\urar[-{stealth}{stealth}{stealth}, shorten <=9mm]  & |[a]|\uar[-latex, shorten <=4mm]  & |[a]|   
\end{tikzcd}
\caption{The colorings of $A$ and $B$ after the first color refinement round.
  For each $i=1,2,3$, the vertices $a_i$ and $b_i$ have the same unique color.
  The color of each non-individualized vertex is determined by its adjacency to the individualized vertices.
  For example, the color \raisebox{1mm}{\scalebox{.75}{\usebox{\vvv}}} of a vertex in $A$ means
  that this vertex is adjacent to $a_1$ and $a_2$ but not to~$a_3$.
}
\label{fig:1eq}
\end{figure}

\begin{subproof}
  Since $A$ is a regular graph, the color $C_1(x)$ of $x\in V(A)$ is determined by $C_0(x)$
  and the individualized neighbors of $x$, i.e., $N(x)\cap\{a_1,a_2,a_3\}$.
  Moreover, if $C_0(x)=C_0(x')$, then we have also $C_1(x)=C_1(x')$ exactly when $N(x)\cap\{a_1,a_2,a_3\}=N(x')\cap\{a_1,a_2,a_3\}$.
  After interchanging the colors of $a_1$ and $a_2$, the following happens.
  \begin{itemize}
  \item
    The $C_1$-colors of $a_1$ and $a_2$ are also interchanged because neither $a_1$ nor $a_2$
    is adjacent to $a_3$.
  \item
    The $C_1$-color of a vertex $x$ stays the same if $x$ is adjacent to both $a_1$ and $a_2$
    or non-adjacent to both of them.
  \item
    We are left with 6 vertices that are split by the coloring $C_1$ in four monochromatic classes.
    After swapping red and blue, two of these classes, namely $(N(a_1)\setminus N(a_2))\cap N(a_3)$
    and $(N(a_2)\setminus N(a_1))\cap N(a_3)$, are swapped. The same is true for the other two classes
    $(N(a_1)\setminus N(a_2))\setminus N(a_3)$ and $(N(a_2)\setminus N(a_1))\setminus N(a_3)$.
  \end{itemize}
  Taking into account the equality $|(N(a_1)\setminus N(a_2))\cap N(a_3)|=|(N(a_2)\setminus N(a_1))\cap N(a_3)|$,
  we see that the recoloring does not change the frequency of any $C_1$-color.
  In other words, the multiset of colors $\Mset{C_1(x)}_x$ stays the same after converting $A$ into $B$.
The colorings of the two graphs after the first refinement round are shown in Fig.~\ref{fig:1eq}.
\end{subproof}

\begin{claim}\label{cl:discr}
  Both vertex-colored graphs $A$ and $B$ are CR-discrete.
\end{claim}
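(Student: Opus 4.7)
The plan is to continue the CR computation past the first round and verify that each remaining non-singleton color class gets refined into singletons. After round one, as shown in Figure~\ref{fig:1eq}, the singleton classes in $A$ consist of $\{a_1\}$, $\{a_2\}$, $\{a_3\}$, the unique vertex of $N(a_1)\cap N(a_2)\cap N(a_3)$, and the unique vertex in each of the three intersections $N(a_i)\cap N(a_j)\setminus N(a_k)$. The non-singleton classes are $N(a_1)\setminus(N(a_2)\cup N(a_3))$ and $N(a_2)\setminus(N(a_1)\cup N(a_3))$ (of size two each), $N(a_3)\setminus(N(a_1)\cup N(a_2))$ (of size three), and the class of vertices adjacent to none of $a_1,a_2,a_3$ (of size two).

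I would then work in the explicit Cayley coordinates $S=\mathrm{Cay}(\bZ_4\times\bZ_4,\{\pm(1,0),\pm(0,1),\pm(1,1)\})$ with $a_1=(2,2)$, $a_2=(1,1)$, $a_3=(3,1)$, enumerate the vertices of each non-singleton class, and compute for every vertex the multiset of $C_1$-colors of its neighbors. For each pair of vertices within the same class, the goal is to exhibit a discrepancy in these multisets, typically arising from different numbers of neighbors in one of the already-singleton classes $N(a_i)\cap N(a_j)\setminus N(a_k)$ or $N(a_1)\cap N(a_2)\cap N(a_3)$. If some pair still coincides after the second round, one further refinement step will finish the job, since the new singletons produced in round two become available to discriminate the remaining vertices.

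The main obstacle is bookkeeping: tracking exactly which specific vertex is adjacent to which in the Shrikhande graph under the chosen coordinates requires some care, although each individual adjacency check is elementary, and the strong-regularity parameters $(16,6,2,2)$ together with Property~(P3) control the possible configurations. Once CR-discreteness of $A$ has been verified, the corresponding statement for $B$ follows by the same argument with the colors red and blue interchanged, since $B$ shares the same underlying graph as $A$ and CR treats the color palette symmetrically.
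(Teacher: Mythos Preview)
Your approach is correct and is in fact the direct-computation route that the paper explicitly acknowledges as valid (``The claim can be proved directly by applying one more refinement step to the colorings in Fig.~\ref{fig:1eq}''). The paper, however, chooses a different and more structural argument: it observes that $a_1,a_2,a_3$ are solitary from the start, that the unique common neighbor $a$ of all three is then solitary, and that whenever a triangle consists of three solitary vertices, the \emph{second} common neighbor of any edge of that triangle (guaranteed by $\lambda=2$) must also be solitary. Since the graph whose vertices are the triangles of $S$, with adjacency given by edge-sharing, is connected (it is the Dyck graph), this propagation reaches every vertex.

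Your method buys concreteness and requires no auxiliary connectivity fact; the paper's method buys reusability, and indeed the same triangle-propagation argument is invoked again in the proof of Claim~\ref{cl:G-discr} for the uncolored graph $G$, where the individualization is simulated by degrees rather than colors. Your coordinate computation would have to be redone in that setting, whereas the structural argument carries over unchanged.
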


\begin{subproof}
  The claim can be proved directly by applying one more refinement step to the colorings in Fig.~\ref{fig:1eq}.
  However, we prefer to use a more general argument that will later be used once again
  in the proof of Claim \ref{cl:G-discr} in a slighly different situation.
  
  Call a vertex \emph{solitary} if CR colors it differently than the other vertices of the graphs.
  We prove that every vertex in $A$ is solitary. Virtually the same argument applies also
  to $B$. The individualized vertices $a_1,a_2,a_3$ are solitary from the very beginning.
  The single vertex $a$ adjacent to all of them is obviously also solitary. Thus, $A$ contains
  a triangle subgraph whose all vertices, namely $a,a_1,a_2$, are solitary.
  Let $a'$ be the common neighbor of $a_1$ and $a_2$ different from $a$
  (recall that the Shrikhande graph is strongly regular with the third parameter $\lambda=2$).
  The fact that $a_1$ and $a_2$ are solitary implies that the equality $C(a')=C(x)$
  for $x\ne a'$ can be true only if $x=a$, which is actually impossible because
  $a$ is solitary. Therefore, $a'$ is solitary too. This argument applies to any
  triangle whose all vertices are solitary and to the other common neighbor of any two
  vertices of this triangle. Consider the graph whose vertices are the triangles of the Shrikhande graph,
  adjacent exactly when they share an edge. This graph (known as the \emph{Dyck graph}) is obviously connected,
  which readily implies that all vertices of $A$ are solitary.
\end{subproof}

\begin{claim}\label{cl:dist}
  The vertex-colored graphs $A$ and $B$ are distinguishable by CR.
\end{claim}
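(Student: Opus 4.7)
The plan is to combine the two ingredients already at hand: Claim~\ref{cl:discr} (both $A$ and $B$ are CR-discrete) and the colored graphs $A$ and $B$ being non-isomorphic, which together imply distinguishability by the same argument that was used at the end of Section~\ref{ss:CR} to show that every CR-discrete graph is CR-identifiable.

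First I would verify that $A$ and $B$ are non-isomorphic as vertex-colored graphs. Any color-preserving bijection $f\function{V(A)}{V(B)}$ must map $a_i$ to $b_i$ for each $i=1,2,3$, since by construction each $a_i$ is the unique vertex of its color in $A$ and the same holds for $b_i$ in $B$. However, by condition (A2) the pair $a_1,a_3$ (red and green) is a $Q$-pair in $A$, whereas in $B$ the pair $b_1,b_3$ is a $D$-pair (because $B$ was obtained from $A$ by swapping the colors of $a_1$ and $a_2$, which sends the former green-blue $D$-pair to the red-green position). Since being a $Q$-pair or a $D$-pair is an isomorphism invariant of the pair of non-adjacent vertices, the mapping $f$ cannot be an isomorphism of the underlying graphs, let alone of the colored graphs.

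Next I would run CR on the vertex-disjoint union of $A$ and $B$ and suppose for contradiction that $\Mset{C(x)}_{x\in V(A)}=\Mset{C(x)}_{x\in V(B)}$. By Claim~\ref{cl:discr}, $A$ is CR-discrete, so each color in the output multiset for $V(A)$ occurs exactly once; the supposed multiset equality then forces $B$ to be CR-discrete as well and induces a color-preserving bijection $f\function{V(A)}{V(B)}$. Exactly as in the argument at the end of Section~\ref{ss:CR}, if some edge were mapped by $f$ to a non-edge (or vice versa), then the rule \eqref{eq:refinement} would force a further refinement of $C$ on $V(A)\cup V(B)$, contradicting stability. Hence $f$ must be an isomorphism of (colored) graphs, contradicting the previous paragraph.

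I expect no serious obstacle here: the whole argument is a direct application of the generic CR-discrete $\Rightarrow$ CR-identifiable principle from Section~\ref{ss:CR}, combined with the trivial structural observation that swapping the colors of $a_1$ and $a_2$ interchanges the roles of the $Q$-pair and the $D$-pair among the three individualized vertices. The only thing to be a little careful about is that the principle is stated for uncolored graphs, but exactly the same proof works verbatim for vertex-colored graphs since the CR refinement rule \eqref{eq:refinement} already incorporates the initial colors.
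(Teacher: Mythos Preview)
Your proposal is correct and follows essentially the same approach as the paper: invoke Claim~\ref{cl:discr} to get CR-discreteness, observe that $A$ and $B$ are non-isomorphic as colored graphs via the $Q$-pair/$D$-pair distinction among the individualized vertices, and then apply the CR-discrete $\Rightarrow$ CR-identifiable argument from Section~\ref{ss:CR}. The paper compresses this into two sentences, while you spell out the non-isomorphism and the stability argument in more detail, but the logic is identical.
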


\begin{subproof}
By Claim \ref{cl:discr}, both $A$ and $B$ are CR-discrete and, therefore, CR-identifiable.
Therefore, CR distinguishes $A$ and $B$ just because these graphs are non-isomorphic
(recall that the two common neighbors of $b_2$ and $b_3$
are adjacent while the two common neighbors of $a_2$ and $a_3$ are not).
\end{subproof}

\begin{figure}
\pgfdeclarelayer{background layer}
\pgfsetlayers{background layer,main}
\centering
\begin{tikzpicture}

    \matrix[column sep=2cm,row sep=1cm,every node/.style={circle,draw,inner sep=2pt,fill=none}] {
&\node (c4) {$c_4$};&\\
\node (a3) {$a_3$};&\node (c3) {$c_3$};&\node (b3) {$b_3$};\\
\node (a2) {$a_2$};&\node (c2) {$c_2$};&\node (b2) {$b_2$};\\
\node (a1) {$a_1$};&\node (c1) {$c_1$};&\node (b1) {$b_1$};\\
\node[draw=none] (A) {$A$};&&\node[draw=none] (B) {$B$};\\
    };

\draw  (a2) -- (a1) -- (c1) -- (b1) -- (b2) (a2) -- (c2) -- (b2) (a3) -- (c3) -- (b3) (c2) -- (c3) -- (c4);

\begin{pgfonlayer}{background layer}
  \draw[fill=gray!30,draw=none] (a2) ellipse (1.5cm and 3cm);
  \draw[fill=gray!30,draw=none] (b2) ellipse (1.5cm and 3cm);  
\end{pgfonlayer}
  
\end{tikzpicture}  
\caption{Construction of $G$.}
\label{fig:G}
\end{figure}
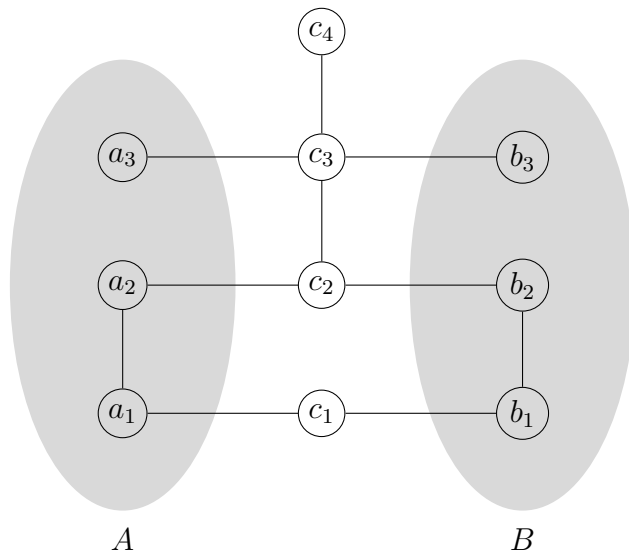

We now construct a graph $G$ as the vertex-disjoint union of $A$ and $B$
where each pair $a_i,b_i$ is connected via new edges and new intermediate vertices
as shown in Fig.~\ref{fig:G}.
Thus, $V(G)=V(A)\cup V(B)\cup\{c_1,c_2,c_3,c_4\}$ where $c_1,c_2,c_3,c_4$
are new connector vertices of degree $2,3,4,1$ respectively.
The graph $G$ is uncolored, that is, the colors of the six individualized vertices
$a_1,a_2,a_3,b_1,b_2,b_3$ are erased. The next claim proves Part 3(a) of the theorem.

\begin{claim}\label{cl:G-discr}
  $G$ is CR-discrete.
\end{claim}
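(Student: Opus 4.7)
The plan is to analyze CR on the uncolored graph $G$ in three stages: identify the $c_j$'s and the pair classes $\{a_i,b_i\}$ via degrees; reduce the CR refinement within $V(A)$ and $V(B)$ to CR on the colored graphs $A$ and $B$ already handled in Claim~\ref{cl:discr}; and finally rule out cross-collisions between $V(A)$ and $V(B)$ by an equitability argument that would otherwise contradict Claim~\ref{cl:dist}.

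First I would read off degrees in $G$. The four connectors $c_1,c_2,c_3,c_4$ have pairwise distinct degrees $2,3,4,1$, each also distinct from the two degrees $6$ and $7$ appearing in $V(A)\cup V(B)$: every $a_i$ and $b_i$ has degree $7$ (six Shrikhande-neighbors plus one connector), and every remaining vertex has degree $6$. Thus round~$1$ of CR isolates each $c_j$ as a singleton class, and in round~$2$ the class $\{a_1,a_2,a_3,b_1,b_2,b_3\}$ splits into the three pairs $\{a_i,b_i\}$, one per index $i$, because $a_i$ and $b_i$ are precisely the two vertices adjacent to the singleton-colored $c_i$.

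Next I would argue that from round~$2$ onward the CR refinement of $G$, restricted to $V(A)$, agrees with the CR refinement on the colored graph $A$ introduced in Section~\ref{s:WM-vs-CR}. The reason is that every neighbor of a non-individualized vertex of $V(A)$ lies in $V(A)$, while each $a_i$ has exactly one external neighbor, namely $c_i$, whose singleton color plays the role of the individualization of $a_i$. By Claim~\ref{cl:discr} this refinement terminates with each vertex of $V(A)$ receiving its own color, and the symmetric statement holds for $V(B)$. Consequently, every non-singleton class of the stable CR-partition of $G$ consists of exactly one vertex of $V(A)$ and one of $V(B)$.

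Finally I would derive a contradiction from the existence of such a mixed pair $\{v,u\}$ with $v\in V(A)$ and $u\in V(B)$. The stable CR-partition is equitable, so for every class $Z$ the vertices $v$ and $u$ have the same number of neighbors in $Z$. Since the only paths between $V(A)$ and $V(B)$ in $G$ pass through the $c_j$'s, there are no direct cross-edges; equitability against a singleton class $\{w\}$ with $w\in V(A)\setminus\{v\}$ therefore forces $v$ and $w$ to be non-adjacent in $A$, and equitability against another mixed pair $\{v',u'\}$ forces $v$ and $v'$ to be adjacent in $A$ iff $u$ and $u'$ are adjacent in $B$. Hence the set of $V(A)$-vertices appearing in mixed pairs is a union of connected components of $A$; because the Shrikhande graph is connected and this set contains $v$, it must be all of $V(A)$. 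The resulting bijection $\phi\colon V(A)\to V(B)$ preserves adjacency and, thanks to the pair classes $\{a_i,b_i\}$, satisfies $\phi(a_i)=b_i$, so $\phi$ is an isomorphism of the colored graphs $A$ and $B$, contradicting Claim~\ref{cl:dist}. The main obstacle is this last step: converting a hypothetical single cross-collision into a full color-preserving isomorphism via equitability of the stable partition and connectedness of the Shrikhande graph.
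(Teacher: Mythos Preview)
Your proof is correct and follows essentially the same three-stage structure as the paper's: isolate the connectors by degree, individualize the $a_i$'s and $b_i$'s via their connectors and invoke Claim~\ref{cl:discr} for discreteness within each copy, then rule out cross-collisions by producing a colored-graph isomorphism $A\to B$ in contradiction with Claim~\ref{cl:dist}. The only difference is cosmetic: in the last stage the paper argues by iterated propagation (a color missing in $B$ spreads through $A$ via diameter~$2$, and otherwise one more refinement round creates such a color because the matching bijection is not an isomorphism), whereas you work directly with equitability of the stable partition and connectedness of the Shrikhande graph to force the bijection to be an isomorphism---the same contradiction reached from the stable end rather than the refining end.
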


\begin{subproof}
We use the terminology introduced in the proof of Claim \ref{cl:discr}.
  The connector vertices $c_1,c_2,c_3,c_4$ have unique degrees $2,3,4,1$
  and become solitary after the first refinement round.
  The vertices $a_1,a_2,a_3$ have degree 7, while the other vertices in $A$
  have degree 6. Each of the three vertices $a_1,a_2,a_3$ is distinguished from the other
  two by the adjacency to its own connector. It follows that after the second refinement
  round, the colors $C_2(a_1),C_2(a_2),C_2(a_3)$ become unique within $A$
  (even when still $C_2(a_i)=C_2(b_i)$). The argument used to prove Claim \ref{cl:discr}, therefore, implies
  that eventually $C_r(x)\ne C_r(x')$ for all $x\ne x'$ in $A$. The same argument applies to~$B$.

It remains to prove that $C(x)\ne C(x')$ if $x\in V(A)$ and $x'\in V(B)$.
Suppose that the vertices of $A$ receive pairwise distinct colors after the $r$-th round
and the same is true for $B$. If $A$ contains a vertex $x$ whose color $C_r(x)$
is absent in $B$, then the neighbors of $x$ in $A$ will be colored differently from
any vertex in $B$ in the next refinement round. Since $A$ has diameter 2,
one more round ensures that all vertices in $A$ are colored differently
from all vertices in $B$. There remains the case that there is a bijection
$f\function{V(A)}{V(B)}$ such that $C_r(f(x))=C_r(x)$ for all $x\in V(A)$.
Since $f$ is not an isomorphism from $A$ to $B$, the next refinement round
ensures the appearance of a vertex $x\in V(A)$ whose color $C_{r+1}(x)$
is absent in $B$ (similarly to the proof that the CR-discreteness implies CR-identifiability
in Subsection \ref{ss:CR}). Thus, we come back to the case already analyzed.
\end{subproof}

Let $w^R_k(x,y)$ denote the number of walks of length $k$ from a vertex $x$ to a vertex $y$ in a graph $R$.
We will need the following simple and well-known facts.\footnote{%
  Let $P_{k+1}$ be a path of length $k$ with end vertices $s$ and $t$.
  Note that $w^R_k(x,y)$ is equal to the number of all homomorphisms from $P_{k+1}$
  to $R$ taking $s$ to $x$ and $t$ to $y$. 
  Part 2 of Claim \ref{cl:reg} is a particular case of a much more general result
  about the invariance of homomorphism counts under the Weisfeiler-Leman equivalence
  for graphs with designated vertices \cite[Lemma~4]{Dvorak10}.
}

\begin{claim}\label{cl:reg}\hfill
  \begin{enumerate}
  \item
    If $R$ is a regular graph of degree $d$, then $w^R_k(x)=d^k$ for every $x\in V(R)$.
  \item
    Suppose now that $R$ is a strongly regular graph with parameters $(n,d,\lambda,\mu)$
    and fix an arbitrary $k\ge0$. Then the walk count $w^R_k(x,x)$ is the same for every $x\in V(R)$.
    If $x\ne y$, then the value of $w^R_k(x,y)$ depends only on the adjacency of $x$ and $y$
    (and on the parameters $d,\lambda,\mu$).
  \end{enumerate}
\end{claim}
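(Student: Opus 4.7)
The plan is to reduce both parts of the claim to routine inductive/algebraic arguments using the adjacency matrix.

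For Part 1, I would proceed by induction on $k$. The base case is immediate since $w^R_0(x)=1=d^0$ for every vertex $x$. For the inductive step, use the identity $w^R_{k+1}(x)=\sum_{y\in N(x)}w^R_k(y)$; by the inductive hypothesis every term on the right equals $d^k$, and since $|N(x)|=d$, the sum equals $d\cdot d^k=d^{k+1}$. This part has no obstacle.

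For Part 2, I would translate walk counts into powers of the adjacency matrix: $w^R_k(x,y)=(A^k)_{xy}$. The key algebraic fact about a strongly regular graph with parameters $(n,d,\lambda,\mu)$ is the well-known identity
\[
A^2=dI+\lambda A+\mu(J-I-A)=(d-\mu)I+(\lambda-\mu)A+\mu J,
\]
where $J$ denotes the all-ones matrix. Combined with the regularity relations $AJ=JA=dJ$ and $J^2=nJ$, this shows that the linear span of $\{I,A,J\}$ is closed under multiplication by $A$. I would then argue by induction that for every $k\ge0$ there exist scalars $\alpha_k,\beta_k,\gamma_k$ (depending only on $k$ and the parameters $d,\lambda,\mu,n$) with
\[
A^k=\alpha_k I+\beta_k A+\gamma_k J.
\]
The base cases $k=0,1$ are trivial ($\alpha_0=1$, others $0$; then $\beta_1=1$, others $0$); for the inductive step, multiply the hypothesis by $A$ and use the identity above to express the result again as a combination of $I$, $A$, and $J$.

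Once this is established, I would just read off the entries of $A^k$ in the three cases. Since there are no self-loops, $I_{xx}=1$, $A_{xx}=0$, $J_{xx}=1$, so $(A^k)_{xx}=\alpha_k+\gamma_k$, which is independent of $x$. For $x\ne y$ with $x\sim y$, we get $(A^k)_{xy}=\beta_k+\gamma_k$, and for $x\ne y$ with $x\not\sim y$, we get $(A^k)_{xy}=\gamma_k$. Hence $w^R_k(x,y)$ for distinct $x,y$ depends only on whether they are adjacent. There is really no hard step here; the one subtlety is simply to invoke the characteristic quadratic identity for the adjacency matrix of a strongly regular graph, after which everything is forced.
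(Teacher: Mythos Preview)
Your proposal is correct. Part~1 matches the paper exactly (the paper simply calls it ``obvious''). For Part~2 the paper also inducts on $k$, but does so combinatorially rather than algebraically: it posits values $l_k,a_k,n_k$ for the three cases (loop, adjacent, non-adjacent) and derives the recursion by splitting $N(x)$ according to adjacency with $y$, obtaining e.g.\ $w^R_{k+1}(x,y)=\lambda a_k+(d-\lambda-1)n_k+l_k$ in the adjacent case. Your route via the quadratic identity $A^2=(d-\mu)I+(\lambda-\mu)A+\mu J$ and the invariance of $\mathrm{span}\{I,A,J\}$ under multiplication by $A$ is the standard algebraic packaging of exactly the same recursion (your $(\alpha_k,\beta_k,\gamma_k)$ correspond to the paper's $(l_k,a_k,n_k)$ via $l_k=\alpha_k+\gamma_k$, $a_k=\beta_k+\gamma_k$, $n_k=\gamma_k$). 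One small remark: you mention $J^2=nJ$, but since you only ever multiply by $A$, this relation is never used, and indeed your coefficients depend only on $d,\lambda,\mu$ as the statement claims.
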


\begin{subproof}
  Part 1 is obvious. Part 2 follows from an easy inductive argument. Indeed, it is trivially
  true for $k=0$. Assume that $w^R_k(x,y)=a_k$ for all adjacent $x$ and $y$,
  $w^R_k(x,y)=n_k$ for all non-adjacent unequal $x$ and $y$,
  and $w^R_k(x,x)=l_k$ for all $x$. Then
  $w^R_{k+1}(x,x)=\sum_{z\in N(x)}w^R_k(z,x)=d\,a_k$. If $x$ and $y$ are adjacent, then
  \begin{multline*}
  w^R_{k+1}(x,y)=\sum_{z\in N(x)\cap N(y)}w^R_k(z,y)+\sum_{z\in N(x)\setminus(N(y)\cup\{y\})}w^R_k(z,y)+w^R_k(y,y)\\
  =\lambda a_k+(d-\lambda-1)n_k+l_k.
  \end{multline*}  
  If $x$ and $y$ are non-adjacent and unequal, then
  $$
  w^R_{k+1}(x,y)=\sum_{z\in N(x)\cap N(y)}w^R_k(z,y)+\sum_{z\in N(x)\setminus N(y)}w^R_k(z,y)=\mu a_k+(d-\mu)n_k,
  $$
  enabling the induction step.
\end{subproof}

We are now prepared to prove Part 3(b) of the theorem.

\begin{claim}\label{cl:G-not-discr}
  $G$ is not WM-discrete.
\end{claim}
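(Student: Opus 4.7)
The plan is to exhibit, for every $x \in V(A)$, a companion vertex $\sigma(x) \in V(B)$ with $w^G_k(x) = w^G_k(\sigma(x))$ for every $k \ge 0$; since $V(A) \cap V(B) = \emptyset$ any such pair consists of distinct vertices, which witnesses that $G$ is not WM-discrete. The bijection $\sigma \colon V(A) \to V(B)$ is supplied by Claim~\ref{cl:1eq}: because the multisets $\Mset{C_1(x)}_{x \in V(A)}$ and $\Mset{C_1(x)}_{x \in V(B)}$ coincide, some colour-preserving bijection exists. Since $a_i$ is the unique vertex of $A$ whose $C_1$-colour starts with red/blue/green (and likewise in $B$), any such $\sigma$ is forced to satisfy $\sigma(a_i) = b_i$ for $i=1,2,3$; and for a non-individualised $x \in V(A) \setminus \{a_1,a_2,a_3\}$, the equality $C_1(x) = C_1(\sigma(x))$ translates exactly into the condition that $x \sim a_i$ iff $\sigma(x) \sim b_i$ for each $i \in \{1,2,3\}$.

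Next I would decompose any walk of length $k$ in $G$ starting at $x \in V(A)$ according to its first moment of leaving $V(A)$. Since the only edges with exactly one endpoint in $V(A)$ are the three edges $a_ic_i$, such a walk either remains inside $V(A)$ for all $k$ steps (contributing $6^k$ by the $6$-regularity of the Shrikhande graph) or first exits at some time $t \in \{0,\ldots,k-1\}$ through some edge $a_ic_i$. Writing $w^S_t(u,v)$ for the number of $t$-walks between $u$ and $v$ in the Shrikhande graph, this yields
\[
w^G_k(x) = 6^k + \sum_{t=0}^{k-1}\sum_{i=1}^{3} w^S_t(x,a_i)\, w^G_{k-t-1}(c_i),
\]
with the analogous identity for $\sigma(x) \in V(B)$ using $w^S_t(\sigma(x), b_i)$; crucially, the tail factors $w^G_{k-t-1}(c_i)$ are common to both expressions.

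Finally I would apply Claim~\ref{cl:reg}(2), which asserts that in the strongly regular Shrikhande graph $w^S_t(u,v)$ depends only on whether $u = v$, $u \sim v$, or $u \ne v$ and $u \not\sim v$. The bijection $\sigma$ preserves this type for every pair $(x,a_i)$ versus $(\sigma(x),b_i)$ --- including the diagonal case where $x = a_j$ and $\sigma(x) = b_j$ --- so $w^S_t(x,a_i) = w^S_t(\sigma(x),b_i)$ for all $i$ and $t$. Substituting into the displayed formula gives $w^G_k(x) = w^G_k(\sigma(x))$ for every $k$, providing the required coincident walk-matrix rows and completing the argument. The only points that need a little care are confirming that the $C_1$-colours of $a_1,a_2,a_3$ are genuinely unique within $A$ (so that $\sigma$ is forced to match them with $b_1,b_2,b_3$) and verifying that the first-exit decomposition handles the individualised starts $x = a_j$ uniformly with the generic case via the $t=0$ term; neither presents a real obstacle.
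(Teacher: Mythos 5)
Your proposal is correct and follows essentially the same route as the paper: both rely on the first-exit decomposition of $k$-walks through the vertices $a_i$ (and their connectors $c_i$), Claim~\ref{cl:reg} for the walk-count invariance in a strongly regular graph, and the $C_1$-colouring from Claim~\ref{cl:1eq} to match the adjacency types. The only cosmetic difference is that you package the argument as a colour-preserving bijection $\sigma\colon V(A)\to V(B)$, whereas the paper phrases it via an equivalence relation $\equiv$ on $V(A)\cup V(B)$ (which additionally identifies equivalent pairs within the same copy), but the substance is identical.
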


\begin{subproof}
  Define an equivalence relation $\equiv$ on $V(G)$ as follows.
  Each connector vertex is equivalent only to itself. Let $C_1$ be the coloring
  of $V(A)\cup V(B)$ obtained after the first round of the execution of CR
  on the vertex-colored graphs $A$ and $B$; see Claim \ref{cl:1eq}.
  We set $x\equiv x'$ for $x,x'\in V(A)\cup V(B)$ if $C_1(x)=C_1(x')$.
  Recall that the largest equivalence class of $\equiv$ consists of six vertices
  (three uncolored vertices in $A$ adjacent to $a_3$ but neither to $a_1$ nor to $a_2$
  and three uncolored vertices in $B$ adjacent to $b_3$ but neither to $b_1$ nor to $b_2$).
  To demonstrate that $G$ is not WM-discrete, we claim that $w^G_k(x)=w^G_k(x')$ for every $k$ whenever $x\equiv x'$.
  Indeed, if $x\in V(A)$, then
  \begin{equation}
    \label{eq:wGkx}
w^G_k(x)=w^A_k(x)+\sum_{i=1}^3\sum_{j=0}^{k-1}w^A_j(x,a_i)w^G_{k-j-1}(c_i).    
  \end{equation}
Here, we separately consider the walks of length $k$ inside $A$ and the walks of length $k$
leaving $A$. A walk can leave $A$ only after visiting one of the vertices $a_1,a_2,a_3$.
If such a walk leaves $A$ first after the $j$-th step via $a_i$, it arrives at the connector $c_i$
and, starting from it, makes the remaining $k-j-1$ steps. The similar equality holds for $x\in V(B)$.

It remains to notice that the right hand side of \refeq{wGkx} and its analog for $B$ yield the same value
for all $x$ in the same $\equiv$-class. Indeed, let $x\equiv x'$ and suppose that $x\in A$ and $x'\in B$
(the cases $x,x'\in A$ and $x,x'\in B$ are completely similar).
Then $w^A_k(x)=w^B_k(x')=6^k$ by Part 1 of Claim \ref{cl:reg}.
Finally, for each $j$ the equalities $w^A_j(x,a_i)=w^B_j(x,b_i)$ for $i=1,2,3$
follow from Part 2 of Claim \ref{cl:reg} by the definition of the relation $\equiv$
and the characterization of $C_1$ in the beginning of the proof of Claim~\ref{cl:1eq}.
\end{subproof}

It remains to prove Part 3(c) of the theorem.

\begin{claim}\label{cl:G-not-id}
  $G$ is not WM-identifiable.
\end{claim}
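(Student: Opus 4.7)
The plan is to exhibit a graph $G'$ that is non-isomorphic to $G$ but satisfies $W^G \approx W^{G'}$. I would construct $G'$ by the same procedure used for $G$, but with $B$ replaced by a second copy $A^*$ of the colored graph $A$ (with individualized vertices $a_1^*, a_2^*, a_3^*$): the vertex set of $G'$ is $V(A) \cup V(A^*) \cup \{c_1, c_2, c_3, c_4\}$, the colors are erased, and the connectors have the same degrees $2, 3, 4, 1$ as in $G$, with edges $a_i c_i$, $a_i^* c_i$ for $i = 1, 2, 3$ and $c_2 c_3$, $c_3 c_4$.

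For non-isomorphism, I would observe that the degrees $1, 2, 3, 4$ each occur exactly once in both graphs, so any isomorphism $\beta\function{G}{G'}$ must fix every connector and hence must map $N_G(c_i) \setminus \{c_2, c_3, c_4\} = \{a_i, b_i\}$ bijectively onto $N_{G'}(c_i) \setminus \{c_2, c_3, c_4\} = \{a_i, a_i^*\}$ for $i = 1, 2, 3$. Deleting the connectors separates both graphs into two Shrikhande components, so $\beta$ must match components componentwise; together with the constraint on $\{a_i, b_i\}$, this forces $\beta|_{V(B)}$ to be an isomorphism of uncolored Shrikhande graphs sending $b_i$ to $a_i$ or to $a_i^*$ for all $i$. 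But any such isomorphism would carry the $Q$-pair $(b_2, b_3)$ to the $D$-pair at the corresponding indices of $A$ or $A^*$, which is impossible since these pair types induce non-isomorphic configurations with their two common neighbors.

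The main part of the argument will be verifying $W^G \approx W^{G'}$. By Claim~\ref{cl:1eq}, the multisets $\Mset{C_1(x)}_{x \in V(A)}$ and $\Mset{C_1(x)}_{x \in V(B)}$ coincide, so I would fix a bijection $\psi\function{V(B)}{V(A)}$ preserving $C_1$-colors and satisfying $\psi(b_i) = a_i$, and define $\phi\function{V(G)}{V(G')}$ by the identity on $V(A) \cup \{c_1, c_2, c_3, c_4\}$ and $\phi(v) = \psi(v)^*$ for $v \in V(B)$, where $\cdot^*$ denotes the canonical isomorphism from $A$ to $A^*$. The target identity $w^G_k(v) = w^{G'}_k(\phi(v))$ for all $v$ and $k$ would be proved by induction on $k$, combining the recurrence $w^G_k(c_i) = \sum_{u \in N_G(c_i)} w^G_{k-1}(u)$ at the connectors with the walk decomposition
\[
  w^G_k(v) = w^R_k(v) + \sum_{i=1}^3 \sum_{j=0}^{k-1} w^R_j(v, r_i)\, w^G_{k-j-1}(c_i)
\]
for $v \in V(R)$ (with $R = A, r_i = a_i$ or $R = B, r_i = b_i$) from the proof of Claim~\ref{cl:G-not-discr}, along with the analogous formulae in $G'$. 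The main obstacle is that $A$ and $B$ are non-isomorphic as colored graphs, so identity of the decomposition data between $V(B) \subset G$ and $V(A^*) \subset G'$ is not automatic. I would resolve this by invoking (i) regularity, giving $w^B_k(v) = 6^k = w^{A^*}_k(\phi(v))$; (ii) Claim~\ref{cl:reg}(2), according to which $w^R_j(x, y)$ in a strongly regular graph $R$ depends only on the adjacency type of $\{x, y\}$; and (iii) the fact that the adjacency patterns of $(a_1, a_2, a_3)$ in $A$ and of $(b_1, b_2, b_3)$ in $B$ are identical, and that $\psi$ preserves the adjacency of each non-individualized vertex to the three individualized ones (which is exactly the information recorded by $C_1$). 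These inputs yield $w^B_j(v, b_i) = w^{A^*}_j(\phi(v), a_i^*)$; combined with the easy observation $\phi(N_G(c_i)) = N_{G'}(c_i)$, the induction closes and delivers the walk-matrix equivalence, proving that $G$ is not WM-identifiable.
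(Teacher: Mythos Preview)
Your proposal is correct and follows essentially the same approach as the paper: construct $G'$ by replacing $B$ with a second copy of $A$, set up a $C_1$-preserving bijection between the two graphs, and verify $w^G_k(v)=w^{G'}_k(\phi(v))$ by induction on $k$ using the walk decomposition \eqref{eq:wGkx} together with Claim~\ref{cl:reg}. The only minor difference is that the paper establishes non-isomorphism by noting that $G'$ has a nontrivial automorphism (swapping the two copies of $A$) while $G$ does not by Claim~\ref{cl:G-discr}, whereas you argue directly via the $Q$/$D$ pair types; both arguments are valid.
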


\begin{subproof}
  Construct $G'$ in the same way as $G$ but using a copy $A'$ of $A$ instead of $B$.
  The graphs $G$ and $G'$ are non-isomorphic, basically because $A$ and $B$ are non-isomorphic
  as colored graphs. In particular, $G'$ has an automorphism fixing the connector vertices
  and transposing $A$ and $A'$, whereas $G$ has no non-trivial automorphism by Claim \ref{cl:G-discr}.
  Fix a colored-graph isomorphism $f'$ from $A'$ to $A$.
  Let $f$ be a bijection from $V(A)$ to $V(B)$ such that $C(f(x))=C(x)$ for all $x\in V(A)$; see Claim \ref{cl:1eq}.
  Define a bijection $F$ from $V(G')=V(A)\cup V(A')\cup\{c_1,c_2,c_3,c_4\}$ onto
  $V(G)=V(A)\cup V(B)\cup\{c_1,c_2,c_3,c_4\}$ so that $F(c_i)=c_i$ for $i=1,2,3,4$,
  the restriction of $F$ to $V(A)$ coincides with $f$, and the restriction of $F$
  to $V(A')$ is the isomorphism $f'$. The proof of Claim \ref{cl:G-not-discr} applies
  to the graph $G'$ virtually without changes. In particular, the analog of Equality \refeq{wGkx} for $G'$
  allows us to show by a simple induction on $k$ that $w^{G'}_k(x)=w^G_k(f(x))$ for $x\in V(A)$ and
  $w^{G'}_k(x')=w^G_k(f'(x'))$ for $x'\in V(A')$, as well as that
  $w^{G'}_k(c_i)=w^G_k(c_i)$ for $i=1,2,3,4$. Thus, for every $x\in V(G')$ we have
  $w^{G'}_k(x)=w^G_k(F(x))$ for all $k$, implying that $G$ and $G'$ are WM-indistinguishable.
\end{subproof}

The proof of Theorem \ref{thm:WM-vs-CR} is complete.


\begin{thebibliography}{10}

\bibitem{AlonS16}
N.~Alon and J.~H. Spencer.
\newblock {\em The probabilistic method}.
\newblock John Wiley \& Sons, 2016.

\bibitem{ArvinFKV23}
V.~Arvind, F.~Fuhlbrück, J.~Köbler, and O.~Verbitsky.
\newblock On a hierarchy of spectral isomorphism invariants.
\newblock Technical report, \url{arxiv.org/abs/2310.04391}, 2023.

\bibitem{BabaiES80}
L.~Babai, P.~Erd\H{o}s, and S.~M. Selkow.
\newblock Random graph isomorphism.
\newblock {\em SIAM Journal on Computing}, 9(3):628--635, 1980.

\bibitem{BabaiK79}
L.~Babai and L.~Ku{\v{c}}era.
\newblock Canonical labelling of graphs in linear average time.
\newblock In {\em 20th Annual Symposium on Foundations of Computer Science
  (FOCS'79)}, pages 39--46. {IEEE} Computer Society, 1979.

\bibitem{BhattacharyaR10}
R.~N. Bhattacharya and R.~R. Rao.
\newblock {\em Normal Approximation and Asymptotic Expansions}.
\newblock Society for Industrial and Applied Mathematics, 2010.

\bibitem{Billingsley95}
P.~Billingsley.
\newblock {\em Probability and measure.}
\newblock Chichester: John Wiley \& Sons Ltd., 3rd ed. edition, 1995.

\bibitem{Dvorak10}
Z.~Dvořák.
\newblock On recognizing graphs by numbers of homomorphisms.
\newblock {\em Journal of Graph Theory}, 64(4):330--342, 2010.

\bibitem{Feller}
W.~Feller.
\newblock {\em An introduction to probability theory and its applications.
  Vol.~I}.
\newblock Wiley Ser. Probab. Math. Stat. John Wiley \& Sons, New York -- London
  -- Sydney, 3rd edition, 1968.

\bibitem{Godsil12}
C.~Godsil.
\newblock Controllable subsets in graphs.
\newblock {\em Ann. Comb.}, 16(4):733--744, 2012.

\bibitem{Hagos02}
E.~M. Hagos.
\newblock Some results on graph spectra.
\newblock {\em Linear Algebra Appl.}, 356(1-3):103--111, 2002.

\bibitem{HarveyH21}
D.~Harvey and J.~van~der Hoeven.
\newblock Integer multiplication in time {$O(n\log n)$}.
\newblock {\em Annals of Mathematics}, 193(2):563 -- 617, 2021.

\bibitem{Levinger80}
B.~W. Levinger.
\newblock The square root of a 2 × 2 matrix.
\newblock {\em Mathematics Magazine}, 53(4):222--224, 1980.

\bibitem{LiuS22}
F.~Liu and J.~Siemons.
\newblock Unlocking the walk matrix of a graph.
\newblock {\em J. Algebr. Comb.}, 55(3):663--690, 2022.

\bibitem{Luke69}
Y.~L. Luke.
\newblock {\em The special functions and their approximations. Vol.~1},
  volume~53 of {\em Math. Sci. Eng.}
\newblock Elsevier, Amsterdam, 1969.

\bibitem{Morgan65}
H.~L. Morgan.
\newblock The generation of a unique machine description for chemical
  structures --- a technique developed at chemical abstracts service.
\newblock {\em J. Chem. Doc.}, 5(2):107--113, 1965.

\bibitem{Mukhin}
A.~B. Mukhin.
\newblock Local limit theorems for lattice random variables.
\newblock {\em Theory Probab. Appl.}, 36(4):698--713, 1991.

\bibitem{ORourkeT16}
S.~O'Rourke and B.~Touri.
\newblock On a conjecture of {Godsil} concerning controllable random graphs.
\newblock {\em {SIAM} J. Control. Optim.}, 54(6):3347--3378, 2016.

\bibitem{PowersS82}
D.~L. Powers and M.~M. Sulaiman.
\newblock The walk partition and colorations of a graph.
\newblock {\em Linear Algebra Appl.}, 48:145--159, 1982.

\bibitem{Sane15}
S.~S. Sane.
\newblock The {S}hrikhande graph.
\newblock {\em Resonance}, 20:903–918, 2015.

\bibitem{Vaart98}
A.~W. van~der Vaart.
\newblock {\em Asymptotic statistics}, volume~3 of {\em Camb. Ser. Stat.
  Probab. Math.}
\newblock Cambridge: Cambridge Univ. Press, 1998.

\bibitem{ESA23}
O.~Verbitsky and M.~Zhukovskii.
\newblock Canonization of a random graph by two matrix-vector multiplications.
\newblock In {\em 31st Annual European Symposium on Algorithms (ESA 2023)},
  volume 274 of {\em Leibniz International Proceedings in Informatics
  (LIPIcs)}, pages 100:1--100:13. Schloss Dagstuhl -- Leibniz-Zentrum f{\"u}r
  Informatik, 2023.

\end{thebibliography}

\end{document}